\titleformat{\section}{\large\bfseries}{\thesection}{1em}{}
\titleformat{\subsection}{\normalsize\bfseries}{\thesubsection}{1em}{}
\titleformat{\subsubsection}{\normalsize\bfseries}{\thesubsubsection}{1em}{}
\def\thesection{\arabic{section}}
\def\thesubsection{\arabic{section}.\arabic{subsection}}
\def\thesubsubsection{\arabic{section}.\arabic{subsection}.\arabic{subsubsection}}
\renewcommand{\p@subsection}{}
\theoremstyle{definition}
\newtheorem{theorem}{Theorem}
\newtheorem{corollary}[theorem]{Corollary}
\newtheorem{proposition}[theorem]{Proposition}
\newtheorem{definition}[theorem]{Definition}
\newtheorem{remark}[theorem]{Remark}
\newtheorem{lemma}[theorem]{Lemma}
\newcommand{\MA}{\mathcal{A}}
\newcommand{\MB}{\mathcal{B}}
\newcommand{\ME}{\mathcal{E}}
\newcommand{\ML}{\mathcal{L}}
\newcommand{\MO}{\mathcal{O}}
\newcommand{\MQ}{\mathcal{Q}}
\newcommand{\MR}{R}
\newcommand{\MS}{\mathcal{S}}
\newcommand{\MT}{\mathcal{T}}
\newcommand{\MV}{\mathcal{V}}
\newcommand{\qcode}{\mathcal{Q}}
\newcommand{\FF}{\mathbb{F}}
\DeclareMathOperator{\supp}{supp}
\DeclareMathOperator{\im}{im}
\DeclareMathOperator*{\argmin}{arg\,min}
\newcommand{\EXTG}{X}
\newcommand{\EXT}{\mathcal{X}}
\newcommand{\fullsystem}{\qcode \leftrightarrows \EXT}
\newcommand{\system}[1]{\qcode_{#1} \leftrightarrows \EXT_{#1}}
\newcommand{\partialsystem}[1]{\qcode \leftrightarrows \EXT_{#1}}
\newcommand{\archi}{\mathbb{A}}
\newcommand{\ratio}{\lambda}
\newcommand{\arcDegree}{\alpha}
\newcommand{\BS}{\mathbb{S}}
\newcommand{\BT}{\mathbb{T}}
\newcommand{\map}{\mathbb{M}}
\newcommand{\blocks}{\mathbb{V}}
\newcommand{\bridges}{\mathbb{E}}
\newcommand{\pt}{\Pi}
\newcommand{\depth}{\Lambda}
\newcommand{\T}{T}
\newcommand{\CNOT}{\mathsf{CNOT}}
\newcommand{\deptht}{\depth_{\T}}
\newcommand{\newc}{\tilde{C}}
\newcommand{\compiledC}{C_{\text{comp}}}
\newcommand{\C}{C}
\newcommand{\zt}{Z_{\pi/8}}
\newcommand{\zx}{(Z\otimes X)_{\pi/4}}
\newcommand{\Tmagic}{T_{\text{magic}}}
\renewcommand{\thebibliography}[1]{%
  \section*{\refname}%
  \small%
  \list{\@biblabel{\alph{enumiv}}}%
    {
    \settowidth\labelwidth{\@biblabel{#1}}%
     \leftmargin\labelwidth
      \advance\leftmargin\labelsep
    }%
    \renewcommand{\makelabel}[1]{##1\hfill} %
  }
\begin{document}

\title{\Large Extractors: QLDPC Architectures for Efficient Pauli-Based Computation}

\author{Zhiyang He (Sunny)\textsuperscript{1,\textdagger}}
\author{Alexander Cowtan\textsuperscript{2}}
\author{Dominic J.~Williamson\textsuperscript{3}}
\author{Theodore J.~Yoder\textsuperscript{4}}
\renewcommand{\thefootnote}{\fnsymbol{footnote}}
\footnotetext[2]{Email: szhe@mit.edu.}
\setcounter{footnote}{0}
\renewcommand{\thefootnote}{\arabic{footnote}}
\footnotetext[1]{Department of Mathematics, Massachusetts Institute of Technology, Cambridge, MA 02139, USA}
\footnotetext[2]{Department of Computer Science, University of Oxford, Wolfson Building, Parks Road, Oxford OX1 3QD, UK}
\footnotetext[3]{IBM Quantum, IBM Almaden Research Center, San Jose, CA 95120, USA}
\footnotetext[4]{IBM Quantum, IBM T.J. Watson Research Center, Yorktown Heights, NY 10598, USA}

\date{\vspace{\baselineskip} \today}

\begin{abstract}
In pursuit of large-scale fault-tolerant quantum computation, quantum low-density parity-check (LDPC) codes have been established as promising candidates for low-overhead memory when compared to conventional approaches based on surface codes. Performing fault-tolerant logical computation on QLDPC memory, however, has been a long standing challenge in theory and in practice. In this work, we propose a new primitive, which we call an $\textit{extractor system}$, that can augment any QLDPC memory into a computational block well-suited for Pauli-based computation. In particular, any logical Pauli operator supported on the memory can be fault-tolerantly measured in one logical cycle, consisting of $O(d)$ physical syndrome measurement cycles, without rearranging qubit connectivity. We further propose a fixed-connectivity, LDPC architecture built by connecting many extractor-augmented computational (EAC) blocks with bridge systems. When combined with any user-defined source of high fidelity $\ket{T}$ states, our architecture can implement universal quantum circuits via parallel logical measurements, such that all single-block Clifford gates are compiled away. The size of an extractor on an $n$ qubit code is $\tilde{O}(n)$, where the precise overhead has immense room for practical optimizations.
\end{abstract}

\maketitle

Quantum error correction~\cite{shor1995scheme,gottesman1997stabilizer,Kitaev1997qec} has been established as a fundamental building block of large-scale, fault-tolerant quantum computation. 
For more than two decades, the surface code~\cite{bravyi1998codes,Kitaev2003anyon,dennis2002memory} has been the leading candidate for practical implementation, due to its plethora of desirable properties. 
Notably, the surface code can be implemented on a two-dimensional lattice of physical qubits with nearest-neighbor connectivity, and achieves the best asymptotic parameters under such connectivity constraints~\cite{bravyi2010tradeoffs}.
Following years of extensive research, the surface code now has fast and accurate decoders~\cite{Delfosse2021almostlineartime,Higgott2025sparseblossom,Wu2023fusion,Bausch2024learning},
practical schemes for logical computation~\cite{Moussa2016transversal,Horsman2012LatticeSurgery,Litinski2018latticesurgery,Chamberland2022universal,gidney2024magic}, 
architectural proposals~\cite{fowler2012surface,Fowler2018lattice,litinski2019game,litinski2022active}, detailed cost analysis~\cite{gidney2021factoring}, and recent milestone demonstrations of subthreshold scaling~\cite{google2023suppressing,acharya2024quantumerrorcorrectionsurface}. We refer readers to Ref.~\cite{eczoo_surface} for more references and expositions.

A critical limitation of the surface code, nonetheless, is its low encoding rate which incurs a significant space overhead in practical and theoretical fault-tolerance. 
This limitation motivated the study of more space-efficient codes, notably quantum low-density parity-check (LDPC) codes.
QLDPC codes relax the constraint on qubit connectivity from 2D nearest-neighbor to arbitrary constant-degree connections, and as a consequence they can have up to constant encoding rate and relative distance~\cite{tillich2013quantum,leverrier2015quantum,breuckmann2017hyperbolic,breuckmann2021quantum,panteleev2021degenerate,hastings2021fiber,panteleev2021quantum,breuckmann2021balanced,panteleev2022asymptotically,leverrier2022quantum,dinur2023good}.
Alongside the theoretical developments in asymptotic code constructions, recent works have proposed QLDPC codes with competitive practical parameters~\cite{panteleev2021degenerate,bravyi2024high,xu2024constant,lin2024quantum,scruby2024high,malcolm2025computing} and memory performances~\cite{bravyi2024high,xu2024constant}.
In parallel, there have been significant advancements in the design of hardware platforms with flexible qubit connectivity~\cite{bluvstein2024logical,atom-reichardt2024logical,quantinuum-paetznick2024demonstration,rodriguez2024experimental,radnaev2024universal}, and long-range connections for hardware with fixed connectivity~\cite{Bravyi2022future,psiquantum2025manufacturable,xanadu-aghaee2025scaling}.
In light of this progress, QLDPC memories have emerged as promising alternatives to surface code memories.

Performing logical computation on QLDPC memory, however, has been a long standing challenge in theory and in practice. 
On a high level, the difficulty arises from the fact that a QLDPC code encodes many logical qubits in the same block, making it generally hard to address individual logical qubits. 
As a result, existing schemes for computation on QLDPC codes often have one or more of the following limitations: they are only applicable to specific codes, have limited action on the logical space, or incur heavy overheads.
We sample the literature to illustrate this barrier. 
Prior works have identified constant-depth gates on different families of QLDPC codes~\cite{breuckmann2024foldtransversal,bravyi2024highthreshold,quintavalle2023partitioning,eberhardt2024operators,zhu2023gates,scruby2024quantum,breuckmann2024cups,hsin2024classifying,lin2024transversal,golowich2024quantum},
which perform various subsets of logical operations. 
Recent work~\cite{malcolm2025computing} constructed a family of codes with low-depth logical Clifford gates, at the expense of inverse-exponential rate.
Besides constant-depth gates, specialized code deformation methods~\cite{bombin2009quantum,vuillot2022quantum} have been used to design addressable logical action on high rate codes~\cite{breuckmann2017hyperbolic,Lavasani2018low,JochymOConnor2019faulttolerantgates,krishna2021fault} with varying degrees of flexibility and overhead cost. 
One specific code deformation technique is QLDPC code surgery, which was first introduced by Ref.~\cite{cohen2022low} as a generalization of lattice surgery~\cite{Horsman2012LatticeSurgery}, building on methods from quantum weight reduction~\cite{hastings2016weight,hastings2021weight,sabo2024weight}. Surgery is a technique that enables flexible measurement of logical operators on any QLDPC code, and its overhead has been significantly improved in the past year~\cite{cowtan2024css,cowtan2024ssip,cross2024improved, swaroop2024universal, williamson2024low,ide2024fault,zhang2024time, cowtan2025parallel}. However, outside of some promising intermediate scale examples~\cite{stein2024architectures}, the overhead of performing logical circuits (rather than individual gates) with surgery is not well-understood.
A similar yet distinct technique is homomorphic measurements~\cite{huang2023homomorphic}, which has been specialized to perform flexible and parallel measurements on homological product codes, enabling several algorithmic primitives~\cite{xu2024fast}.
For full computational proposals, many fault-tolerant schemes based on constant-rate QLDPC codes~\cite{gottesman2013fault,fawzi2020constant,tamiya2024polylog,nguyen2024quantum,he2025composable} perform computation by full-block gate teleportation~\cite{gottesman1999demonstrating,Knill2005},
where gates are teleported one at a time into the memory using distilled Clifford and magic resource states.
While the asymptotic overhead of this approach can be low~\cite{nguyen2024quantum}, it is not currently a practical approach.
In summary, the many current schemes for logical computation on QLDPC codes resemble
disjoint puzzle pieces that are difficult to integrate into a complete picture.

\begin{figure}[t]
    \centering
    \includegraphics[width = 0.62\textwidth]{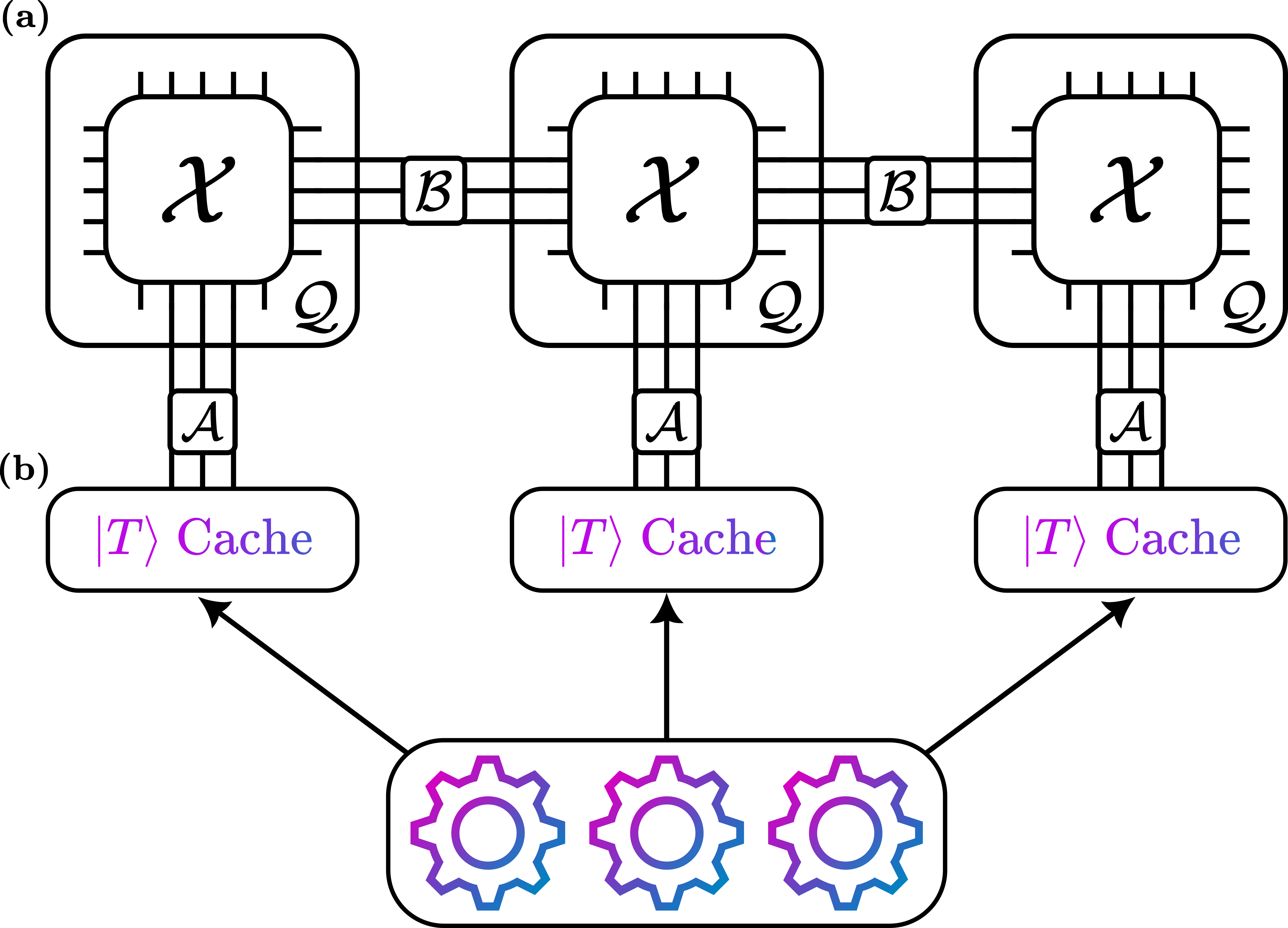}
    \caption{High level depiction of an extractor architecture paired with a magic state factory. 
    \textbf{(a)} Extractor-augmented computational (EAC) blocks $\fullsystem$ connected by bridges $\MB$. In our architecture, these EAC blocks store and operate on logical information via logical Pauli measurements.
    \textbf{(b)} A magic state factory (colored gears) supplying high-fidelity $\ket{T}$ magic states to individual EAC blocks. The output magic states may be stored in local caches, which are connected via adapters $\MA$ to the EAC blocks. If the caches are themselves high-rate QLDPC memories, they can also be equipped with extractors (not drawn) to facilitate the storage and consumption of magic states.
    }
    \label{fig:main_archi}
\end{figure}

\section{Main Results}

Here, we assemble several new and existing ideas to complete a picture of a universal fault-tolerant quantum computer based on QLDPC codes, depicted schematically in Fig.~\ref{fig:main_archi}. 
Our proposed computer consists of several computational blocks, each of which contains a QLDPC code memory $\qcode$ and a novel processing unit $\EXT$ with size proportional to the memory, up to polylog factors. 
The processing unit, which we call an extractor, enables the fault-tolerant measurement of any logical Pauli operator on the memory via QLDPC surgery. 
These computational blocks are further connected together by bridge systems to enable arbitrary joint logical Pauli measurements between memories. 
They are also connected via adapters to magic state factories to enable universal computation. 
Within and between computational blocks and magic state factories, the qubit connectivity is both fixed — unchanging as the computation occurs — and constant-degree — each qubit is only allowed to interact with a small number of other qubits, independent of the computer’s overall size.

Remarkably, such a fixed-connectivity, constant-degree quantum computer can be constructed using any QLDPC code as memory (the choice of memory is even permitted to differ between computational blocks) and any magic state factory that has fixed, constant-degree connectivity. In particular, one can increase the code distances to reduce the logical error rate, provided a subthreshold physical error rate. Because our construction is so flexible, future advances in QLDPC coding theory or magic state production can easily be incorporated to improve the computer’s performance. If the codeblocks possess logical automorphism gates, the fixed-connectivity design can be reduced in size, such that universal fault-tolerant quantum computation can be achieved in combination with those automorphism gates \cite{cross2024improved}. Likewise, the precise connectivity between computational blocks and factories may be chosen with near arbitrary flexibility to satisfy hardware constraints or to simplify the compilation of algorithms and applications. As a result of its flexibility, our construction grants immense freedom for practical optimizations.

\subsection{The Extractor Architecture}

The basis of our architecture is a new computational primitive called an \textbf{extractor} system, built with QLDPC code surgery techniques~\cite{cross2024improved,williamson2024low,swaroop2024universal}, which can augment any QLDPC memory into a computational block.
Specifically, for any $[[n,k,d]]$ QLDPC code $\qcode$, we can build an ancillary extractor system $\EXT$ of data and check qubits and connect it to $\qcode$ to form a fixed, constant-degree system that we call an extractor-augmented computational (EAC) block, denoted $\fullsystem$. 
In an EAC block, the $\qcode$ subsystem holds $k-1$ active logical qubits, the last logical qubit being reserved as an ancilla for computation.
Any logical Pauli operator supported on the $k$ logical qubits in memory can be measured fault-tolerantly in one logical cycle, which involves $O(d)$ physical syndrome measurement cycles, by activating different parts of the extractor system. 
The measurements are performed through a code-switching protocol between $\qcode$ and a QLDPC measurement code $\bar{\qcode}$ supported on $\fullsystem$.
We show that a fixed, constant-degree extractor $\EXT$ can always be built with $O(n(\log n)^3)$ physical qubits.
However, we note this loose theoretical upper bound is unlikely to capture the real overhead one would obtain through practical optimizations on specific code families, which we expect to be a small multiplicative constant. This expectation is supported by existing optimizations of QLDPC surgery on small codes~\cite{cowtan2024ssip,cross2024improved,williamson2024low,ide2024fault}.

To compute on several EAC blocks at once, we use an existing primitive, the bridge/adapter ancilla system developed progressively in Refs.~\cite{cross2024improved,swaroop2024universal}. For two EAC blocks of distance $d$, a bridge/adapter is a fixed, constant-degree ancilla system of $d$ qubits and $d-1$ checks that connect the extractors of the blocks together into a larger extractor.
As a result, the joined EAC blocks behave as a larger EAC block: any logical Pauli operator supported on the $2k$ logical qubits can be measured fault-tolerantly in one logical cycle.

The ability to measure any logical Pauli operator makes EAC blocks an ideal fit for Pauli-based computation (PBC)~\cite{bravyi2016trading}, which compiles an arbitrary Clifford plus $\T$ circuit into a circuit composed of only two primitives: Pauli measurements and $\ket{T}$ magic state preparation.
Consequently, when supplied with high fidelity magic states an EAC block can perform universal computation on the full logical space of $\qcode$. The same bridge/adapter construction used to connect a pair of EAC blocks together can also be used to connect an EAC block to a magic state factory.
This extends our Pauli measurements onto ancillary logical magic states so that they can be consumed for universal computation.

Two important remarks are in order. First, the two EAC blocks $\system{1},\system{2}$ that we connect via a bridge/adapter can be totally different. In particular, $\qcode_1,\qcode_2$ can be arbitrary, different QLDPC codes. 
For this reason, when the two EAC blocks are based on the same code family we call the connecting system a \textbf{bridge} $\mathcal{B}$, and when they are based on different codes, or when we are connecting an EAC to a source of magic states, we call the connecting system an \textbf{adapter} $\mathcal{A}$.
This versatility grants us great flexibility in designing an architecture. In particular, we can connect EAC blocks to any user-defined magic state factory, such as those proposed in Refs.~\cite{bravyi2005magic,Meier2013magic,Jones2013multilevel,OGorman2017magic,litinski2019magic,Gidney2019efficientmagicstate,gidney2024magic,Bravyi2012magic,krishna2018magic,wills2024constant,nguyen2024quantum}, to realize universal fault-tolerant computing. 

We further note that the bridge/adapter can be applied repeatedly to connect many EAC blocks together.  Specifically, an extractor architecture $\archi$ is defined by a graph $\map = (\blocks,\bridges)$, which we call the \textbf{block map}. Every vertex in $\blocks$ corresponds to an EAC block, and every edge in $\bridges$ corresponds to a bridge or adapter system connecting two EAC blocks.

\subsection{Compilation of Universal Quantum Circuits}
\label{sec:main_compilation}

For computation, we allocate one logical qubit per EAC block to serve as an ancilla. Let $B = |\blocks|$ be the number of EAC blocks, then our logical workspace has $K := B(k-1)$ qubits. 
To fault-tolerantly execute a Clifford plus $\T$ circuit $C$ on $K$ qubits, we need to first partition the qubits into the EAC blocks. 
Given a partition $\pt$, we say that a $\CNOT$ gate in $C$ is \textit{in-block} if both of its target qubits belong to the same EAC block, and \textit{cross-block} if they belong to different EAC blocks.
Since the block map $\map$ is user-defined, we leave the choice and optimization of $\pt$ and $C$ to the user as well and simply ask that every cross-block $\CNOT$ gate in $C$ acts on two EAC blocks that are connected by a bridge (equivalently, an edge in $\bridges$).
Note that as long as $\map$ is a connected graph, any circuit $C$ can be compiled (with $\mathsf{SWAP}$ gates, for instance) on any partition $\pt$. 

Our compilation scheme owes its inspiration to the work of Litinski~\cite{litinski2019game}.
We outline the procedure and refer readers to Section~\ref{sec:compilation} and Figure~\ref{fig:compilation} for more details.
The gates in $C$ can be grouped into three types: $\T$ gates, cross-block $\CNOT$ gates, and in-block Clifford and Pauli gates. 
Every $\T$ gate is a $\zt$ rotation, while every $\CNOT$ gate can be implemented as one $\zx$ rotation, followed by two single-qubit Clifford gates. 
The first step of our compilation is to translate all $\T$ gates and cross-block $\CNOT$ gates into their respective Pauli rotations. After this step, the circuit is composed of two types of operations: single-qubit $\pi/8$ and two-qubit $\pi/4$ Pauli rotations (with respect to $\pt$), and in-block Clifford and Pauli gates.

Since Clifford operators permute the Pauli group, Pauli rotations $P_{\theta}$ can be exchanged with a Clifford operator $U$ up to conjugation into another Pauli rotation $(U^{\dagger}PU)_{\theta}$. 
The remaining Clifford gates in $C$ are all in-block, which means if we exchange them with the Pauli rotations, the conjugated Pauli rotations still have the same block support. 
Therefore, we can compile into the following form: single-block $\pi/8$ and two-block $\pi/4$ Pauli rotations, followed by in-block Clifford operators. 
We now make the simplifying assumption that $C$ ends with a round of standard computational basis measurement on all qubits. Then all in-block Clifford operators can be absorbed into the last round of measurements, turning a $Z$ measurement on a single qubit into a single-block Pauli measurement. 
Finally, every $\pi/4$ and $\pi/8$ Pauli rotation can be implemented by two Pauli measurements, followed by controlled Clifford or Pauli corrections.
This finishes our compilation of $C$ into the resulting circuit $\compiledC$.

We now discuss the time overhead of our execution of $\compiledC$. 
A critical feature of an EAC block is that any logical Pauli operator, regardless of its physical and logical support, can be measured in one logical cycle.
Since the circuit is now composed of single- and two-block measurements, its execution can be highly parallelized. 
In particular, any collection of Pauli measurements supported on disjoint blocks can be measured in parallel, as we assumed that any two-block measurements will be supported on EAC blocks connected by bridges. 
We therefore need to solve a scheduling problem on the circuit $\compiledC$.

For the input Clifford plus $\T$ circuit $C$, let $\newc$ denote the circuit we obtain by removing all in-block Clifford gates via the process described above.\footnote{Note that $\newc$ is very different from the compiled circuit $\compiledC$.}
Let $\depth$ be the depth of $\newc$, we say that $C$ has \textit{reduced depth} $\depth$.
We show that 
\begin{equation}\label{eq:main_time}
    \text{Depth of $\compiledC$} < 4k\cdot \depth + k.
\end{equation}
This bounds the number of logical cycles needed to execute all measurements in $\compiledC$. 
Many of these measurements will be supported on magic states, 
and the time cost of supplying these magic states to the EAC blocks needs to be accounted for on specific instantiations of this architecture. 
We discuss several cases in Section~\ref{sec:compilation} of Methods.  
We also note that the above equation is a loose and simplified upper bound on the execution depth, as it does not account for the choice and optimization of $\pt$ and $C$.
We leave more accurate resource estimation of specific algorithms, such as factoring~\cite{Shor}, with optimized choices of $C, \qcode, \map$, $\pt$ and factory, to future works.

\begin{figure}[t]
    \centering
    \includegraphics[width = 1\textwidth]{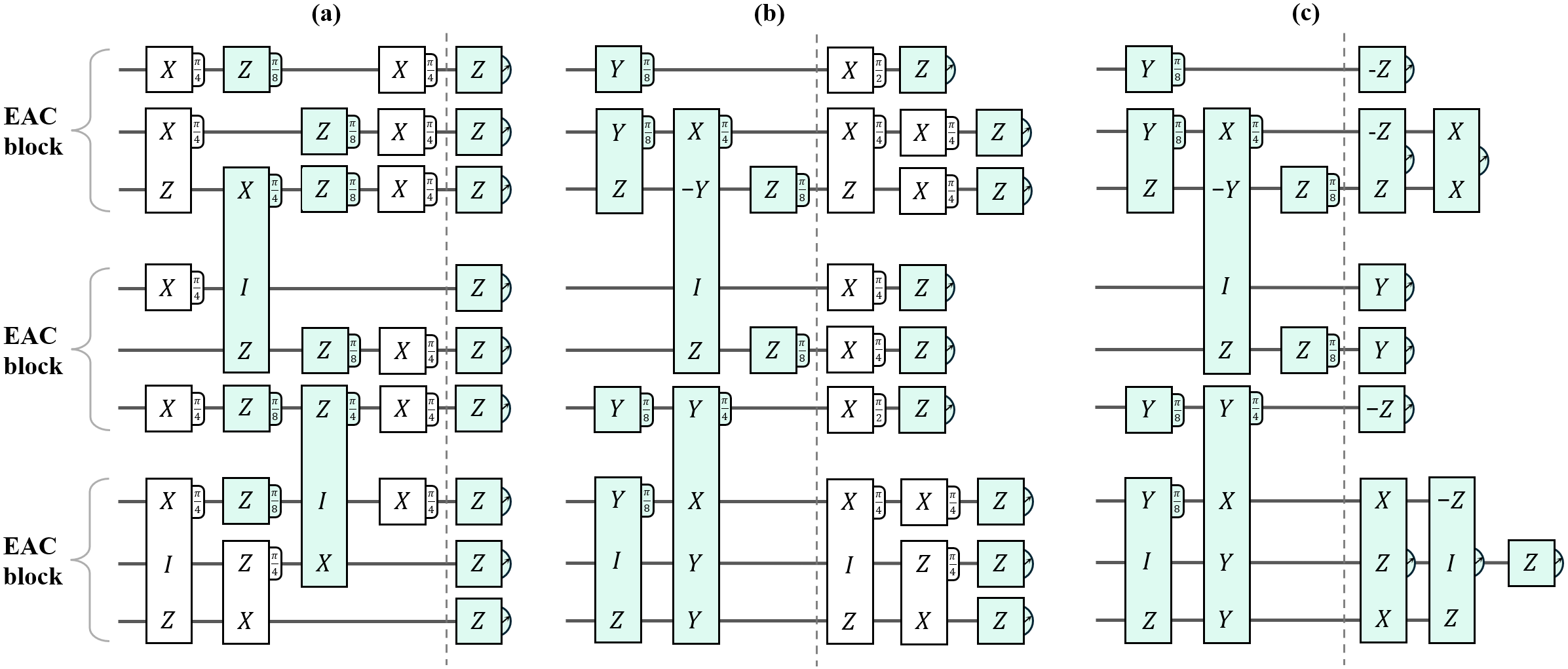}
    \caption{
    Example of circuit compilation for our extractor architecture. 
    The colored components are the ones that we compile and execute on EAC blocks, and the uncolored ones are compiled away.
    \textbf{(a)} A circuit composed of Pauli $\pi/4$ rotations (which are Clifford gates), $\zt$ rotations, and $Z$ measurements at the end. Qubits are grouped into EAC blocks of size 3. 
    \textbf{(b)} Exchanging all $\pi/8$ rotations and cross-block $\pi/4$ rotations to the beginning of the circuit, and all in-block Clifford gates to the end.
    \textbf{(c)} Absorbing all in-block Clifford gates into the final measurements.
    The remaining Pauli rotations will be implemented by Pauli measurements, as depicted in Figure~\ref{fig:compilation} of Methods.
    }
    \label{fig:main_circuit}
\end{figure}

\section{Discussions and Open Questions}\label{sec:main_discussions}

In this work, we design the extractor primitive and propose a highly customizable, low-overhead QLDPC architecture which, when paired with any user-defined source of magic states, can execute any universal quantum circuit supported on its workspace. 
The computational capacity of EAC blocks enables us to compile all in-block Clifford gates away, therefore the execution time overhead is parameterized by magic state throughput and the reduced depth of $C$.
This theoretical blueprint opens a variety of practical and theoretical questions for consideration. 
We survey and discuss a few of them here. 

The most important task ahead is to derive detailed resource estimation for running specific algorithms with optimized choices of circuit $\C$, code $\qcode$, extractors $\EXT$, architectural layout (block map $\map$ and magic state factories), and compilation. 
To this end, there are several open directions.

\paragraph{Optimization of extractor systems.} In this work, we give a construction of extractors in $\tilde{O}(n)$ qubits for any code $\qcode$. However, significant optimizations can be performed on specific codes, as illustrated by the current design on the $[[144,12,12]]$ bivariate bicycle code~\cite{cross2024improved,yoder2016universal}.
Therefore, it is important to design concrete extractor systems for specific QLDPC codes (with promising practical parameters) and optimize for minimal space, connectivity and time overhead (syndrome extraction circuits and decoders).
For codes with constant depth logical Clifford gates, the overhead may be further reduced by constructing a partial extractor, as we discuss in Section~\ref{sec:partial_extractors} of Methods.
This is a vast open area for future exploration. We discuss many practical considerations regarding extractor constructions in Section~\ref{sec:toolkit_practical} and~\ref{sec:extractor_practical} of Methods.

\paragraph{Architectural choices.}
Given optimized constructions of EAC blocks, there are many choices to be made in architectural design. 
Specifically, the choice of $\map$ and magic state factories is highly hardware and application dependent. 
For hardwares with flexible connectivity, such as neutral atom devices, different algorithms and applications can be implemented with different extractor architectures.
Note that the architecture we consider in this paper is uniform, in the sense that all EAC blocks are the same. 
An interesting variant would be a hybrid architecture, where EAC blocks based on different QLDPC codes are connected via adapters. 
Such a hybrid design enables us to optimize different parts of an architecture for different applications or circuit components.

\paragraph{Compilation and time overhead.}
With a fixed architecture and algorithm in mind, the choice of circuit $\C$ and partition $\pt$ is pivotal to the implementation efficiency.
The compilation and scheduling we have analyzed yield loose upper bounds that leave vast room for practical optimizations. 
In this work, we made the simplifying assumption that every cross-block $\CNOT$ gate is supported on two EAC blocks connected by a bridge system. 
This allows us to partition the block map $\map$ into edges for parallel measurements. 
In general, we can partition $\map$ into vertex-disjoint connected subgraphs, since extractors and bridges allow us to perform a joint logical Pauli measurement across all the EAC blocks in a connected subgraph in one logical cycle. 
This grants us more flexibility in measurements and therefore our choice of $\C$.
Note that by grouping EAC blocks together or simply choosing larger EAC blocks, we can compile more Clifford gates away at the expense of less parallel magic state teleportation. 

At the moment, an EAC block measures one logical Pauli operator per logical cycle.
It is interesting to consider whether one could design extractor systems that enable simultaneous measurements of commuting Pauli operators, similar to the protocols in Refs.~\cite{zhang2024time,cowtan2025parallel}. 

Furthermore, the execution time of a fault-tolerant quantum circuit is heavily dependent on the efficiency of the supporting classical decoding algorithms. 
To decode an EAC block, it would be interesting to consider a modular approach that decodes the code system and extractor system separately, such as the decoder implemented in Ref.~\cite{cross2024improved}.
One could further explore co-designing the code with extractor systems and decoders, with the aim of maximizing encoding rate and measurement efficiency.
For this purpose, belief propagation-based~\cite{panteleev2021degenerate,gong2024toward,wolanski2024ambiguity,hillmann2025localized,muller2025improved} and neural network-based~\cite{liu2019neural,maan2025machine,ott2025decision,blue2025machine} decoders are promising candidates due to their adaptability to different codes and settings.

Throughout this work, we have taken one logical cycle to be $O(d)$ physical syndrome measurement cycles. 
It has been shown that certain families of QLDPC codes can be single-shot decoded~\cite{fawzi2020constant,gu2024single,nguyen2024quantum,bombin2015gauge,Kubica2022single}, which means a logical cycle on these QLDPC memories only uses $O(1)$ syndrome measurement cycles.
Ref.~\cite{hillmann2024single} has shown that lattice surgery can be single-shot in higher dimensional topological codes. 
Can new insights into the theory of code surgery enable broader designs of fast surgery systems? 
The recent works of Refs.~\cite{baspin2025fast,cowtan2025fast} have answered this question in the affirmative by deriving sufficient conditions for surgery operations to be performed in (amortized) constant time overhead. 
An important open direction is to incorporate these ideas into the design of extractor systems.

Recently, Ref.~\cite{yoder2025tour} introduced the \textbf{bicycle architecture}, which can be modeled as an extractor architecture built upon the bivariate bicycle codes of Ref.~\cite{bravyi2024highthreshold}. 
Our works both originated from the work of Ref.~\cite{cross2024improved} and were developed concurrently
with complementary focus: the bicycle architecture is a practical blueprint with extensive optimizations, while the extractor architecture is a theoretical model with vast generality.
Many open directions discussed above were investigated in Ref.~\cite{yoder2025tour} and incorporated into the design of the bicycle architecture.
\newline

To conclude, extractor systems and architectures present a highly flexible, optimizable and scalable proposal for universal fault-tolerant quantum computers based on QLDPC codes.
Over the past two decades, extensive research has established surface code architectures, such as those based on lattice surgery, as the leading proposal to realize fault-tolerant quantum computation in practice.
Could extractor architectures, if optimized and built, compete with surface code architectures
in the large-scale quantum computation regime?
The road ahead presents many exciting problems and opportunities for practical study.

\section*{Acknowledgements}
We thank Patrick Rall for valuable insights and feedback on an earlier version of this paper.
We thank Emily Pritchett, Andrew Cross, Sergey Bravyi and Robert K\"{o}nig for insightful discussions and feedback. 
Z.H. thanks Peter Shor and Anand Natarajan for many inspiring discussions, and thanks Anna Brandenberger, Byron Chin and Xinyu Tan for helpful discussions on the decongestion lemma and the figures.
Z.H. is supported by the MIT Department of Mathematics and the NSF Graduate Research Fellowship
Program under Grant No. 2141064.
D.W.~is currently on leave from The University of Sydney and is employed by PsiQuantum. This work was completed after A.C. began employment at Xanadu.

\newpage

\begin{center}
\textbf{\Large Methods of ``Extractors: QLDPC Architectures for \\
Efficient Pauli-Based Computation''}

\end{center}

\tableofcontents

\section{QLDPC Surgery with Auxiliary Graphs}\label{sec:surgery}

\subsection{Why Logical Measurements?}

QLDPC surgery generally, including the specific extractor system we develop here, enables the fault-tolerant measurement of logical operators in a QLDPC code. For our purposes, we only consider measuring logical Pauli operators, albeit arbitrary ones. Why is this useful?

Unsurprisingly, arbitrary logical Pauli measurements allow reading from a quantum memory. Indeed, we can be selective and measure a single logical qubit of interest, rather than measuring all the logical qubits of a codeblock together as is done via destructive, transversal single-qubit measurement of Calderbank-Shor-Steane (CSS) codes~\cite{calderbank1996good,steane1996multiple}. Relatedly, we can initialize select logical qubits in Pauli eigenstates to be used as ancillas in computation.

More surprising is the fact that arbitrary Pauli measurements along with specially prepared ``magic” resource states can be used for universal quantum computation, in a framework known as Pauli-based computation (PBC)~\cite{bravyi2016trading}. This is similar to the earlier idea of a one-way quantum computer, or measurement-based computation \cite{raussendorf2001one}. The cost of PBC over unitary-based computation is that additional scratch space — ancilla qubits, which we note may be reset and reused, are allocated so that the measurements can avoid collapsing the wavefunction of computational qubits.

So why should we perform fault-tolerant computation using measurements?
The primary reason is that through the works on QLDPC surgery, logical measurement has became a flexible and low-overhead computational primitive.
This primitive can be easily combined with other computational proposals, such as transversal gates, to reach universal computation under the PBC framework. 
In contrast, in the conventional unitary gates framework, gates that cannot be compiled into low-overhead operations need to be implemented by full-block gate teleportation, which incur a heavy overhead.

Moreover, performing logical Pauli measurements is, in a sense, a very natural operation on quantum codes. Throughout the lifetime of a quantum memory, we are measuring stabilizer checks to collect syndrome information for the purpose of correcting errors. 
QLDPC surgery postulates simply changing the pattern of those check measurements, and therefore the quantum code, over a period of time and involving a slightly larger set of qubits so as to extract not just the syndrome but also the eigenvalue of a logical Pauli operator~\cite{hastings2021weight,cohen2022low}. Successively altering the measurement pattern measures different logical operators over the course of a computation. 

For the remainder of Section~\ref{sec:surgery}, we review prior and recent works on QLDPC surgery and lay the foundation for our constructions.

\subsection{Prior Works}\label{sec:prior_works}

The techniques in previous works on QLDPC surgery can be described in a unifying framework, which we summarize here. 
Let $L$ be the support of a logical Pauli operator $\ML$, which could be a product of Pauli operators on multiple code blocks.
We construct a \textbf{measurement hypergraph} $H = (V, \ME)$, with vertices $V$ and hyperedges $\ME$, and an injective function $f: L\rightarrow V$, which we call the \textbf{port function}. 
We place a qubit on every hyperedge $h\in \ME$ and design two types of stabilizer checks: \textbf{vertex checks} and \textbf{cycle checks}. 
As their names suggest, each vertex $v\in V$ is associated to a vertex check $A_v$, which is supported on all hyperedges $h\ni v$, and each cycle\footnote{A cycle in a hypergraph is a collection of edges that contains every vertex an even number of times.} $C$ is associated to a cycle check $B_C$, which is supported on all hyperedges $h\in C$. 
We then connect this ancilla system to the code $\qcode$ through the port function:
for every qubit $q\in L$, we further extend the vertex check $A_{f(q)}$ to act on $q$. 
This completes the description of the merged code $\bar{\qcode}$, in which the operator $\ML$ become a product of constant-weight stabilizer checks. 
See Definition~\ref{def:graph_and_code} for full details.

In Ref.~\cite{cohen2022low}, Cohen, Kim, Bartlett and Brown first considered the case where $\qcode$ is CSS and $\ML$ is irreducible.\footnote{A logical operator $\ML$ supported on a set of qubits $L$ is irreducible if the restriction of $\ML$ to any proper subset of qubits in $L$ is not a logical operator or stabilizer.\label{footnote:irreducible}}
They used the \textbf{induced Tanner graph} $T = (V, E)$ of $\ML$, which is a hypergraph with $V$ isomorphic to $L$ and, in the case where $\ML$ is an $X$ operator, all $Z$ checks overlapping with $\ML$ as hyperedges (there is an analogous construction when $\ML$ is a $Z$ operator). 
The authors took a copy $T_1 = (V_1, E_1)$ of $T$ and \textbf{thickened} $T_1$ with a line graph $J_d$ of length $d$ (see Definition~\ref{def:thicken}). 
They then used the 1-to-1 port function $f:V\rightarrow V_1$ and measured all vertex checks $A_v$ as $X$-checks, and a selected set of cycle checks $B_C$ as $Z$-checks. 
From an equivalent perspective (see Remark~\ref{rmk:perspectives}), the checks defined in Ref.~\cite{cohen2022low} correspond to the checks of a hypergraph product code defined from $T_1$ and $J_d$, extended by the port function $f$ onto $\qcode$. 
To measure a product of logical operators on multiple logical qubits, the authors connected (and in some cases merged) individual measurement hypergraphs by adding more vertices and cycles.\footnote{The added checks may be non-CSS.} The code switching protocol between the code $\qcode$ and the merged code $\bar{\qcode}$ is then fault-tolerant. We henceforth refer to the construction in Ref.~\cite{cohen2022low} as the \textit{CKBB scheme}.

Since the induced Tanner graph is thickened by $J_d$, the measurement graph in the CKBB scheme has a daunting size of $O(|L|d)$.
Consequently, measuring a weight $d$ logical operator uses $O(d^2)$ ancilla qubits.  
Most later works on QLDPC surgery, including the recent advances, are motivated by reducing this space overhead.
Refs.~\cite{cowtan2024css,cowtan2024ssip} considered using a shorter line graph for thickening in the CKBB scheme, and constructing a hypergraph directly between codeblocks.
While theoretically this approach does not guarantee fault-tolerance, Ref.~\cite{cowtan2024ssip} demonstrated numerically that on various small-to-medium QLDPC codes, the merged code distance can be preserved with a smaller CKBB measurement graph.
We refer readers to Ref.~\cite[App.~D]{cowtan2024ssip} for a list of improvements in space overheads.

Shortly after Ref.~\cite{cowtan2024ssip}, the independent work Ref.~\cite{cross2024improved} presented the \textit{gauge-fixed surgery scheme}.
Ref.~\cite{cross2024improved} observed that if $T$ is expanding and we measure \textit{all} cycle checks $B_C$, then we can thicken $T_1$ with a much shorter line graph and maintain fault-tolerance of the overall protocol. 
These observations led to a qualitative improvement in the space overhead of QLDPC surgery, which in the case of the $[[144, 12, 12]]$ bivariate bicycle code~\cite{bravyi2024highthreshold} reduced the size of the ancilla system from $1380$ qubits to $103$ qubits~\cite{cross2024improved}. 
A caveat, however, is that some cycle checks could have large weight, and the scheme therefore lacks guarantee of being LDPC.
Similar to the CKBB scheme, the gauge-fixed surgery scheme assumed that $\qcode$ is CSS and $\ML$ is irreducible. 
To measure product of logical Paulis, the authors observed that the methods from Ref.~\cite{cohen2022low} are no longer fault-tolerant when the path graph used in thickening has length less than $d$. 
Alternatively, they proposed to add a \textbf{bridge} system, which under this framework corresponds to a set of $d$ edges, to connect individual measurement hypergraphs. 
This addition enables us to perform product measurements on logical qubits in the same code block or different code blocks.
Moreover, the measured logical qubits may belong to different QLDPC code families, in which case the bridge system serves as an adapter of codes.
One notable caveat, however, is that adding a bridge of edges creates new cycles in the measurement hypergraphs, which are not guaranteed to be low weight. 
This lack of an LDPC guarantee was later resolved in Ref.~\cite{swaroop2024universal} using a novel SkipTree algorithm.

In the \textit{gauging measurement scheme}~\cite{williamson2024low} and independently in the \textit{homological measurement scheme}~\cite{ide2024fault}, the authors proposed to replace the induced Tanner graph of the measured operator by a customized expander graph\footnote{In general this customized graph can be a hypergraph, but a simple graph is easier to work with.} (with a customized port function). 
This is a qualitative change for two reasons. 
First, we no longer need to rely on $T$ having expansion and can instead inject expansion via the measurement graph. 
Moreover, some assumptions made in previous works, namely that $\qcode$ is CSS and $\ML$ is irreducible, can now be relaxed. 
As a result, product measurements can be handled in the same way as single qubit logical measurements.\footnote{While this procedure applies directly to logicals on disjoint code blocks, the bridge system is still useful for its low overhead and modularity.} 
The works then measured all cycle checks assisted by the technique of \textbf{cellulation} (Definition~\ref{def:cellulation}). 
The scheme in Ref.~\cite{williamson2024low} further applied the techniques of \textbf{decongestion} (Lemma~\ref{lem:decongestion}) after thickening the customized expander graph by a path graph of length $O((\log |L|)^3)$, to guarantee that the resulting merged code is LDPC, whereas the schemes in Refs.~\cite{cross2024improved, ide2024fault} lack this guarantee in worst case. 
Consequently, the space overhead of measuring an operator $\ML$ is reduced to $O(|L|(\log |L|)^3)$ with an LDPC guarantee in the worst case.

Ref.~\cite{swaroop2024universal}, building upon the scheme of Ref.~\cite{williamson2024low}, 
showed that a bridge/adapter system can always be constructed between measurement graphs\footnote{If the measurement graph is a hypergraph, the techniques of Ref.~\cite{swaroop2024universal} no longer work and we once again lose an LDPC guarantee. Nonetheless, the measurement graphs constructed by the main procedure in Ref.~\cite{williamson2024low} are always simple graphs, so the bridge/adapter system can always be chosen to be LDPC.} so that the newly created cycles admit a low weight basis. 
Consequently, the \textbf{adapter} construction becomes truly universal in the sense that it can connect two (or more) code blocks from arbitrary code families together into one LDPC and fault-tolerant architecture. 
Such a diversified architecture could take advantage of different codes for different aspects of computation.
We present such architectures in Section~\ref{sec:architectures}.
Besides the bridge/adapter system, Ref.~\cite{swaroop2024universal} further improved many ideas from Ref.~\cite{cross2024improved} and Ref.~\cite{williamson2024low}, including relative expansion (Definition~\ref{def:relative_expansion}), port function and graph desiderata (Theorem~\ref{thm:graph_desiderata}). 

The work of Ref.~\cite{zhang2024time} studied the problem of measuring a collection of $Z$ (or $X$) logical Pauli product operators simultaneously. 
They proposed several techniques, including branching and devised sticking, which when combined with the CKBB scheme enables simultaneous measurements at various overheads.
They further extend their scheme to measure arbitrary commuting subgroup of Pauli operators using the technique of twist-free lattice surgery~\cite{Chamberland2022twist}, at the cost of potentially expensive preparation of $\ket{Y}$ states. 
Branching, in our measurement hypergraph framework, is equivalent to thickening an induced Tanner graph with an open segment, which is the graph with one vertex and one edge attached to the vertex. 
Attaching such a \textit{branching sticker} to a logical operator $\ML$ creates new representatives of $\ML$ on the ancilla qubits, which makes it a useful primitive for surgery.

The work of Ref.~\cite{cowtan2025parallel} improved the technique of branching,
and combined it with gauging measurements instead of the CKBB scheme. This led to a significant reduction in space overhead for simultaneous measurements, and a qualitative improvement in capacity -- the scheme in Ref.~\cite{cowtan2025parallel} can measure Pauli products with $Y$ terms in parallel.
As a result, when applying twist-free lattice surgery to measure arbitrary commuting subgroup of Pauli operators, the $\ket{Y}$ states needed are now much cheaper to prepare. 

We emphasize that the papers discussed have many additional contributions not captured by our summary above.
Moreover, parallel to the developments in code surgery, another technique to perform logical measurements called 
homomorphic measurements~\cite{huang2023homomorphic} has been developed. This measures select logical operators in a QLDPC code by creating a logical ancilla, encoded generally in a different but related code, which is then coupled to the original code via transversal gates and measured out. Importantly, it is not always known how to create a suitable logical ancilla state, though it can be done on topological codes~\cite{huang2023homomorphic} and notably for some measurements on homological product codes~\cite{xu2024fast}. QLDPC surgery likely offers another avenue for preparing the required ancilla states.

\begin{remark}[Equivalent perspectives on QLDPC surgery]\label{rmk:perspectives}
    The original, and most common, view on QLDPC surgery is through Tanner graphs, whereby data qubits and stabilizer checks on a code are assigned vertices in a bipartite graph. In this picture, surgery operations can be described by adding vertices and edges to the graph~\cite{cohen2022low,cross2024improved}.
    Previous works have studied surgery on QLDPC codes which are CSS through the lens of homology, using the bijection between qubit CSS codes and chain complexes over $\mathbb{F}_2$. This was the view taken in e.g. Refs.~\cite{cowtan2024css, cowtan2024ssip, ide2024fault}. 
    While the two perspectives are equivalent in the CSS surgery cases, a helpful property of chain complexes is that they come with well-defined chain maps -- maps between codes -- which allow for certain convenient proofs \cite{cowtan2024css,ide2024fault} concerning, for example, how logical operators relate between the original code and the deformed code. On the other hand, Tanner graphs can be visualized easily, and provide simple descriptions for the non-CSS surgery cases.
\end{remark}

\subsection{Surgery Toolkit: Logical Measurements}\label{sec:toolkit_measurements}

Following these recent developments, in the rest of this section we package a collection of definitions and results into a toolkit for the design and analysis of QLDPC surgery schemes. 
We emphasize that this toolkit
in no way subsumes the many perspectives and techniques developed in prior works. 
Nonetheless, as discussed in Section~\ref{sec:prior_works}, many key results can be described using these definitions.
In particular, this toolkit establishes the foundation of the analysis in Refs.~\cite{cross2024improved, williamson2024low, swaroop2024universal} and our main results in this paper.
In future works, we plan to build upon this toolkit and expand it with additional existing and new techniques.

Let $\qcode$ be a QLDPC code with parameters $[[n, k, d]]$ and stabilizer checks $\MS$.\footnote{Here $\MS$ denotes the set of stabilizer checks to be measured on the code, not the entire stabilizer group.}
Let $Q$ denote the set of qubits of $\qcode$.
Let $\ML$ be a Pauli logical operator of $\qcode$ with support $L$.
Here, we make no assumption on $\ML$: it can be a product of logical Pauli $X, Y, Z$ operators on any representatives of any logical qubits of $\qcode$.
We use $\ML_q\in \{I, X, Y, Z\}$ to denote the action of $\ML$ on qubit $q\in Q$.
For a qubit $q$, let $Z(q)$ denote the Pauli operator that acts on $q$ by $Z$ and acts on all other qubits by identity.
We extend this notation to $X, Y, I$ and to sets of qubits.
In our notation, $\ML = \prod_{q\in Q} \ML_q(q)$. 

\begin{definition}[Measurement Graphs and Codes]\label{def:graph_and_code}
    Consider a graph $G = (V, E)$ and an injective function $f: L\rightarrow V$. 
    We call $G$ the \textbf{measurement graph} and $f$ the \textbf{port function}.
    We say that $P = \im(f)$ is the \textbf{port}.
    Create an ancilla qubit for every $e\in E$. For notational convenience, we use $e$ and $E$ to denote both the edge(s) and the ancilla qubit(s). 
    We define a stabilizer code $\bar{\qcode}$ supported on $Q\cup E$ with the following stabilizers $\bar{\MS}$. 
    \begin{enumerate}[itemsep = 0pt]
        \item For vertices $v\in V$, 
        \begin{enumerate}[itemsep = 0pt]
            \item if $v\notin P$, add the stabilizer $A_{v} = \prod_{e\ni v}Z(e)$ to $\bar{\MS}$. \label{stabilizers:non-port-vtxs}

            \item If $v = f(q)$ for $q\in Q$, add the stabilizer $A_{v} = \ML_{q}(q)\prod_{e\ni v}Z(e)$ to $\bar{\MS}$. \label{stabilizers:port-vtxs}
        \end{enumerate}
        We refer to these checks as the \textbf{vertex checks}.

        \item Let $\MR$ be a cycle basis (Definition~\ref{def:cycle_basis}) of $G$. For every cycle $C\in \MR$, add stabilizer $B_C = \prod_{e\in C}X(e)$ to $\bar{\MS}$. We refer to these checks as the \textbf{cycle checks}. \label{stabilizers:cycles}

        \item For every check $S\in \MS$ of $\MQ$, let $K(S, \ML)$ denote the set of qubits $q\in Q$ such that $S_q$ and $\ML_q$ anti-commutes. Note that $|K(S, \ML)|$ must be even. 
        \begin{enumerate}[itemsep = 0pt]
            \item If $K(S, \ML) = \varnothing$, add $S$ to $\bar{\MS}$. \label{stabilizers:code_checks}
            
            \item Otherwise, let $\mu(S, \ML)$ be a path matching (Definition~\ref{def:path_matching}) of $f(K(S, \ML))$\footnote{For a set $K$ of qubits, we define $f(K) = \{f(q): q\in K\}$.} in $G$. Add the stabilizer $\bar{S} = S\prod_{e\in \mu(S, \ML)}X(e)$ to $\bar{\MS}$. \label{stab:modified_code_checks}
            
        \end{enumerate}
    \end{enumerate}
    For clarity, we sometimes denote $\bar{\qcode}$ as $\qcode(\ML, G, f)$ and $\bar{\MS}$ as $\MS(\ML, G, f)$. 
\end{definition}

\begin{remark}
    We note that equivalently, we could define the vertex and cycle checks to act on edge qubits by $X$ and $Z$, respectively. The stabilizers defined in Step~\ref{stab:modified_code_checks} should then act on edge qubits by $Z$. All following results hold with respect to either basis choice. 
\end{remark}

\begin{definition}[Cycle Basis and Congestion]\label{def:cycle_basis}
    For a (hyper)graph $G = (V, E)$, consider its incidence matrix $M_G\in \FF_2^{|V|\times |E|}$ defined by 
    \begin{equation}
        M_G[v, e] = \begin{cases}
            1 &\text{ if $v\in e$,} \\
            0 &\text{ otherwise}. 
        \end{cases}
    \end{equation}
    Then the kernel of $M_G$ is precisely the space of cycles in $G$.  
    A basis $\MR$ of $\ker(M_G)$ is called a \textbf{cycle basis} of $G$. 
    For such a basis, for every (hyper)edge $e\in E$, let $\rho_\MR(e)$ denote the number of times $e$ is used by cycles in $\MR$. 
    Let $\rho = \max_{e\in E} \rho_\MR(e)$, we say that $\MR$ has \textbf{congestion} $\rho$, or is a $\rho$-basis~\cite{reich2014cycle}.
\end{definition}

\begin{definition}[Path Matching]\label{def:path_matching}
    For a (hyper)graph $G = (V, E)$ and a set of vertices $K\subseteq V$, a path matching $\mu$ of $K$ is a collection of (hyper)edges in $G$ that visits every vertex in $K$ an odd number of times, and every vertex in $V\setminus K$ an even number of times.
\end{definition}

\begin{remark}
    The present formulation of Definition~\ref{def:graph_and_code} mostly follows the formulation of gauging measurement in Refs.~\cite{williamson2024low} and~\cite{swaroop2024universal}. 
    The independent work Ref.~\cite{ide2024fault} formulated a similar scheme called homological measurement. 
    As the ideas are developed progressively through previous works, we simply refer to them as measurement graphs and codes. 
\end{remark}

\begin{figure}
    \centering
    \includegraphics[width = 0.52\textwidth]{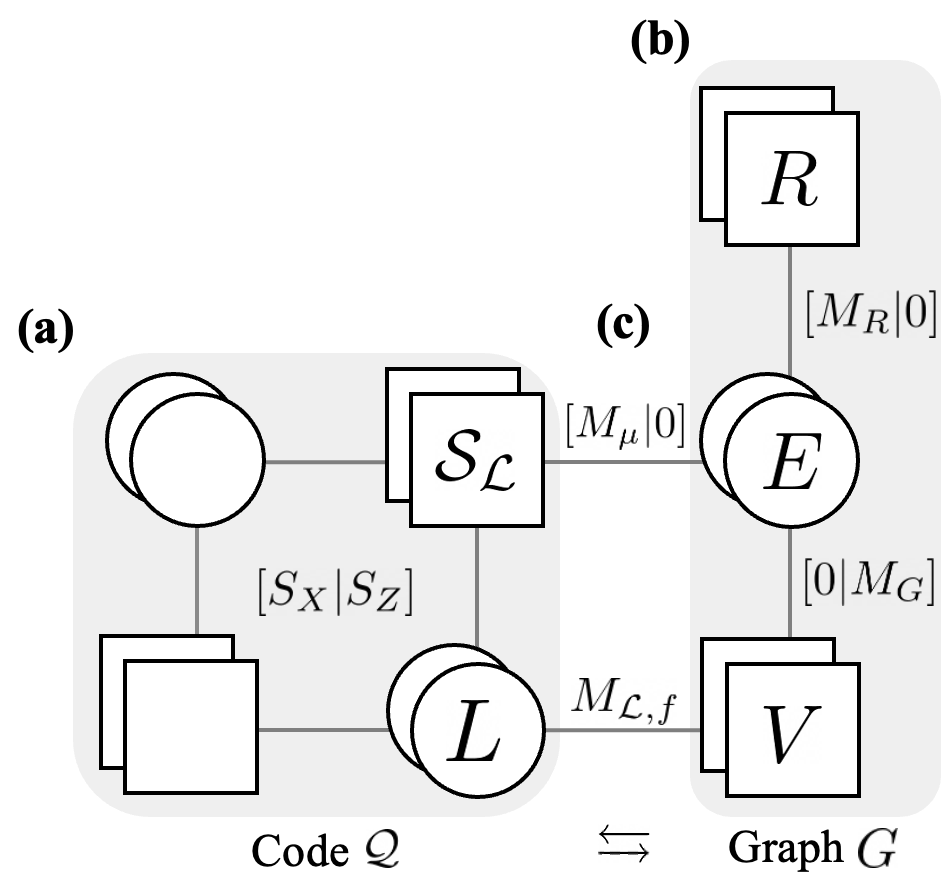}
    \caption{Logical measurement of an operator $\ML$ using a measurement graph $G$, depicted with scalable Tanner graphs. Here, groups of circles denote qubits, and groups of squares denote checks. Lines between checks and qubits are labelled by sympletic matrices, denoting the Pauli actions the checks have on the qubits.
    \textbf{(a)} Tanner graph of the code $\qcode$. The qubits on the right labelled $L$ are qubits in support of $\ML$, and checks on the right labelled $\MS_\ML$ are checks where $K(\MS, \ML)\ne\varnothing$ (see Definition~\ref{def:graph_and_code}, checks~\ref{stab:modified_code_checks}). Unlabelled qubits on the left are the remaining qubits in $Q\setminus L$, unlabelled checks on the left are remaining checks with $K(\MS, \ML) = \varnothing$. Checks act on qubits as specified by the sympletic stabilizer matrix $[S_X\vert S_Z]$ of $\qcode$. 
    \textbf{(b)} Ancilla system specified by the measurement graph $G$. Every edge in $G$ is an ancilla qubit. Vertex checks $V$ act on edge qubits by $Z$ with incidence matrix $M_G$, cycle checks from basis $R$ act on edge qubits by $X$ with incidence matrix $M_R$.
    \textbf{(c)} The code and graph systems are coupled by check deformation. The vertex checks $V$ act on qubits in $L$ as specified by the port function $f$ and the operator $\ML$. The code checks $\MS_\ML$ act on edge qubits that form path matchings by $X$ as specified by Definition~\ref{def:graph_and_code}, checks~\ref{stab:modified_code_checks}.
    }
    \label{fig:gauging}
\end{figure}

The motivation behind defining the code $\qcode(\ML, G, f)$ is simple. 
As proved in previous works (and as one can easily verify), the stabilizers defined in Definition~\ref{def:graph_and_code} commute. 
Moreover, the logical operator $\ML$ which we wish to measure is now a product of stabilizers in $\qcode(\ML, G, f)$. 
Specifically, we have
\begin{equation}\label{eq:measured_operator}
    \ML = \prod_{v\in V} A_v. 
\end{equation}
Therefore, by performing a code-switching measurement protocol between $\qcode$ and $\qcode(\ML, G, f)$ (Definition~\ref{def:measurement_protocol}), we can fault-tolerantly perform the logical measurement of $\ML$. 
To facilitate such a protocol, the measurement graph $G$ needs to satisfy the following graph desiderata.

\begin{theorem}[Graph Desiderata~\cite{williamson2024low,swaroop2024universal}]\label{thm:graph_desiderata}
    For any logical operator $\ML$ with support $L$, any graph $G = (V, E)$ and port function $f: L\rightarrow V$, the code $\bar{\qcode} = \qcode(\ML, G, f)$ is well-defined and $\ML$ is a product of stabilizers in $\bar{\qcode}$. 
    Moreover:
    \begin{enumerate}[itemsep = 0pt]
        \item If $G$ is connected, then $\bar{\qcode}$ encodes the remaining $k-1$ logical qubits of $\qcode$ after measurement of $\ML$. 
        More precisely, every logical operator $\ML'\ne \ML$ of $\qcode$ which commutes with $\ML$ has an independent equivalence class in $\bar{\qcode}$.
        \label{des:connected}

        \item \label{des:LDPC} 
        To ensure $\bar{\qcode}$ is LDPC, it is necessary and sufficient that
        \begin{enumerate}[itemsep = 0pt]
            \item The maximum degree of $G$ is $O(1)$; \label{des:degree}

            \item There is a cycle basis $\MR$ of $G$ such that $\MR$ has congestion $O(1)$, and every cycle in $\MR$ has length $O(1)$. \label{des:cycle_basis}

            \item Consider the collection of path matchings $\mu(S, \ML)$ for all checks $S\in \MS$. Every $\mu(S, \ML)$ has $O(1)$ edges and every edge is in $O(1)$ path matchings. \label{des:path_matching}
        \end{enumerate}

        \item To ensure $\bar{\qcode}$ has distance at least $d$, it is sufficient for the relative Cheeger constant (Definition~\ref{def:relative_expansion}) $\beta_d(G, f(L))$ to be at least $1$. \label{des:relative_expansion}
    \end{enumerate}
\end{theorem}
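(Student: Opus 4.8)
The plan is to take the four assertions in turn, all of them reduced to the bookkeeping of Definition~\ref{def:graph_and_code}. The preamble (the stabilizers commute and $\ML=\prod_{v\in V}A_v$) is essentially already recorded above, but I would run the short case check: two vertex checks commute because $f$ is injective, so distinct ports touch distinct code qubits; $A_v$ and $B_C$ commute because a cycle meets every vertex an even number of times; $A_{f(q)}$ and $\bar S$ commute because the lone anticommutation on the edge qubits — from a path matching meeting $f(K(S,\ML))$ an odd number of times — is cancelled by the lone anticommutation on the code qubit $q\in K(S,\ML)$; all remaining pairs are manifestly commuting. Then $\prod_{v\in V}A_v$ has trivial $Z$-part, since each graph edge is counted at both of its endpoints, and on the code qubits it restricts to $\prod_{q\in L}\ML_q(q)=\ML$.

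For item~1, I would first exhibit the surviving logicals explicitly. Given a logical $\ML'\ne\ML$ of $\qcode$ commuting with $\ML$, the set $K(\ML',\ML)$ has even size, so — using that $G$ is connected — it admits a path matching $\nu$, and $\widetilde{\ML'}:=\ML'\prod_{e\in\nu}X(e)$ commutes with every vertex, cycle, and code check of $\bar{\qcode}$. A short argument of the same flavour — restrict a hypothetical relation to the code qubits and use connectivity to force the offending vertex-set to be $\varnothing$ or $V$ — shows that $\widetilde{\ML'}$ is non-trivial in $\bar{\qcode}$, and that $\widetilde{\ML'_1},\widetilde{\ML'_2}$ lie in distinct classes unless $\ML'_1\ML'_2\sim\ML$. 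Thus the logicals of $\qcode$ commuting with $\ML$, taken modulo $\ML$, inject into $\bar{\qcode}$; conversely, connectivity of $G$ guarantees the ancilla introduces no new logical degrees of freedom (every graph cycle lies in the measured basis $\MR$, and a connected graph contributes nothing on vertices), while $\ML$ itself is now a stabilizer, so a rank count confirms that $\bar{\qcode}$ encodes exactly $k-1$ qubits and the injection is a bijection.

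Item~2 is weight bookkeeping in both directions. For sufficiency, assume (a) $\deg(G)=O(1)$, (b) a cycle basis $\MR$ of $O(1)$-length cycles with congestion $O(1)$, and (c) path matchings $\mu(S,\ML)$ of size $O(1)$ with each edge in $O(1)$ of them; then $A_v$ has weight $\deg(v)+O(1)$, $B_C$ has weight $|C|$, and $\bar S$ has weight $w(S)+|\mu(S,\ML)|$, all $O(1)$, while each qubit lies in $O(1)$ checks — an edge qubit in its two vertex checks, $\rho_\MR(e)$ cycle checks, and $O(1)$ modified code checks, a code qubit in the $O(1)$ checks of the LDPC code $\qcode$ plus at most one vertex check. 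Necessity is the contrapositive of each clause read off the same list: an LDPC $\bar{\qcode}$ forces $\deg(v)=O(1)$ via $A_v$, forces the $B_C$ with $C\in\MR$ to be short and, via the edge-qubit degree bound, the basis $\MR$ to have bounded congestion, and likewise forces the chosen $\mu(S,\ML)$ to be short and sparsely overlapping.

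Item~3 is the technical heart, and the cleaning step is where I expect the real work. Let $\bar L'$ be a non-trivial logical of $\bar{\qcode}$ with $w(\bar L')<d$; I would derive a contradiction. Commuting with the cycle checks forces the $Z$-part of $\bar L'$ on the edge qubits to be a coboundary $\delta y$; multiplying $\bar L'$ by $\prod_{v\in\supp(y)}A_v$ clears this $Z$-part, at the price of dressing the code-qubit part by $\ML$ restricted to $f^{-1}(\supp(y)\cap P)$. This is precisely where the hypothesis $\beta_d(G,f(L))\ge1$ enters: replacing $y$ by its complement changes the operator only by the stabilizer $\ML=\prod_v A_v$, and the relative-Cheeger bound ensures that one of the two choices keeps the dressed portion small enough that the code-qubit weight cannot be pushed past $d$. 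After the clean, $\bar L'$ is equivalent to $P_Q\cdot X(x)$ with $P_Q$ in the normalizer $N(\MS)$ of the $\qcode$-checks; if $P_Q$ is a stabilizer of $\qcode$ then $x$, corrected by the relevant $\mu(S,\ML)$'s, becomes a cycle, so $\bar L'$ is a product of cycle and code checks, hence trivial — contradiction; and if $P_Q$ is a non-trivial $\qcode$-logical then $|P_Q|\ge d$, contradicting $w(\bar L')<d$. Turning the relative-expansion step into a quantitative bound — controlling exactly how much weight the clean can add — is the delicate point, and is exactly what Definition~\ref{def:relative_expansion} is engineered to supply.
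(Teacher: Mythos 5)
The paper never writes out a proof of this theorem itself: it defers to Section~2.3 and Appendix~A of Ref.~\cite{swaroop2024universal} (noting similar lemmas in Refs.~\cite{cross2024improved,williamson2024low}), and your sketch follows essentially that same standard route --- commutation bookkeeping and $\ML=\prod_v A_v$ for the preamble, dressing surviving logicals by path matchings plus a connectivity/rank count for item~1, two-way weight counting for item~2, and for item~3 the cut-space cleaning argument (the $Z$-part on edge qubits is a cut $\delta y$, cleaned by vertex checks, with the choice between $y$ and its complement and $\beta_d(G,f(L))\ge 1$ controlling the added dressing, followed by the stabilizer-versus-logical dichotomy on the code-qubit restriction). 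Your item~3 outline correctly identifies where relative expansion enters and how the dichotomy closes, so I see no genuine gap beyond the need to write out the quantitative chain $|\delta y|\ge \min\bigl(d, |y\cap P|, |P\setminus y|\bigr)$ and the resulting bound on the cleaned code-qubit weight explicitly, exactly as in the cited proof.
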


\begin{definition}[Cheeger Constant and Relative Expansion]\label{def:relative_expansion}
For a graph $G = (V, E)$, for a set of vertices $U\subseteq V$, the \textbf{edge boundary} of $U$, which we denote $\delta_G U$, is defined as the number of edges with exactly one endpoint in $U$.
When the graph $G$ is clear from context, we simply write $\delta U$.
The \textbf{Cheeger constant} $\beta(G)$ is defined as the largest real number such that for all $U\subseteq V$, 
\begin{equation}
    |\delta U| \ge \beta(G)\cdot \min(|U|, |V\setminus U|). 
\end{equation}
Furthermore, for a subset of vertices $P\subseteq V$ and an integer $t$, we define the \textbf{relative Cheeger constant} $\beta_t(G, P)$ to be the largest real number such that for all $U\subset V$, we have
\begin{equation}
    |\delta U|\ge \beta_t(G, P)\cdot \min(t, |U\cap P|, |P\setminus U|).
\end{equation}
From the definitions, we see that $\beta(G) = \beta_{|V|}(G, V)$. 
\end{definition}
\begin{remark}\label{rmk:practical_conditions}
    The desiderata are sufficient conditions for the proof of Theorem~\ref{thm:graph_desiderata} and later Theorem~\ref{thm:fault_distance}. In practice, most of these conditions can be relaxed, as we discuss in Section~\ref{sec:toolkit_practical}.
\end{remark}

The notion of relative expansion in the above form was introduced in Ref.~\cite{swaroop2024universal}.
For a proof of Theorem~\ref{thm:graph_desiderata}, we refer readers to Section~2.3 and Appendix~A of Ref.~\cite{swaroop2024universal}, and note that similar lemmas were proved in Ref.~\cite{cross2024improved} and Ref.~\cite{williamson2024low}.

Given a measurement graph which satisfy the desiderata, the following protocol performs a logical measurement of $\ML$ when it is implemented noiselessly.
\begin{definition}[Measurement Protocol~\cite{williamson2024low}]\label{def:measurement_protocol}
Given a state $\ket{\Psi}$ in the code space of $\qcode$, a logical operator $\ML$, a measurement graph $G$ and a port function $f$, the following procedure outputs $\sigma = \pm 1$ as the result of measuring $\ML$ and the resulting code state $\frac{1}{2}(\mathbbm{1} + \sigma\ML) \ket{\Psi}$.
\begin{enumerate}[itemsep = 0pt]
    \item Initialization: Prepare all edge qubits in $\ket{0}$. \label{stage:init}
    \item Merge: Measure the stabilizers $\MS(\ML, G, f)$. \label{stage:merge}
    For vertex checks $A_v$, record its measurement result as $\epsilon_v = \pm 1$. Output $\sigma = \prod_{v\in V}\epsilon_v$.
    \item Split: Measure all edge qubits in $Z$ basis. For each edge $e$, record the measurement result as $\omega_e$. \label{stage:split}
    \item Correct: Fix an arbitrary vertex $v_0 \in V$. For every qubit $q\in L$, let $\gamma$ be an arbitrary path of edges from $v_0$ to $f(q)$. If $\prod_{e\in \gamma} \omega_e = -1$, apply single-qubit correction $X(q)$. \label{stage:correction}
\end{enumerate}

\end{definition}
We briefly remark on the last correction stage of the above protocol. The edge qubit measurements in the splitting stage anti-commutes with the stabilizer checks in $\MS(\ML, G, f)$ defined in Step~\ref{stab:modified_code_checks} of Definition~\ref{def:graph_and_code}.
Therefore, the results $\omega_e$ are intrinsically non-deterministic, and the random collapse of the edge qubits into the $Z$ basis induces $X$ errors (or byproduct operators) on $Q$. 
These errors are corrected in stage~\ref{stage:correction}.

We add cycles of error correction to this measurement protocol to make it fault-tolerant.
In this work, we measure fault-tolerance with the notion of \textbf{phenomenological fault distance}, which is also called space-time fault distance.
For an error-corrected protocol, this is defined as the minimum number of qubit errors and measurement errors needed to cause an undetected logical error (which includes getting an incorrect logical measurement result). 
\begin{theorem}[Fault-Tolerance~\cite{williamson2024low}]\label{thm:fault_distance}
    Suppose the measurement graph $G$ and port function $f$ satisfy the desiderata of Theorem~\ref{thm:graph_desiderata}. 
    To implement the measurement protocol (Definition~\ref{def:measurement_protocol}) fault-tolerantly, we perform $d$ rounds of syndrome measurement cycles for $\qcode$ (followed by decoding and correction) before stage~\ref{stage:init} (Initialization) and after stage~\ref{stage:correction} (Correction). 
    We also measure the stabilizers $\MS(\ML, G, f)$ for $d$ rounds during stage~\ref{stage:merge} (Merge). After decoding and correction, we output the measurement result $\sigma$. This fault-tolerant implementation of the measurement protocol has space-time fault distance $d$. 
\end{theorem}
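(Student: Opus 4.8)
The plan is to show the error-corrected protocol has space-time fault distance exactly $d$ by reducing to two facts already available: (i) the instantaneous code being measured is $\qcode$ during the outer phases and $\bar{\qcode}=\qcode(\ML,G,f)$ during the Merge phase, and both have distance at least $d$ — for $\qcode$ by hypothesis, for $\bar{\qcode}$ by Theorem~\ref{thm:graph_desiderata}(\ref{des:relative_expansion}) via the relative Cheeger constant; and (ii) each of the three phases runs $d$ repeated syndrome-extraction rounds, so a purely timelike fault worldline must pay one measurement error per round it spans. First I would fix the space-time fault model: fault locations are data-qubit Paulis on $Q\cup E$ between rounds and flips of measured syndrome bits (of $\MS$ in the outer phases, of $\bar{\MS}$ during Merge), together with flips of the classically computed byproduct bits in the Correction stage; detectors are the usual products of two consecutive measurements of the same check, plus the boundary detectors from the clean reference frames — the $d$ noiseless $\qcode$-rounds before Initialization and after Correction, the preparation of the edge qubits $E$ in $\ket{0}$ (which fixes the vertex-check data on $E$ entering the first Merge round), and the terminal $Z$-basis measurement of $E$ in Split. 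An undetected logical fault is then a fault set firing no detector yet either yielding the wrong $\sigma$ or leaving a nontrivial logical operator on the $k-1$ surviving logical qubits identified in Theorem~\ref{thm:graph_desiderata}(\ref{des:connected}).

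For the lower bound, given an undetected logical fault $\mathcal{F}$ I would run a flattening argument: timelike components of $\mathcal{F}$ (measurement errors and timelike data errors) are swept toward the nearest clean temporal boundary, each step trading one fault location for one on the adjacent round, so a worldline crossing $r$ rounds contributes at least $r$ to $|\mathcal{F}|$. Since no detector fires, the residual spacelike fault on the chosen reference slice is an error of the instantaneous code on that slice — of $\qcode$ in an outer phase, of $\bar{\qcode}$ in Merge; here desideratum~\ref{des:path_matching} of Theorem~\ref{thm:graph_desiderata} ensures the modified checks $\bar{S}$ (Definition~\ref{def:graph_and_code}, check~\ref{stab:modified_code_checks}) deform supports only by $O(1)$, absorbed into constants, and desideratum~\ref{des:cycle_basis} keeps the cycle checks $B_C$ low weight so the flattening of $X$-type faults on $E$ stays controlled. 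If $\mathcal{F}$ is logically nontrivial, its flattened representative is either a nontrivial logical operator of the instantaneous code (weight $\ge d$) or a space-time configuration homologous to the world-history of $\ML=\prod_v A_v$ (when $\mathcal{F}$ flips $\sigma$), which has weight $\ge d$ from $|L|\ge d$ together with the $d$-round temporal extent; combining spacelike and timelike contributions gives $|\mathcal{F}|\ge d$. The matching upper bound is immediate: $\qcode$ has a weight-$d$ logical operator commuting with $\ML$, and inserting it on the data qubits in a single round between Initialization and Split fires no detector but flips a surviving logical qubit, so the distance is exactly $d$.

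I would organize the lower-bound argument, following the fault-tolerance proof of Ref.~\cite{williamson2024low}, into a Merge-phase sublemma (combining distance $\ge d$ of $\bar{\qcode}$ with $d$-round repetition), an outer-phase sublemma (the analog for $\qcode$), an interface sublemma, and a global assembly step. The main obstacle is the interface sublemma: one must verify that $\mathcal{F}$ cannot lower its weight by routing part of its worldline through the Initialization of $E$, the Split measurement, or either code-switching boundary at a cost below the number of rounds crossed — equivalently, that these interfaces genuinely act as distance-$d$ temporal boundaries. This is exactly where the full structure of Definition~\ref{def:graph_and_code} and the desiderata of Theorem~\ref{thm:graph_desiderata} are needed — the vertex checks $A_v$ coupling $E$ to $L$ through $f$ and $\ML$, the cycle checks $B_C$ pinning the $X$-data on $E$ with an $O(1)$-congestion basis, and the path-matched checks $\bar{S}$ controlling the interaction of $\qcode$-errors with $E$ — since keeping all of these LDPC and expanding is what makes the boundary estimates hold with only constant slack, which the exact fault-count definition of space-time distance then absorbs.
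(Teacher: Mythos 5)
The paper does not actually prove this theorem: immediately after the statement it says the result ``is stated and proved as Theorem~1 and~2 in Ref.~\cite{williamson2024low}'', so there is no in-paper argument to compare against, and your attempt must be judged on its own. As a plan it follows the right general strategy (instantaneous-code distance for $\qcode$ and $\bar{\qcode}$, temporal repetition over $d$ rounds, a spacelike/timelike decomposition of undetected faults), but it has a genuine gap: the ``flattening argument'' and especially the ``interface sublemma'' are named, not proved, and you yourself identify the interface step as the main obstacle. That step is not a routine boundary estimate — it is essentially the entire content of the theorem. The protocol-specific danger is a \emph{hybrid} fault combining vertex-check measurement errors (which directly corrupt $\sigma=\prod_v\epsilon_v$), the intrinsic randomness of the edge-qubit initialization and of the Split-stage $Z$ measurements, and low-weight data errors on $Q\cup E$; one must show any such combination that flips $\sigma$ or a surviving logical without firing a detector has weight at least $d$. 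This is exactly where relative expansion $\beta_d(G,f(L))\ge 1$ is used in Ref.~\cite{williamson2024low} beyond merely guaranteeing $\operatorname{dist}(\bar{\qcode})\ge d$, and your sketch never engages with it: ``a worldline crossing $r$ rounds pays $r$'' does not by itself rule out cheap faults that terminate on the Initialization or Split boundaries, where vertex/cycle checks are not yet (or no longer) deterministic.

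Two smaller issues. First, your upper-bound witness is misplaced: a weight-$d$ logical $\ML'$ of $\qcode$ commuting with $\ML$ need only commute with the \emph{product} $\prod_v A_v$, not with the individual port vertex checks $A_{f(q)}$, so inserting it during the Merge phase may fire detectors; the clean witness is to insert $\ML'$ during an outer phase, where the instantaneous code is $\qcode$ and every stabilizer of $\qcode$ commutes with it. Second, your fault model should be stated carefully at the code-switching boundaries (which checks are compared across the $\qcode\to\bar{\qcode}$ transition, and which first-round outcomes are nondeterministic), since the detector definitions there are part of what makes the interface argument go through; as written, the proposal assumes the conclusion of that bookkeeping rather than establishing it.
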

The above theorem is stated and proved as Theorem~1 and~2 in Ref.~\cite{williamson2024low}.
Beyond distance, we also would like decoders for $\qcode$ and $\qcode(\ML, G, f)$ which are capable of correcting clusters of space-time errors of weight $O(d)$ or stochastic errors.
An example is the modular decoder proposed and implemented in Ref.~\cite{cross2024improved}, which decodes $\qcode(\ML, G, f)$ up to fault-distance $d/2$ assuming that we are given a decoder which decodes $\qcode$ up to fault-distance $d/2$.
It would be interesting to investigate whether the modular decoder can be adapted to the more general setting of non-CSS codes.

\begin{remark}\label{rmk:multiple_code_blocks}
    While we have phrased the results in this section as measuring a logical operator $\ML$ on a single code block of $\qcode$, all results hold if $\ML$ is supported on multiple code blocks of different code families.
    Pedantically, let $\qcode_1, \cdots, \qcode_B$ be quantum codes with qubits $Q_1, \cdots, Q_B$ and stabilizers $\MS_1, \cdots, \MS_B$. 
    We denote $\qcode_1\cup\cdots\cup\qcode_B$ as the joint code on qubits $Q_1\cup \cdots \cup Q_B$ with stabilizers $S_1\cup \cdots \cup S_B$, where every stabilizer $S\in S_i$ acts on $Q_i$ as before and acts on the remaining qubits by identity. 
    All previous analysis hold if $\ML$ is a logical Pauli operator supported on such a joint code.
\end{remark}

\subsection{Surgery Toolkit: Building Measurement Graphs}\label{sec:toolkit_graphs}

We now discuss how to construct measurement graphs satisfying the desiderata (Theorem~\ref{thm:graph_desiderata}).
Desideratum~\ref{des:connected} is straightforward to satisfy. 
To satisfy Desideratum~\ref{des:relative_expansion}, we state the following lemma on relative expansion. 

\begin{lemma}[Restriction Lemma]\label{lem:restriction_relative_expansion}
    Fix a graph $G = (V, E)$. Let $P, P'$ be subsets of $V$ with $P\subseteq P'$, and $t, t'$ be integers such where $t \le t'$. 
    Then $\beta_t(G, P)\ge \beta_{t'}(G, P')$.
\end{lemma}
\begin{proof}
    The proof follows directly from the definition of relative expansion. Note that for $P\subseteq P'$, we have for all $U$, $U\cap P\subseteq U\cap P'$ and $P\setminus U\subseteq P'\setminus U$. 
    Since $t \le t'$, we have 
    \begin{equation}
        \min(t, |U\cap P|, |P\setminus U|) \le \min(t', |U\cap P'|, |P'\setminus U|),
    \end{equation}
    which implies that $\beta_t(G, P)\ge \beta_{t'}(G, P')$.
\end{proof}

A direct corollary of the above lemma is that for all $P\subset V, t\le |V|$, we have $\beta_t(G, P)\ge \beta(G)$.

We now discuss how to satisfy Desideratum~\ref{des:LDPC} while preserving or improving relative expansion.
For Desideratum~\ref{des:cycle_basis}, we cite the following Decongestion Lemma, which is Lemma~A.0.2 in Ref.~\cite{freedman2021building}.
The precise bounds can be obtained by an inspection of the relevant proof in Ref.~\cite{freedman2021building}.
\begin{lemma}[Decongestion Lemma~\cite{freedman2021building}]\label{lem:decongestion}
    Fix any simple graph $G = (V, E)$, $G$ has a cycle basis $\MR$ with congestion $\rho < \log_2(|V|)
    \ln(2|E|)$. 
    Moreover, say that two cycles overlap if they share edges. There is an ordering of the basis, $\MR = \{C_1, \cdots, C_{|E|-|V|+1}\}$, such that every cycle $C_i$ overlaps with at most $\log_2(|V|)\cdot \rho$ cycles later in the ordering.
    Such an ordered basis can be found with an efficient randomized algorithm.
\end{lemma}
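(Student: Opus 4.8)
The plan is to prove Lemma~\ref{lem:decongestion} along the lines of the recursive ``decongestion'' argument of Freedman and Hastings~\cite{freedman2021building} (the same result is invoked in Ref.~\cite{williamson2024low}); I will describe the shape of the argument and isolate the step that does the real work. First reduce to connected $G$: for a disconnected graph, run the construction on each component, let $\MR$ be the union of the component bases, and let the ordering be any concatenation of the component orderings; congestion and the overlap count are then maxima over components while $|V|,|E|$ only grow, so it is enough to treat connected $G$. Write $n=|V|$ and $m=|E|$, so that $\dim\ker(M_G)=m-n+1$. We construct $\MR$ together with its ordering by recursion, emitting the cycles produced by the outermost calls first.

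The engine of the recursion is a ball-growing decomposition. Pick a uniformly random root $s\in V$ and grow the BFS balls $B_G(s,0)\subseteq B_G(s,1)\subseteq\cdots$. The standard telescoping estimate comparing $|E(B_G(s,i+1))|$ with $|E(B_G(s,i))|+|\delta_G B_G(s,i)|$ shows that some radius has edge boundary at most an $O(1/\ln(2m))$-fraction of the edges it encloses; choosing the cut radius at random among the radii witnessing this makes the probability that any fixed edge is cut at most $O(1/\ln(2m))$. Let $A=B_G(s,r)$ be the resulting ball, connected because it is a BFS ball, and let $G/A$ be the multigraph obtained by contracting $A$ to a single vertex. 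Using $\dim\ker(M_G)=\dim\ker(M_{G[A]})+\dim\ker(M_{G/A})$, we recurse on $G[A]$ and on $G/A$ (the induction is over multigraphs), realizing the lifted basis of $G/A$ inside $G$ by closing up its cycles through $A$ --- short completions, since $A$ has small radius, whose congestion the recursion inside $A$ is arranged to absorb --- and iterating until every piece is a single vertex. Then: randomizing the radius (with linearity of expectation and the probabilistic method) caps the congestion any single edge accrues at one level by $O(\ln(2m))$, and a potential argument on the sizes of the recursive instances caps the recursion depth by $\log_2 n$, so multiplying gives $\rho<\log_2(|V|)\,\ln(2|E|)$.

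For the ordering: emit the cycles level by level, outermost first, arbitrarily within a level. If $C_i$ lies at level $\ell$, any later cycle $C_j$ sharing an edge $e$ with $C_i$ must lie at level $\ge\ell$; across the at most $\log_2 n$ deeper levels, $e$ is reused only by the cycles of the single recursive subproblem that contains it, plus at level $\ell$ by at most the intra-level congestion, so the number of such $C_j$ is at most $\log_2(|V|)\cdot\rho$. For efficiency, each level costs one BFS and $O(m)$ sampling, there are $O(\log n)$ levels, and the constant success probability is amplified by independent repetition, giving an efficient randomized algorithm.

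The main obstacle is exactly the ball-growing step, where two requirements are in tension: the recursion must provably shrink its instances so that the depth is $O(\log|V|)$, yet every edge must be cut with probability only $O(1/\ln|E|)$ so that the per-level congestion is $O(\ln|E|)$. In graphs with no sparse balanced cut, such as expanders, no single deterministic cut satisfies both, and the resolution is that averaging over a random cut radius spreads the unavoidable cut edges evenly over the many edges of a dense region rather than concentrating them on a few; the telescoping ball-growing inequality is precisely what turns the density of that region into the $\ln(2|E|)$ gain. Carrying this out rigorously --- the recursion potential, the congestion of the completions inside balls, and the union bound over all $|E|$ edges that yields the $\ln(2|E|)$ factor --- is the technical core, and the rest is bookkeeping.
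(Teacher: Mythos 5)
You should know at the outset that the paper does not prove Lemma~\ref{lem:decongestion} at all: it imports it verbatim as Lemma~A.0.2 of Freedman--Hastings~\cite{freedman2021building}, remarking only that the explicit constants follow from inspecting that proof. So a blind attempt here has to actually carry out the Freedman--Hastings argument (or an equivalent one) to the point of delivering the stated inequalities $\rho < \log_2(|V|)\ln(2|E|)$ and the $\log_2(|V|)\cdot\rho$ overlap bound with those exact constants, not $O(\cdot)$ versions. Your proposal is a strategy outline in the spirit of that recursive, randomized decomposition, but you yourself defer ``the recursion potential, the congestion of the completions inside balls, and the union bound'' to ``the technical core'' --- which is exactly where the lemma is proved. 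As written, this is a plan for a proof, not a proof.

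Beyond the deferral, several of the asserted steps do not go through as stated. (i) Recursion depth: ball-growing gives no balance guarantee, so nothing forces either $G[A]$ or $G/A$ to shrink by a constant factor; on an expander the first ball whose boundary is an $O(1/\ln(2|E|))$-fraction of its enclosed edges may already contain almost every vertex, and then the claimed $\log_2(|V|)$ depth does not follow from the construction you describe --- some other measure of progress (as in~\cite{freedman2021building}) is needed. (ii) Congestion accounting: the per-edge cut probability $O(1/\ln(2|E|))$ bounds how many edges are cut, not how many basis cycles traverse a fixed edge. The dangerous contribution is the completions through $A$ of the lifted cycles of $G/A$ --- there can be $\Theta(|E|)$ of them, and routing them through a small ball can pile congestion onto a few interior edges; ``the recursion inside $A$ is arranged to absorb'' this is an assertion, and it is precisely the step whose careful execution produces the $\ln(2|E|)$ factor. (iii) The ordering property: your count bounds, for one fixed edge $e$ of $C_i$, the number of later cycles reusing $e$, but the lemma requires bounding the number of later cycles sharing \emph{any} edge with $C_i$; for a long cycle these are very different quantities, and collapsing the latter to $\log_2(|V|)\cdot\rho$ needs structural bookkeeping your sketch does not supply. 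Until these three points are made rigorous (with the constants of the statement), the lemma remains, as in the paper, a citation to~\cite{freedman2021building} rather than something you have established.
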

We include a proof of the following corollary in Appendix~\ref{apdx:proofs}.
\begin{restatable}{corollary}{cordecongestion}\label{cor:decongestion}
    We can efficiently compute a partition of $\MR = \bigcup_{i=1}^t \MR_i$ such that each $\MR_i$ contains non-overlapping cycles and $t\le \log_2(|V|)\cdot \rho+1$.
\end{restatable}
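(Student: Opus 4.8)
The plan is to phrase this as a graph-coloring problem on the overlap structure of $\MR$. Define the \textbf{overlap graph} $H$ with vertex set $\MR$, placing an edge between two cycles whenever they share at least one edge of $G$. A partition of $\MR$ into classes of pairwise non-overlapping cycles is then exactly a proper vertex coloring of $H$, and the number of parts $t$ equals the number of colors used; so it suffices to exhibit an efficient coloring of $H$ with at most $\log_2(|V|)\cdot\rho + 1$ colors, and to set $\MR_i$ to be the set of cycles receiving color $i$ (each such class being an independent set of $H$).

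The ordering $C_1, \ldots, C_{|E|-|V|+1}$ supplied by the Decongestion Lemma (Lemma~\ref{lem:decongestion}) is precisely a degeneracy-type ordering for $H$: writing $D := \log_2(|V|)\cdot\rho$, each $C_i$ is adjacent in $H$ to at most $D$ vertices $C_j$ with $j > i$. I would therefore color greedily in the \emph{reverse} order $C_{|E|-|V|+1}, C_{|E|-|V|}, \ldots, C_1$: when it is $C_i$'s turn, every already-colored neighbor of $C_i$ in $H$ is some $C_j$ with $j > i$, and there are at most $D$ of them, so at least one color in $\{1, \ldots, D+1\}$ is unused among those neighbors; assign $C_i$ the smallest such color. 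This procedure never uses more than $D+1 = \log_2(|V|)\cdot\rho + 1$ colors, giving the claimed bound on $t$, and the resulting color classes are pairwise non-overlapping by construction.

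For efficiency, note that $H$ can be built quickly: since $\MR$ has congestion $\rho$, each edge $e \in E$ lies in at most $\rho$ cycles of $\MR$, so enumerating for each $e$ the cycles through it and forming all co-occurring pairs produces the edge set of $H$ in time polynomial in $|E|$ and $\rho$; the greedy coloring sweep is then linear in the size of $H$. Combined with the efficient randomized algorithm of Lemma~\ref{lem:decongestion} that produces the ordered basis, the whole procedure is efficient. I do not anticipate a genuine obstacle here; the one point that requires care is the direction of the sweep — the Decongestion Lemma only controls \emph{forward} overlaps in its ordering, so the greedy coloring must proceed in the reverse direction for the ``at most $D$ already-colored neighbors'' bound to hold. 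One could alternatively just invoke that a $D$-degenerate graph is $(D{+}1)$-colorable, but making the constant explicit via the given ordering is cleaner and keeps the algorithm concrete.
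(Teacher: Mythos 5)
Your proof is correct, and it takes a somewhat different route from the paper's. You reduce the statement to proper vertex coloring of the overlap graph $H$ on $\MR$ and observe that the ordering from Lemma~\ref{lem:decongestion} is a bounded-forward-degree (degeneracy-type) ordering with $D = \log_2(|V|)\cdot\rho$, so a single greedy sweep in \emph{reverse} order colors $H$ with at most $D+1$ colors; the color classes are the $\MR_i$. The paper instead runs an iterated extraction: in each round it scans the cycles in reverse order, greedily collects a non-overlapping family $\MR_i$, removes it, and proves by induction that each round decreases every remaining cycle's count of forward overlaps by at least one, so the process halts after at most $D+1$ rounds. The two arguments hinge on the same two ingredients — the forward-overlap bound and the reverse-order scan — but yours gets the bound immediately from the standard fact that a graph with such an ordering is $(D+1)$-colorable, avoiding the paper's inductive potential argument, while the paper's algorithm never needs to materialize $H$ explicitly (you correctly note that building $H$ is nevertheless cheap, since congestion $\rho$ means each edge of $G$ lies in at most $\rho$ basis cycles). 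Your remark that the sweep direction matters — the lemma only controls overlaps with \emph{later} cycles, so the greedy coloring must process later cycles first — is exactly the right point of care, and the paper's algorithm respects the same constraint by scanning from the last cycle downward.
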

Note that the result of the above lemma alone is insufficient to satisfy Desideratum~\ref{des:cycle_basis}, as the congestion is not constant and the cycles in $\MR$ may have arbitrary length.
To further decongest the cycles, we apply the technique of thickening.

\begin{definition}[Thickening]\label{def:thicken}
    Consider two graphs $G_1 = (V_1, E_1)$ and $G_2 = (V_2, E_2)$. 
    The Cartesian product of $G_1$ and $G_2$ is the graph $G = G_1\square G_2 = (V_1\times V_2, E)$ where 
    \begin{equation}
        E = \{((u_1,u_2), (v_1,v_2)): (u_1,v_1)\in E_1\text{ or }(u_2,v_2)\in E_2\}.
    \end{equation}
    A line graph of length $\ell$ is a graph $J_\ell$ with $\ell$ vertices $v_1, \cdots, v_{\ell}$ and $\ell-1$ edges $(v_1,v_2), \cdots, (v_{\ell-1}, v_{\ell})$. 
    We say that $G\square J_\ell$ is $G$ \textbf{thickened} $\ell$ times (See Figure~\ref{fig:thicken_cellulate}b for an example).
    We refer to the $\ell$ copies of $G$ as levels, and denote them $G\times \{r\} = (V\times \{r\}, E\times \{r\})$ for $1\le r\le \ell$.
\end{definition}

The next fact explains the motivation behind thickening a graph.
We include a proof in Appendix~\ref{apdx:proofs}.
\begin{restatable}{fact}{thickenbasis}\label{fact:thickened_cycles}
    Fix a graph $G = (V, E)$.
    In the thickened graph $G\square J_\ell$, consider the following set of length-4 cycles (labelled by their endpoints).
    \begin{equation}
        \MT = \{ (v\times \{r\},  v\times \{r+1\}, u\times \{r+1\}, u\times \{r\}): (v,u)\in E, 1\le r\le \ell-1\}. 
    \end{equation}
    Let $\MR = \{C_1, \cdots, C_{|E|-|V|+1}\}$ be a cycle basis of $G$.
    For every cycle $C_i$, choose an arbitrary level $1\le r_i\le \ell$. 
    Then the set $\MT\cup \{C_i\times \{r_i\}: C_i\in \MR\}$ is a cycle basis of $G\square J_\ell$. See Figure~\ref{fig:thicken_cellulate}c for an example.
\end{restatable}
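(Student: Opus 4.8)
The plan is to exhibit the set $\MT\cup\{C_i\times\{r_i\}: C_i\in\MR\}$ as a linearly independent subset of the cycle space $\ker M_{G\square J_\ell}$ over $\FF_2$ whose cardinality equals $\dim\ker M_{G\square J_\ell}$; an independent set of the right size is automatically a basis, so that would finish the proof. The cardinality bookkeeping is routine: writing $c$ for the number of connected components of $G$ (equivalently of $G\square J_\ell$, since $J_\ell$ is connected), we have $|\MR| = |E|-|V|+c$, while $G\square J_\ell$ has $|V|\ell$ vertices and $|E|\ell+|V|(\ell-1)$ edges, so $\dim\ker M_{G\square J_\ell} = |E|\ell-|V|+c = |E|(\ell-1)+(|E|-|V|+c) = |\MT|+|\MR|$. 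That every element of the set is genuinely a cycle is immediate: the members of $\MT$ have length $4$, and $C_i\times\{r_i\}$ is the image of the cycle $C_i$ under the graph embedding $G\hookrightarrow G\times\{r_i\}\subseteq G\square J_\ell$, hence lies in $\ker M_{G\square J_\ell}$.

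The substance is linear independence. Suppose $\sum_{(e,r)\in A}T_{e,r}+\sum_{i\in B}(C_i\times\{r_i\})=0$, where $T_{e,r}$ denotes the square in $\MT$ on edge $e\in E$ between levels $r$ and $r+1$, and $A\subseteq\{(e,r):e\in E,\,1\le r\le\ell-1\}$, $B\subseteq\{i:C_i\in\MR\}$. The main obstacle is that both squares and lifted cycles involve horizontal edges, so the coefficients cannot be read off directly; I would break this deadlock with the level-collapsing retraction $\rho\colon G\square J_\ell\to G\times\{1\}$, $(v,r)\mapsto(v,1)$. Extended $\FF_2$-linearly to edge chains, $\rho$ sends every horizontal edge lying over $e\in E$ (at any level) to $e$ and every vertical edge to $0$, and one checks it commutes with the incidence maps, so it carries $\ker M_{G\square J_\ell}$ into $\ker M_G$. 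Since each square $T_{e,r}$ has exactly two horizontal edges, both over $e$, we get $\rho(T_{e,r})=e+e=0$, while $\rho(C_i\times\{r_i\})=C_i$. Applying $\rho$ to the relation therefore gives $\sum_{i\in B}C_i=0$ in $\ker M_G$, and since $\MR$ is a basis this forces $B=\varnothing$.

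It then remains to show that $\MT$ alone is independent, i.e. that $\sum_{(e,r)\in A}T_{e,r}=0$ implies $A=\varnothing$, which I would do by a bottom-up induction on the level index. A horizontal edge lying over $e$ in level $1$ is contained only in the square $T_{e,1}$, so the relation forces $(e,1)\notin A$ for all $e$. Inductively, a horizontal edge over $e$ in level $j$ lies only in $T_{e,j-1}$ and $T_{e,j}$; having already excluded $(e,j-1)$, the relation forces $(e,j)\notin A$. Hence $A=\varnothing$, the proposed set is linearly independent, and by the cardinality count it is a cycle basis of $G\square J_\ell$, as claimed. I expect the retraction step to be the only non-mechanical part of the argument; everything else is counting and a short induction.
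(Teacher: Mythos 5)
Your proposal is correct and follows essentially the same route as the paper's proof: exhibit $\MT\cup\{C_i\times\{r_i\}\}$ as a linearly independent set whose size matches the dimension $|\ME|-|\MV|+c$ of the cycle space of $G\square J_\ell$. The only difference is that where the paper simply asserts the linear independence of $\MT\cup\MR$, you supply an explicit (and valid) argument via the level-collapsing chain map $\rho$ together with the bottom-up induction on horizontal edges, and you handle the component count $c$ slightly more generally than the paper's connected-graph bookkeeping.
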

This fact enables us to measure the cycles in $\MR$ on any level of the thickened graph, which is crucial for satisfying Desideratum~\ref{des:cycle_basis}.
Since the cycles in $\MR$ could have arbitrary length, we add edges to break them into $\FF_2$-sums of constant-weight cycles.

\begin{definition}[Cellulation]\label{def:cellulation}
    Given a simple cycle\footnote{A cycle is simple if it is a (connected) path which visits every vertex exactly 0 or 2 times.} $C$ in a graph $G = (V, E)$, suppose $C$ traverses vertices $1, \cdots, w$ in order.
    We can \textbf{cellulate} $C$ by adding edges $(1, w-1), (w-1, 2), (2, w-2), (w-2, 3), \cdots $, so that $C$ is decomposed into $w-2$ many triangles. 
    Observe that on every vertex we added at most $2$ edges, and every edge of $C$ is used in exactly one triangle. 
    Every added edge gets used at most twice in the triangles. See Figure~\ref{fig:thicken_cellulate}d for an example.
\end{definition}

We can now put the techniques together to construct a measurement graph satisfying the desiderata of Theorem~\ref{thm:graph_desiderata}.
\begin{lemma}[Measurement Graph Construction~\cite{williamson2024low}]\label{lem:measurement_graph_construction}
    For a logical operator $\ML$ with support $L$, we use the following procedure to construct a graph $G$ and a port function $f$ which satisfy the desiderata of Theorem~\ref{thm:graph_desiderata}.
    \begin{enumerate}[itemsep = 0pt]
        \item Let $V_1$ be a set of vertices of size $|L|$, and construct a bijection $f$ between $L$ and $V_1$.
        \item \label{construction:base} Construct a base graph $G_1 = (V_1, E_1)$ as follows: 
        \begin{enumerate}[itemsep = 0pt]
            \item For every stabilizer $S\in \MS$, recall that $K(S, \ML)$ is the set of qubits $q\in Q$ such that $S_q$ and $\ML_q$ anti-commutes.
            Add a perfect matching $\mu(S, \ML)$ of $f(K(S, \ML))$, which is a set of $|K(S, \ML)|/2$ edges, to $G_1$.
            \label{construction:matching}
            \item Construct a constant degree graph $D$ on $|L|$ vertices with Cheeger constant $\beta_D\ge \beta$ for some constant $\beta$ of our choice. Add the edges of $D$ to $G_1$.
            \label{construction:expand}
        \end{enumerate}
        \item \label{construction:decongestion} Apply the Decongestion Lemma~\ref{lem:decongestion} and Corollary~\ref{cor:decongestion} to obtain a cycle basis of $\MR$ with congestion $\rho$ of $G_1$, 
        and a partition $\MR = \bigcup_{i=1}^t \MR_i$ such that each $\MR_i$ contains non-overlapping cycles and $t\le O((\log |L|)^3)$.
        \item \label{construction:thicken} Thicken $G_1$ by $\ell = \max(t, 1/\beta)$ times to obtain $G = G_1\square J_\ell$, denote $G = (V, E)$.
        \item \label{construction:cellulation} On every level $G\times \{r\}$, cellulate every cycle in $\MR_r$ to obtain a collection of triangles which generate the cycles $\MR_r\times \{r\}$.
    \end{enumerate}
    Since $\qcode$ is a LDPC code, we assume every stabilizer check $S$ has weight at most $\omega$ and every qubit in $Q$ is checked by at most $\Delta$ stabilizers. 
    Suppose the expander graph we used in Step~\ref{construction:expand} has maximum degree $\delta$.
    Then the constructed graph $G$ has maximum degree at most $2(\Delta+\delta+1)$ and total edges at most $\ell(\Delta+\delta+1)|L|\le O(|L|(\log |L|)^3)$. 
    The cycle basis we measure has congestion $2$ and maximum length $4$.\footnote{Note that the maximum stabilizer weight of $\qcode$ does not impact these upper bounds.}
\end{lemma}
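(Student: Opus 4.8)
The plan is to check the three desiderata of Theorem~\ref{thm:graph_desiderata} for the pair $(G,f)$ output by the procedure, in turn, and then read off the two quantitative bounds. Desideratum~\ref{des:connected} (connectivity) is immediate: the graph $D$ of Step~\ref{construction:expand} has Cheeger constant $\ge\beta>0$, hence is connected; the matching edges of Step~\ref{construction:matching} only keep $G_1$ connected; and the Cartesian product of connected graphs is connected, so $G=G_1\square J_\ell$ is connected. For Desideratum~\ref{des:relative_expansion} I would use the freedom in Step~\ref{construction:expand} and take $\beta\ge 1$. Since adding edges cannot decrease a Cheeger constant, $\beta(G_1)\ge\beta(D)\ge\beta\ge 1$; and since the port is $P=f(L)=V_1\times\{1\}$, for any $U\subseteq V(G)$ the level-$1$ horizontal edges alone give $|\delta_G U|\ge|\delta_{G_1}U_1|\ge\beta(G_1)\min(|U_1|,|V_1\setminus U_1|)\ge\min(d,|U\cap P|,|P\setminus U|)$, where $U_1=\{v:(v,1)\in U\}$, so $\beta_d(G,P)\ge 1$. (If one prefers $\beta<1$ to use a sparser $D$, the choice $\ell\ge 1/\beta$ restores this by the usual thickening argument of Refs.~\cite{williamson2024low,swaroop2024universal}.)

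Next I would verify Desideratum~\ref{des:LDPC}. For the degree bound~\ref{des:degree}: a vertex $f(q)$ receives at most one matching edge per stabilizer checking $q$, hence at most $\Delta$ of them — bounded by $\Delta$, not by the stabilizer weight $\omega$, which is the content of the footnote — plus at most $\delta$ edges of $D$; thickening adds at most $2$; and cellulating on level $r$ adds at most $2$ chords per simple cycle of $\MR_r$ through the vertex, and at most $\deg_{G_1}/2$ such cycles pass through it (they are edge-disjoint and each uses two incident edges), another $\le\Delta+\delta$. This sums to $\deg_G\le 2(\Delta+\delta)+2=2(\Delta+\delta+1)$. For the path-matching condition~\ref{des:path_matching}: take as $\mu(S,\ML)$ the level-$1$ copy of the perfect matching added in Step~\ref{construction:matching}; it has $|K(S,\ML)|/2\le\omega/2=O(1)$ edges, and any fixed edge lies in $\mu(S,\ML)$ for at most $\Delta$ stabilizers.

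The cycle-basis condition~\ref{des:cycle_basis} needs the most care and is, I expect, the main obstacle. First apply Fact~\ref{fact:thickened_cycles}, placing each basis cycle $C_i\in\MR$ on the level $r_i$ equal to the index of the colour class $\MR_{r}\ni C_i$ — legal since $\ell\ge t$ — to obtain a cycle basis $\MT\cup\{C_i\times\{r_i\}\}$ of $G_1\square J_\ell$, namely the length-$4$ squares $\MT$ plus the (possibly long) lifted cycles. Then decompose each $C_i$ into edge-disjoint simple cycles and cellulate each such simple cycle of length $w$ into its $w-2$ triangles, replacing it in the basis by them. Checking this is legitimate is the delicate point: adding the $w-3$ chords grows the cycle space by exactly $w-3$, the triangles are independent (each contains a cycle-edge used by no other triangle), span the new local cycle space, and sum to the original cycle; and the chord sets introduced for distinct cycles are edge-disjoint — cycles within a colour class are edge-disjoint, and cycles in different classes lie on different levels — so all these replacements can be carried out simultaneously and still yield a basis. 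The measured basis then has maximum cycle length $4$, and congestion $O(1)$: each chord lies in $\le 2$ triangles, each horizontal edge in $\le 2$ squares and $\le 1$ triangle, each vertical edge in $\le\deg_{G_1}=O(1)$ squares — which is all Desideratum~\ref{des:cycle_basis} requires.

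It remains to count. From $\sum_S|K(S,\ML)|=\sum_{q\in L}|\{S:q\in\supp S\}|\le\Delta|L|$ there are $\le\Delta|L|/2$ matching edges, plus $\le\delta|L|/2$ edges of $D$, so $|E_1|\le(\Delta+\delta)|L|/2$ and $|V_1|=|L|$. Thickening contributes $\ell|E_1|+(\ell-1)|V_1|$ edges, and cellulation adds $\sum_{C\in\MR}(|C|-3)<\sum_{C\in\MR}|C|=\sum_{e\in E_1}\rho_\MR(e)\le\rho|E_1|$. The one non-obvious step here is $\rho\le t\le\ell$: an edge used by $\rho$ cycles of $\MR$ is shared by $\rho$ pairwise-overlapping cycles, so the overlap graph of $\MR$ contains a $\rho$-clique and needs at least $\rho$ colours. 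Hence the edge count is $<\ell|E_1|+\ell|V_1|+\ell|E_1|\le\ell(\Delta+\delta+1)|L|$, and since Lemma~\ref{lem:decongestion} and Corollary~\ref{cor:decongestion} give $\rho=O((\log|L|)^2)$ and $t=O((\log|L|)^3)$ while $1/\beta=O(1)$, we get $\ell=O((\log|L|)^3)$ and the total $O(|L|(\log|L|)^3)$; the maximum degree is $2(\Delta+\delta+1)$ as above. So the genuinely delicate part is the cycle-basis bookkeeping — verifying that thicken-then-cellulate really produces a bounded-length, bounded-congestion basis — while the counts are routine, with the small observation $\rho\le t$ being what makes the explicit edge bound come out exactly.
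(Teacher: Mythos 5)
Your proposal is correct and follows essentially the same route as the paper's proof: it verifies the desiderata one by one with the same degree count, the same thickening-plus-cellulation edge count (your $\rho\le t$ observation is an equivalent substitute for the paper's per-level non-overlap argument), Fact~\ref{fact:thickened_cycles} plus cellulation for the measured basis, the Step~\ref{construction:matching} matchings for desideratum~\ref{des:path_matching}, and the Thickening Lemma (or your $\beta\ge 1$ shortcut) for relative expansion. One small point in your favour: your congestion accounting is the accurate one --- a vertical edge lies in up to $\deg_{G_1}=O(1)$ squares of $\MT$, so the lemma's stated ``congestion $2$'' is loose for exactly those edges --- and the $O(1)$ bound you prove is all that desideratum~\ref{des:cycle_basis} requires.
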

\begin{figure}
    \centering
    \includegraphics[scale=0.4]{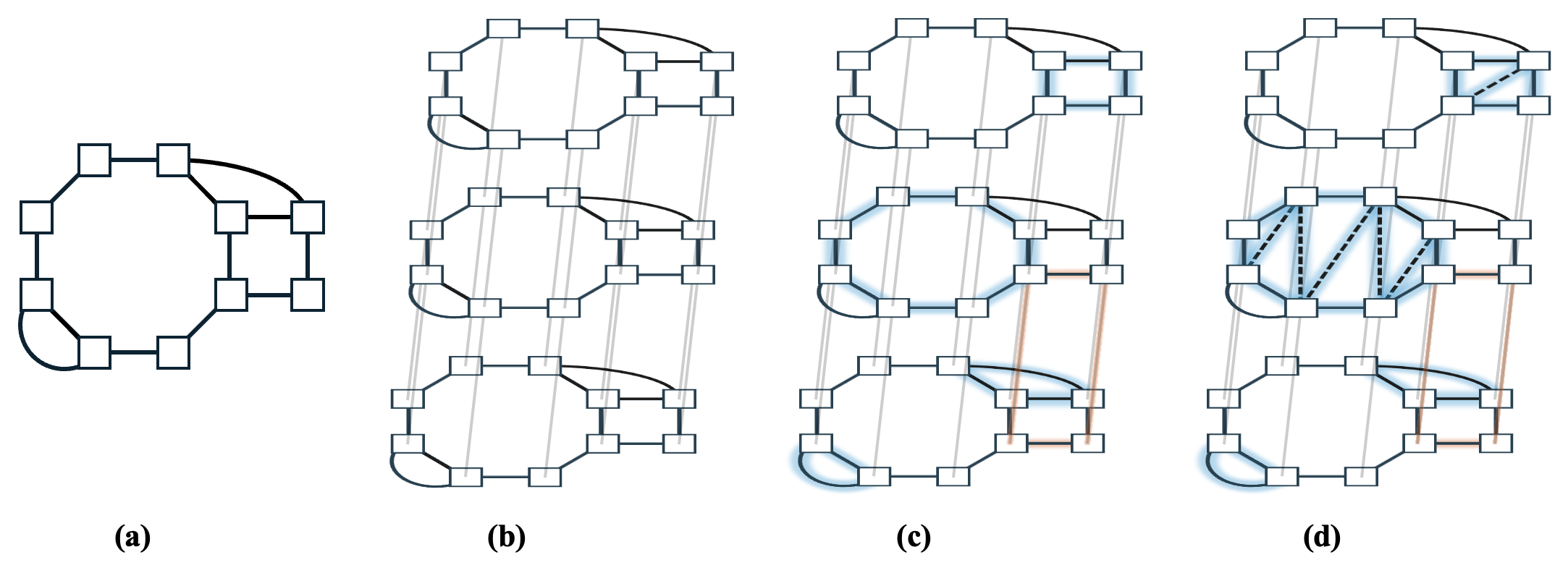}
    \caption{Depicting of thickening, decongestion, and cellulation. 
    \textbf{(a)} A generic graph $G_1$. 
    \textbf{(b)} $G_1$ thickened by a line graph $J_3$, $G = G_1\square J_3$. 
    \textbf{(c)} A cycle basis in $G$, where the blue cycles are a cycle basis of $G_1$ spread into distinct levels and therefore do not overlap, and the red cycle is one of the new cycles created by thickening (Fact~\ref{fact:thickened_cycles}). 
    \textbf{(d)} Cellulating cycles into triangles. We did not cellulate the red cycle(s) as they always have weight 4.}
    \label{fig:thicken_cellulate}
\end{figure}

To prove that the graph desiderata are satisfied, we need to analyze how relative expansion changes under the operations of thickening and cellulation.
Since cellulation only adds edges without adding vertices, it could only improve relative expansion. 
For thickening, we use the following lemma.
\begin{restatable}[Thickening Lemma]{lemma}{thickenlemma}\label{lem:thicken_expansion}
    Suppose $G = (V, E)$ has relative Cheeger constant $\beta = \beta_t(G, P)$ for port $P\subseteq V$ and integer $t$. 
    Fix $\ell\ge 1$. For all $r$ where $1\le r\le \ell$, we have
    \begin{equation}
        \beta_t(G\square J_\ell, P\times \{r\}) \ge \min(1, \ell\beta).
    \end{equation}
\end{restatable}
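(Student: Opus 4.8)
The plan is to bound $|\delta_{G\square J_\ell} U|$ from below for an arbitrary $U\subseteq V\times[\ell]$ in terms of $\min(t, |U\cap (P\times\{r\})|, |(P\times\{r\})\setminus U|)$. For each level $s\in\{1,\dots,\ell\}$, write $U_s = \{v\in V : (v,s)\in U\}$ for the ``slice'' of $U$ at level $s$; these are the vertices of $V$ lying in $U$ on level $s$. The edges of $G\square J_\ell$ split into two families: \emph{horizontal} edges (copies of $E$ living inside a single level $s$, one copy for each $s$) and \emph{vertical} edges (for each $v\in V$, a path $J_\ell$ through the $\ell$ copies of $v$). The boundary $|\delta U|$ therefore decomposes as $\sum_{s=1}^\ell |\delta_G U_s| + (\text{number of vertical edges crossing } U)$, and I would lower-bound these two contributions separately.

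The first key step is to get mileage out of the horizontal edges on a \emph{single well-chosen level}. Applying the hypothesis $\beta = \beta_t(G,P)$ to the slice $U_s$ gives $|\delta_G U_s|\ge \beta\cdot\min(t, |U_s\cap P|, |P\setminus U_s|)$ for each $s$. The subtlety is that we want a bound phrased in terms of level $r$, not level $s$; here the vertical edges do the bookkeeping. The second key step: if $m := \min(t, |U\cap(P\times\{r\})|, |(P\times\{r\})\setminus U|)$, I want to argue that either (i) there is some level $s$ on which the slice $U_s$ already ``separates'' at least $m$ port vertices — i.e. $\min(t,|U_s\cap P|,|P\setminus U_s|)\ge m$ — in which case that one level contributes $|\delta_G U_s|\ge\beta m$, and multiplying by $\ell$ (which we may do if... no — we cannot multiply one level's contribution by $\ell$). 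So instead the cleaner route: we want $|\delta U|\ge \min(1,\ell\beta)\cdot m$, so it suffices to show $|\delta U|\ge \ell\beta m$ \emph{or} $|\delta U|\ge m$. If on \emph{every} level $s$ the slice separates at least $m$ port vertices, then $\sum_s|\delta_G U_s|\ge \ell\beta m$ and we are done via the horizontal edges. Otherwise, some slice $U_{s_0}$ has $\min(t,|U_{s_0}\cap P|,|P\setminus U_{s_0}|)<m$; combined with the fact that level $r$'s slice $U_r$ satisfies $\min(t,|U_r\cap P|,|P\setminus U_r|)\ge m$ (since $|U\cap(P\times\{r\})| = |U_r\cap P|$ etc.), the slices $U_{s_0}$ and $U_r$ differ on enough port vertices, and each such differing port vertex $v$ forces the vertical path for $v$ to cross the boundary of $U$ at least once between levels $s_0$ and $r$. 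Counting these vertical crossings yields $|\delta U|\ge m$. Either way $|\delta U|\ge\min(1,\ell\beta)\cdot m$.

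The main obstacle I anticipate is making the ``otherwise'' case fully rigorous: from $\min(t,|U_{s_0}\cap P|,|P\setminus U_{s_0}|)<m$ and $\min(t,|U_r\cap P|,|P\setminus U_r|)\ge m$ one needs that the symmetric difference $U_{s_0}\triangle U_r$ contains at least $m$ vertices of $P$ (or at least $t$, but $m\le t$). This requires a short combinatorial argument: the quantity $\min(|X\cap P|,|P\setminus X|)$ can only drop below $m$ by either losing port vertices from $X$ or gaining them, and in both sub-cases $|U_{s_0}\triangle U_r|\cap P$ must be at least $m - (\text{the small slice's separating value}) \ge m - (m-1)$... — actually one wants the stronger bound that it is $\ge m$, which follows because if, say, $|U_{s_0}\cap P|<m\le|U_r\cap P|$ then $U_r\setminus U_{s_0}$ contains $\ge |U_r\cap P|-|U_{s_0}\cap P|\ge m - (m-1) = 1$ — too weak. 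The correct fix: I would instead do the level-by-level analysis more carefully, tracking a ``potential'' $\phi(s) = \min(t,|U_s\cap P|, |P\setminus U_s|)$ and observing $|\phi(s+1)-\phi(s)|\le (\text{vertical crossings between levels } s,s+1)$, then telescoping from the level where $\phi$ is smallest up to level $r$. This telescoping is the technical heart and is where I would spend the most care; the rest is routine.
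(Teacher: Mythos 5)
Your setup is the same as the paper's: split $\delta_{G\square J_\ell}U$ into horizontal edges (per-level boundaries $\delta_G U_s$) and vertical edges, apply $\beta_t(G,P)$ slice-by-slice, and compare slices across levels via the vertical crossings. The gap is in your case analysis. The disjunction ``$|\delta U|\ge \ell\beta m$ or $|\delta U|\ge m$'' is a fine target, but your case (ii) claim — that when some slice has $\phi(s_0):=\min(t,|U_{s_0}\cap P|,|P\setminus U_{s_0}|)<m$ the vertical crossings alone force $|\delta U|\ge m$ — is false. Take $G$ a long path with $P=V$, $t$ large (so $\beta$ is tiny), $\ell=2$, $r=2$, $U_2$ a contiguous block of size $m\le |V|/2$ and $U_1$ the same block minus one endpoint: then $\phi(1)=m-1<m=\phi(2)$, the vertical crossings number $1$, and $|\delta U|$ is $O(1)$ while $m$ can be $\Theta(|V|)$. (The lemma is not violated, because $\min(1,\ell\beta)m=O(1)$ here.) Your proposed telescoping fix, $|\phi(s+1)-\phi(s)|\le$ crossings between levels $s,s+1$, is correct and is exactly the paper's ``triangle inequality,'' but it only yields vertical crossings $\ge m-\phi_{\min}$ where $\phi_{\min}=\min_s\phi(s)$ — not $\ge m$ — so it does not rescue the stated dichotomy, and ``the rest'' is not routine under your structure.

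The missing step is that you must mix the two edge families rather than win with one of them alone: combining the horizontal bounds $\sum_s|\delta_G U_s|\ge\beta\sum_s\phi(s)\ge\ell\beta\,\phi_{\min}$ with the telescoped vertical bound gives
\begin{equation}
|\delta_{G\square J_\ell}U|\;\ge\;\ell\beta\,\phi_{\min}+\bigl(m-\phi_{\min}\bigr),
\end{equation}
and then a final split on $\ell\beta\gtrless 1$: if $\ell\beta\ge 1$ the right-hand side is $\ge m$, and if $\ell\beta<1$ it is $\ge\ell\beta\,\phi_{\min}+\ell\beta(m-\phi_{\min})=\ell\beta m$. This is precisely how the paper's proof concludes (its two cases $\ell\beta\le 1$ and $\ell\beta>1$, using $\phi(j)+\sum_i|u_{i-1}+u_i|\ge\phi(r)$). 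So your ingredients are right and the argument is repairable, but as written the case-(ii) step would fail and the decisive combination of horizontal and vertical contributions is absent.
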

Similar versions of this lemma have been proved in Refs.~\cite{cross2024improved, williamson2024low,swaroop2024universal}. 
We include a proof in Appendix~\ref{apdx:proofs}.

\noindent \textbf{Proof of Lemma~\ref{lem:measurement_graph_construction}.}
We count the degree of $G$, and thereby bound the number of edges. 
The graph is initially empty. 
After Step~\ref{construction:matching}, every $v\in V_1$ has degree at most $\Delta$, because every stabilizer $S$ acting on $f^{-1}(v)$ adds at most one edge to $v$. 
After Step~\ref{construction:expand}, the max degree in $G_1$ is at most $\Delta + \delta$, and the number of edges is at most $|E_1|\le (\Delta+\delta)|L|/2$.
In Step~\ref{construction:decongestion}, we compute a cycle basis $\MR$ of congestion $\rho$ of $G_1$ and partition it into $t$ non-overlapping sets, $\MR = \bigcup_{i=1}^t \MR_i$.
By thickening in Step~\ref{construction:thicken}, the total number of edges in $G$ is at most 
\begin{align}
    |E| 
    &\le \ell\cdot |E_1| + (\ell-1)\cdot |V_1|.
\end{align}
In Step~\ref{construction:cellulation}, observe that when we cellulate a cycle $C$ of length $w$, we add $w-3$ edges and add degree at most 2 to any vertex in the cycle.
The cycles we cellulate on each level is non-overlapping, which means they have at most $|E_1|$ edges in total. 
Therefore, cellulation adds at most $|E_1|\cdot t\le |E_1|\cdot \ell$ edges to $G$, and the final number of edges is at most
\begin{align}\label{eqn:space_overhead}
    |E| 
    &\le 2\ell\cdot |E_1| + (\ell-1)\cdot |V_1|
    \le \left[\ell(\Delta+\delta+1)\right] |L|.
\end{align}
For a vertex $v\times \{r\}$ in any level $r$ of $G$, its degree before cellulation is at most $\Delta+\delta+2$, where two of its edges are connecting to its copies in the previous and next level.
Since the cycles in $\MR_r$ we cellulate are non-overlapping, $v\times \{r\}$ can be in at most $\frac{\Delta+\delta}{2}$ cycles, which means cellulation adds degree at most $\Delta+\delta$ to any vertex. 
We see that the maximum degree in $G$ is at most $2(\Delta+\delta+1)$.

By construction, we see that $G$ satisfy Desideratum~\ref{des:connected} and~\ref{des:degree}.
Let $\MA_r$ denote the set of triangles on level $r$ we obtain from cellulating the cycles in $\MR_r\times \{r\}$.
The set of cycles of $G$ which we measure is
\begin{equation}
    \MT \cup \MA_1 \cup \cdots \cup \MA_\ell.
\end{equation}
This is a cycle basis of $G$ by Fact~\ref{fact:thickened_cycles} and the fact that cellulation  breaks one cycle into a sum of triangles over $\FF_2$.
This set of cycle has congestion $2$, and every cycle has length $3$ or $4$, which satisfy desideratum~\ref{des:cycle_basis}.
Moreover, desideratum~\ref{des:path_matching} is satisfied by the perfect matchings constructed in Step~\ref{construction:matching}.

For desideratum~\ref{des:relative_expansion}, at step~\ref{construction:expand} we have that the graph $G_1$ has Cheeger constant $\beta_{G_1}\ge 1/\beta$, which means $\beta_{|L|}(G_1, V_1)\ge 1/\beta$.
By the Thickening Lemma (Lemma~\ref{lem:thicken_expansion}), after step~\ref{construction:thicken} we have $\beta_{|L|}(G, V_1\times \{1\})\ge 1$.
Since cellulation can only increase expansion, and $|L|\ge d$, we see that $\beta_d(G, f(L))\ge 1$.
\qed

\begin{remark}[Practical Considerations]
    It is important to note that this construction suffices for a theoretical proof of fault-tolerance, but is excessive for practical purposes.
    As mentioned in Remark~\ref{rmk:practical_conditions}, in practice many aspects of the desiderata can be relaxed. 
    We discuss techniques and considerations for practical constructions in Section~\ref{sec:toolkit_practical}.
\end{remark}

As discussed in Remark~\ref{rmk:multiple_code_blocks}, Lemma~\ref{lem:measurement_graph_construction} enables us to construct measurement graphs for arbitrary logical operators $\ML$ supported on one or many code blocks. 
Nonetheless, it is natural and beneficial to consider a more modular approach. 
Suppose we have the measurement graphs of two logical operators $\ML_1, \ML_2$. 
Can we connect them, with low cost, into a single measurement graph for the logical operator $\ML_1\ML_2$?
This problem was considered in Ref.~\cite{cross2024improved} (without the auxiliary graph framework) and later in Ref.~\cite{swaroop2024universal}.
Collectively, the two works developed the following solution, which we refer to as the \textbf{bridge/adapter system}. 
The same system are given two names for different use cases, as we explain later in this section.

\begin{definition}[Bridge/Adapter]\label{def:bridge}
    Given two graphs $G_1 = (V_1, E_1)$ and $G_2 = (V_2, E_2)$, a \textbf{bridge/adapter} between $G_1$ and $G_2$ is a set of non-overlapping edges $B$ between vertices $V_1$ and $V_2$.
\end{definition}

\begin{lemma}[Bridged Expansion]\label{lem:bridged_expansion}
    Suppose $G_1 = (V_1, E_1)$ and $G_2 = (V_2, E_2)$ have relative expansion $\beta_{t_1}(G_1,P_1)\ge 1$ and $\beta_{t_2}(G_2,P_2)\ge 1$ with respect to ports $P_1, P_2$.
    Let $B$ be a bridge between $P_1,P_2$. 
    The bridged graph $G = (V_1\cup V_2, E_1\cup E_2\cup B)$ has relative expansion $\beta_t(G, P_1\cup P_2)\ge 1$ for $t = \min(t_1,t_2,|B|)$.
\end{lemma}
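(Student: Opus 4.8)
The plan is to verify the relative Cheeger inequality directly for an arbitrary cut $U \subseteq V_1 \cup V_2$. Write $U_1 = U \cap V_1$ and $U_2 = U \cap V_2$. The edges of $G$ split into three groups: edges of $G_1$, edges of $G_2$, and bridge edges $B$. Since $B$ is a matching between $P_1$ and $P_2$, every bridge edge has one endpoint in $P_1$ and one in $P_2$; edge-boundary contributions from these three groups are disjoint, so $|\delta_G U| \ge |\delta_{G_1} U_1| + |\delta_{G_2} U_2|$, and we may (by discarding $B$) only help ourselves by ignoring the bridge boundary, or use it when needed. We want to show $|\delta_G U| \ge \min(t, |U \cap (P_1 \cup P_2)|, |(P_1\cup P_2)\setminus U|)$ with $t = \min(t_1,t_2,|B|)$.

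First I would handle the ``easy'' regime: if $\min(|U_1 \cap P_1|, |P_1 \setminus U_1|)$ and $\min(|U_2\cap P_2|,|P_2\setminus U_2|)$ are both at least, say, a sizeable fraction of what we need, we just add the two relative-expansion bounds $|\delta_{G_1}U_1| \ge \min(t_1, |U_1\cap P_1|, |P_1\setminus U_1|)$ and similarly for $G_2$, and check the sum dominates $\min(t, |U\cap(P_1\cup P_2)|, |(P_1\cup P_2)\setminus U|)$. The crux is the ``unbalanced'' regime, where one side — say $G_2$ — has $U_2$ almost entirely inside $P_2$ or almost entirely outside, so $\delta_{G_2} U_2$ gives essentially nothing. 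In that case the port vertices of $P_2$ are nearly all on one side of the cut, and the bridge edges $B$ (which connect $P_1$ to $P_2$ bijectively on a size-$|B|$ subset) become the workhorses: a bridge edge $(p_1,p_2) \in B$ is cut whenever $p_1 \in U_1 \iff p_2 \notin U_2$ fails, i.e. exactly when the two endpoints disagree. So I would lower-bound $|\delta_G U| \ge |\delta_{G_1}U_1| + |B \cap \delta_G U|$ and argue that whenever $G_1$ fails to certify enough boundary, $U_1 \cap P_1$ is also lopsided, hence the matching $B$ forces many disagreements with the (oppositely) lopsided $U_2 \cap P_2$.

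The cleanest way to organize the unbalanced case is: let $a = \min(|U\cap(P_1\cup P_2)|, |(P_1\cup P_2)\setminus U|)$, and WLOG $a = |U\cap(P_1\cup P_2)| = |U_1\cap P_1| + |U_2\cap P_2|$ (the other subcase is symmetric by swapping $U$ with its complement). If $|U_1 \cap P_1| \ge \min(t_1, |P_1\setminus U_1|)$ then $|\delta_{G_1}U_1| \ge \min(t_1,|U_1\cap P_1|,|P_1\setminus U_1|)$ already gives a term $\ge \min(t_1, |U_1\cap P_1|)$; combined with the $G_2$ bound one finishes by a short case check against $\min(t, a)$. Otherwise $|P_1 \setminus U_1| < \min(t_1,|U_1\cap P_1|)$ is small, meaning $P_1$ is almost entirely inside $U$; then every bridge edge $(p_1,p_2)$ with $p_2 \notin U_2$ is cut, and the number of such $p_2$ is at least $|B| - |U_2\cap P_2|$ minus the few $p_1 \notin U_1$, so $|B\cap\delta_G U| \gtrsim |B| - |U_2\cap P_2| - |P_1\setminus U_1|$. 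Feeding this together with the $G_2$ bound $|\delta_{G_2}U_2| \ge \min(t_2,|U_2\cap P_2|,|P_2\setminus U_2|)$ and the smallness of $|P_1\setminus U_1|$, one checks the total is at least $\min(t,a)$; the bound $t \le |B|$ is exactly what makes the bridge contribution sufficient when $a$ is close to $|B|$.

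The main obstacle I anticipate is the bookkeeping in this unbalanced case: one has to be careful that the ``slack'' vertices ($P_1 \setminus U_1$ on the almost-full side, and the mismatches routed through the matching $B$) are not double-counted against the different edge groups, and that every one of the several sub-regimes (which of $t_1,t_2,|B|$ is the binding constraint; whether $a$ is realized by $U$ or its complement; which side is the lopsided one) actually closes against $\min(t, a)$. I would push all of this through by always bounding $|\delta_G U|$ below by a sum of a $G_1$-term, a $G_2$-term, and a $B$-term, each a $\min$ of simple cardinalities, and then doing a finite case analysis on which argument of each $\min$ is smallest — a routine but slightly tedious LP-style argument. I'd also double-check the boundary case $a \ge t$ (where we only need $|\delta_G U| \ge t$) since there the matching bound $|B| \ge t$ does most of the work.
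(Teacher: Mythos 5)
The paper never proves this lemma in-house: it states it and defers to Lemma~9 of Ref.~\cite{swaroop2024universal}, so there is no in-paper proof to compare against. Judged on its own, your plan is the natural direct verification and it does close. Writing $U_1=U\cap V_1$, $U_2=U\cap V_2$, you have $|\delta_G U|=|\delta_{G_1}U_1|+|\delta_{G_2}U_2|+|B\cap\delta_G U|$, the two hypotheses give $|\delta_{G_i}U_i|\ge\min(t,|U_i\cap P_i|,|P_i\setminus U_i|)$ since $t\le t_1,t_2$, and one cases on which argument of each $\min$ binds: if either side yields $t$, done; if both yield the ``inside'' terms, the sum is at least $|U\cap(P_1\cup P_2)|$; if both yield the ``outside'' terms, at least $|(P_1\cup P_2)\setminus U|$; and in the mismatched case (say side $1$ contributes $|U_1\cap P_1|$ and side $2$ contributes $|P_2\setminus U_2|$), every uncut bridge edge has either both endpoints in $U$ (so its $P_1$ endpoint counts toward $|U_1\cap P_1|$) or both outside (so its $P_2$ endpoint counts toward $|P_2\setminus U_2|$), whence $|B\cap\delta_G U|\ge |B|-|U_1\cap P_1|-|P_2\setminus U_2|$ and the three contributions sum to at least $|B|\ge t$. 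That is precisely the role you assign to the hypothesis $t\le|B|$, and it is the heart of the argument.

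One bookkeeping caution on your sketched case split. From $|U_1\cap P_1|\ge\min(t_1,|P_1\setminus U_1|)$ it does \emph{not} follow that $|\delta_{G_1}U_1|\ge\min(t_1,|U_1\cap P_1|)$ (take $t_1=10$, $|U_1\cap P_1|=5$, $|P_1\setminus U_1|=3$: the hypothesis only certifies $3$), and in that configuration the bridge may still be needed even though you placed it in the ``easy'' branch; also your ``otherwise'' condition $|P_1\setminus U_1|<\min(t_1,|U_1\cap P_1|)$ is not the complement of the first condition, so the two branches as stated neither partition nor individually close. The four-way analysis above (on which argument of each side's $\min$ is smallest, with the two lopsided orientations handled symmetrically) repairs this cleanly, needs no WLOG on whether $|U\cap(P_1\cup P_2)|$ or its complement is smaller, and is the ``LP-style finite case analysis'' you anticipated.
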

We refer readers to Lemma~9 of Ref.~\cite{swaroop2024universal} for a proof.
Evidently, adding a bridge $B$ of edges between two disconnected graphs also creates $|B|-1$ new basis cycles in the bridged graph, which we need to measure as cycle checks.
If these cycles are high weight or have high congestion, the bridged system may not be LDPC.
The following lemma shows that given two measurement graphs, we can always find a bridge system that induces a desirable basis of cycles. 

\begin{lemma}[Sparsity of Bridge~\cite{swaroop2024universal}]
    \label{lem:sparse_bridge}
    Consider two graph $G_1 = (V_1, E_1)$ and $G_2 = (V_2, E_2)$ with vertex subsets $S_1, S_2$, such that $S_1, S_2$ induce connected subgraphs in $G_1,G_2$ respectively.
    Suppose $G_1,G_2$ have cycle bases with congestion $\rho$ and maximum length $\gamma$. 
    For any integer $b\le \min(|S_1, S_2|)$, we can efficiently find a bridge $B$ of size $b$ between $S_1, S_2$ such that the joined graph $G = (V_1\cup V_2, E_1\cup E_2\cup B)$ has a cycle basis with congestion at most $\rho+2$ and maximum length $\max(\gamma, 8)$.
\end{lemma}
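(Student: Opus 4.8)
The plan is to reduce the statement to a purely combinatorial fact about two trees joined by a matching, solve that with a coordinated tree traversal — the ``SkipTree'' procedure — and then lift back. First I would fix spanning trees $T_1$ of $G_1[S_1]$ and $T_2$ of $G_2[S_2]$ and, inside each, prune leaves to obtain a connected subtree on exactly $b$ vertices; call these $T_1', T_2'$, with vertex sets $S_1', S_2'$ (possible since $b \le \min(|S_1|,|S_2|)$). The bridge $B$ will always be a perfect matching between $S_1'$ and $S_2'$. A cycle-rank count ($|E|-|V|+$ number of connected components) shows that adjoining any such matching to $G_1 \sqcup G_2$ merges the component containing $S_1$ with the one containing $S_2$ and increases the cycle rank by exactly $b-1$; moreover, because $T_1'$ is a connected subtree containing every bridge endpoint, every cycle of $G = G_1 \cup G_2 \cup B$ agrees, modulo the span of $R_1 \cup R_2$, with one supported on $T_1' \cup T_2' \cup B$. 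Hence it suffices to produce (i) the matching $B$ and (ii) a set $\mathcal{D}$ of $b-1$ cycles inside $T_1' \cup T_2' \cup B$ that are linearly independent, each of length at most $8$, and that together use each edge of $T_1' \cup T_2' \cup B$ at most twice. Then $R_1 \cup R_2 \cup \mathcal{D}$ is a cycle basis of $G$: the cycles in $\mathcal{D}$ touch only tree edges and bridge edges (each at most twice), tree edges already carry congestion at most $\rho$ from $R_1 \cup R_2$, and all remaining edges are untouched by $\mathcal{D}$, so the congestion is at most $\rho+2$ and the maximum length is $\max(\gamma,8)$.

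The heart is the construction of $B$ and $\mathcal{D}$. The prototype is the case in which $T_1'$ and $T_2'$ may be chosen to be paths (e.g.\ whenever $G_1[S_1], G_2[S_2]$ contain long paths, as the expander-based measurement graphs of Lemma~\ref{lem:measurement_graph_construction} do): take $B$ to be the ``identity'' matching of the two length-$b$ paths, so that $T_1' \cup T_2' \cup B$ is a ladder (the Cartesian product of a length-$b$ path with $J_2$), and let $\mathcal{D}$ be its $b-1$ rung squares — each of length $4$, each edge used at most twice. In general I would root $T_1'$ and $T_2'$, traverse them together depth-first, and greedily build $B$ so that matched tree edges yield length-$4$ squares along the spines of the two trees; at a vertex whose degree in $T_1'$ (or $T_2'$) exceeds $2$ the two traversals fall out of sync, and there one inserts a short ``sibling gadget'': rather than spending the bridge edge at the branch vertex once per child (which would overload it), pair the children of the branch vertex along a path and join consecutive children by detour cycles of length at most $6$ that run through the two tree edges to those children together with the children's bridge edges, using the branch vertex's own bridge edge at most once. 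Linear independence of the $b-1$ cycles of $\mathcal{D}$ is then checked by projecting each to its (even, nonempty) set of bridge edges and verifying that these $b-1$ patterns span the $(b-1)$-dimensional image of the cycle space of $G$ under restriction to $B$; all of this is an efficient algorithm.

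The genuine difficulty, I expect, is discharging the length and congestion bounds at the same time in the SkipTree step. The two obvious strategies conflict: fundamental cycles through a single fixed bridge edge, or the square cycles of the prism $T_1' \,\square\, J_2$ when the two trees happen to be isomorphic, keep every cycle of length at most $4$ but inflate the congestion of the bridge edges incident to high-degree tree vertices up to their degree; while ``unrolling'' the bridge edges into one long path and pairing consecutive ones keeps congestion at $2$ but forces some cycle to traverse an entire branch of the tree, hence to be long. The content of the SkipTree is the intermediate regime — distributing the $b-1$ cycles so that every bridge edge and every tree edge is used at most twice while no cycle exceeds length $8$ (two bridge edges plus at most three tree edges on each side). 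Bounded degree of the LDPC graphs $G_1,G_2$ is what makes the local sibling-rerouting affordable, and the constants $2$ and $8$ emerge from a short case analysis separating spine edges from branch vertices. For the full construction and these estimates I would appeal to Ref.~\cite{swaroop2024universal}.
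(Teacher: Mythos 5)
The paper itself does not prove Lemma~\ref{lem:sparse_bridge}: it is imported from Ref.~\cite{swaroop2024universal} (Lemma~10 there, proved via the SkipTree algorithm), so there is no in-paper argument to compare you against. Your outer reduction is sound and matches the standard route: pass to spanning trees of the induced subgraphs, take the bridge to be a matching on $b$ ports per side, note by the rank count $|E_1|+|E_2|+b-|V_1|-|V_2|+1$ that exactly $b-1$ new independent cycles are needed, certify independence by restricting to the bridge coordinates (where $R_1\cup R_2$ vanishes and the even-weight subspace has dimension $b-1$), and do the bookkeeping showing that $b-1$ new cycles of length at most $8$ using each tree edge and each bridge edge at most twice give congestion at most $\rho+2$ and maximum length $\max(\gamma,8)$. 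All of that is correct and is exactly how the lemma is meant to be used.

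The gap is where you yourself locate it, and your sketch does not close it. Your sibling gadget builds, for consecutive children $c_j,c_{j+1}$ of a branch vertex $v$ of $T_1'$, a cycle out of the two bridge edges at $c_j,c_{j+1}$ and the two tree edges $c_jv$, $vc_{j+1}$; but any cycle containing two bridge edges must also contain a nonempty path in $T_2'$ joining their partners, and your length-$6$ budget leaves at most two edges for that path. Nothing in the coordinated-DFS description forces those partners to be that close --- this is precisely the desynchronization you flag and then defer. The construction in Ref.~\cite{swaroop2024universal} removes the need to synchronize at all: SkipTree is run on each tree \emph{separately} and outputs an ordering of the $b$ ports in which consecutive ports are at tree distance at most $3$ and the connecting paths cover each tree edge at most twice; matching the $i$-th ports of the two orderings then makes the $i$-th new cycle consist of two bridge edges plus a path of at most $3$ edges on each side, giving length at most $8$ and congestion increase at most $2$ with no cross-tree coordination. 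Since you explicitly appeal to the reference for exactly this step, your proposal is a correct framing plus a citation rather than a self-contained proof; the per-tree bounded-hop ordering is the one ingredient you would still need to construct (or cite, as the paper itself does).
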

This lemma is proved as Lemma~10 in Ref.~\cite{swaroop2024universal}, using a novel SkipTree algorithm.\footnote{It is not known whether this algorithm, and therefore Lemma~\ref{lem:sparse_bridge}, can be extended to arbitrary hypergraphs.}
Combining the two lemmas above, we have the desired primitive which connects two (and more) measurement graphs into bigger measurement graphs.
\begin{restatable}[Bridging Lemma~\cite{swaroop2024universal}]{lemma}{BridgingLemma}\label{lem:bridge}
    Suppose $G_1 = (V_1, E_1), G_2 = (V_2, E_2)$ and $f_1: L_1\rightarrow P_1, f_2: L_2\rightarrow P_2$ satisfy the graph desiderata for operators $\ML_1, \ML_2$ with non-overlapping support ($L_1\cap L_2 = \varnothing$). 
    We can efficiently compute a bridge $B$ of $d$ edges between $P_1$, $P_2$, which connects $G_1,G_2$ into the graph $G = (V_1\cup V_2, E_1\cup E_2\cup B)$.
    Let $f: L_1\cup L_2\rightarrow P_1\cup P_2$ be the port function where $f(q) = f_i(q)$ for $q\in L_i$.
    Then $G$ and $f$ satisfy the graph desiderata for the product operator $\ML_1\ML_2$.
\end{restatable}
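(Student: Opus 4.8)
The plan is to build the bridge $B$ with a single application of the Sparsity-of-Bridge lemma (Lemma~\ref{lem:sparse_bridge}), which handles the combinatorial (LDPC-type) desiderata, and then to obtain relative expansion from the Bridged-Expansion lemma (Lemma~\ref{lem:bridged_expansion}) applied to that same $B$; the remaining desiderata for $(G,f)$ with respect to $\ML_1\ML_2$ are inherited directly from $G_1$ and $G_2$.

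First I would fix the bridge. Since $G_1,G_2$ satisfy the desiderata, each has a cycle basis of congestion $O(1)$ and maximum length $O(1)$; let $\rho,\gamma$ be common bounds. Each port $P_i$ has size $|L_i|\ge d$ (the measured operators being nontrivial logicals), and for every measurement graph produced by Lemma~\ref{lem:measurement_graph_construction} the port is a single thickening level and hence induces a connected subgraph, so we may take connected subsets $S_i\subseteq P_i$ with $|S_i|\ge d$ (for instance $S_i=P_i$). Applying Lemma~\ref{lem:sparse_bridge} with these $S_1,S_2$ and $b=d$ produces, efficiently, a bridge $B$ of exactly $d$ edges with one endpoint in $S_1\subseteq P_1$ and one in $S_2\subseteq P_2$, such that $G=(V_1\cup V_2,\,E_1\cup E_2\cup B)$ has a cycle basis of congestion at most $\rho+2=O(1)$ and maximum length $\max(\gamma,8)=O(1)$; this is Desideratum~\ref{des:cycle_basis}. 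Moreover $G$ is connected, since $G_1,G_2$ are connected and $B$ has at least one edge across the cut, so Desideratum~\ref{des:connected} applies to $\ML_1\ML_2$ on the joint code $\qcode_1\cup\qcode_2$; and since $B$ is a matching, every vertex gains degree at most $1$, so $\Delta(G)\le 1+\max(\Delta(G_1),\Delta(G_2))=O(1)$, which is Desideratum~\ref{des:degree}.

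It remains to check the path-matching desideratum, injectivity of $f$, and relative expansion. For a check $S\in\MS_1$ of the joint code, the disjointness $L_1\cap L_2=\varnothing$ and the fact that each code acts trivially on the other's qubits give $K(S,\ML_1\ML_2)=K(S,\ML_1)\subseteq L_1$, so we keep the original path matching $\mu(S,\ML_1)\subseteq E_1$ (and symmetrically for $\MS_2$); the path matchings of the joint code are therefore the disjoint union of those of $G_1$ and $G_2$, use no bridge edge, and hence still have $O(1)$ edges each with every edge in $O(1)$ of them, giving Desideratum~\ref{des:path_matching}. The map $f$ with $f|_{L_i}=f_i$ is injective because $L_1\cap L_2=\varnothing$ and $V_1,V_2$ are disjoint. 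Finally, the desiderata for $G_1,G_2$ give $\beta_d(G_1,P_1)\ge 1$ and $\beta_d(G_2,P_2)\ge 1$, which are exactly the hypotheses of Lemma~\ref{lem:bridged_expansion} with $t_1=t_2=d$, and $B$ is a bridge between $P_1$ and $P_2$ with $|B|=d$; that lemma then yields $\beta_t(G,P_1\cup P_2)\ge 1$ with $t=\min(d,d,d)=d$, and since $P_1\cup P_2=f(L_1\cup L_2)$ this is Desideratum~\ref{des:relative_expansion} for $\ML_1\ML_2$. All desiderata for $(G,f)$ are thereby verified.

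The step I expect to be the main obstacle is the coordination in the second paragraph: Lemma~\ref{lem:sparse_bridge} needs the freedom to route the $d$ bridge edges through a connected region so that the newly created cycles stay short and of bounded congestion, whereas Lemma~\ref{lem:bridged_expansion} needs those same edges anchored on port vertices. The resolution is that Lemma~\ref{lem:sparse_bridge} permits us to confine the bridge endpoints to arbitrary prescribed connected subsets $S_i$, so we choose $S_i$ inside the ports; the only genuine thing to verify is that each port contains a connected subset of size at least $d$, which is where the structure of the measurement graphs built in Lemma~\ref{lem:measurement_graph_construction} enters. Once the single bridge is shown to serve both lemmas, the rest is bookkeeping.
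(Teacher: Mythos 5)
Your skeleton (Lemma~\ref{lem:sparse_bridge} for the new cycles, Lemma~\ref{lem:bridged_expansion} for relative expansion, original path matchings for the deformed checks) is the same as the paper's, but there is a genuine gap exactly at the step you flagged as the main obstacle. You justify the hypothesis of Lemma~\ref{lem:sparse_bridge} — that the bridge can be anchored in connected subsets $S_i\subseteq P_i$ — by appealing to the structure of graphs produced by Lemma~\ref{lem:measurement_graph_construction} (``the port is a single thickening level and hence induces a connected subgraph''). But the lemma's hypotheses are only that $(G_i,f_i)$ satisfy the graph desiderata of Theorem~\ref{thm:graph_desiderata}, and those desiderata do not guarantee that $P_i$ induces a connected subgraph; the paper explicitly notes this and treats it as the point requiring work. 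Its proof constructs, for an irreducible restriction $\ML_i'$ of $\ML_i$, a helper graph $A_i=(P_i',H_i)$ whose edges join port vertices connected by paths inside the path matchings $\mu(S,\ML_i')$, proves $A_i$ is connected using irreducibility (the sets $K(S,\ML_i')$ generate all even-weight vectors on $L_i'$, so any two port vertices are joined by an odd-degree-endpoint argument), and then applies Lemma~\ref{lem:sparse_bridge} to $A_1,A_2$, paying a factor $w$ (the maximum path-matching path length) when translating the length-$\le 8$ cycles and congestion bound back to $G$. Without this (or some substitute), your proof only establishes the lemma for measurement graphs built by the explicit construction, not for arbitrary graphs satisfying the desiderata, which is what the statement claims and what later uses (e.g.\ bridging extractors and partial extractors) rely on.

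A second, smaller inaccuracy: in the path-matching step you assume $K(S,\ML_1\ML_2)=K(S,\ML_1)$ because ``each code acts trivially on the other's qubits.'' The lemma does not assume $\ML_1,\ML_2$ live on different code blocks — bridges are used precisely to measure products of logicals on the \emph{same} block — so a single check $S$ may anticommute with both operators on disjoint qubit sets. The correct statement, as in the paper, is $K(S,\ML_1\ML_2)=K(S,\ML_1)\cup K(S,\ML_2)$ (using $L_1\cap L_2=\varnothing$), with path matching $\mu(S,\ML_1)\cup\mu(S,\ML_2)$; this is an easy fix and the sparsity conclusion survives, but as written your argument does not cover the same-block case. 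The relative-expansion and degree/connectivity parts of your proof are fine and match the paper.
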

We include a proof of this lemma, which is essentially the proof of Theorem~11 in Ref.~\cite{swaroop2024universal}, in Appendix~\ref{apdx:proofs} for completeness.
\begin{remark}\label{rmk:bridge_nuances}
    While the above formulation of Lemma~\ref{lem:bridge} suffices for the purpose of this paper, we note that there are many more nuanced ways to use bridges.
    For instance, in the lemma we assumed that the operators $\ML_1,\ML_2$ have non-overlapping support, which is not strictly necessary.
    In Section~3.6 and~3.7 of Ref.~\cite{cross2024improved},
    a bridge was added between the measurement hypergraphs of an $X$ operator and an anti-commuting $Z$ operator to perform a $Y$ measurement.
    In Ref.~\cite{swaroop2024universal}, Lemma~\ref{lem:bridge} is further stated in terms of \textit{sparsely overlapping} operators. 
    We suggest readers simply treat the bridge/adapter system as a modular approach to building measurement graphs, where the precise analysis can be done on a case-by-case basis.
\end{remark}

\begin{remark}\label{rmk:repeated_bridging}
    Evidently, Lemma~\ref{lem:bridge} can be applied iteratively to connect many measurement graphs into one global measurement graph. 
    This iterative application plays a crucial part in our later construction of the extractor architecture. 
\end{remark}

To distinguish between the names ``bridge'' and ``adapter'', in this work we refer to systems which connect measurement graphs for logical operators on the same code block or between blocks of the same code as \textbf{bridges}, and systems which connect measurement graphs for logical operators on different code families \textbf{adapters}. 
As discussed earlier, the adapters enable us to construct a universal fault-tolerant architecture based on different QLDPC code families. 
In comparison, prior code-switching schemes~\cite{bombin2016dimensional,breuckmann2017hyperbolic,Vasmer2019three} are all built with structurally analogous codes.

We summarize our toolkit with the following theorem, which is a straightforward combination of Theorem~\ref{thm:graph_desiderata}, Definition~\ref{def:measurement_protocol}, Theorem~\ref{thm:fault_distance} and Lemma~\ref{lem:measurement_graph_construction}.
\begin{theorem}[Ancillary Graph Surgery]\label{thm:ancillary_graph_surgery}
    Let $\ML$ be a logical operator supported on a LDPC code $\qcode$ (which could be made of multiple codes, see Remark~\ref{rmk:multiple_code_blocks}) with distance $d$. 
    Using $O(|L|(\log |L|)^3)$ ancilla qubits, we can construct another LDPC code $\bar{\qcode}$ such that code-switching between $\qcode$ and $\bar{\qcode}$ using the protocol of Theorem~\ref{thm:fault_distance} performs a logical measurement of $\ML$ on $\qcode$. 
    This protocol has fault distance $d$. 
\end{theorem}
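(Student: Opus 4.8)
The plan is to assemble the stated theorem by chaining together the toolkit results already established in this section, since Theorem~\ref{thm:ancillary_graph_surgery} is explicitly advertised as a straightforward combination of Theorem~\ref{thm:graph_desiderata}, Definition~\ref{def:measurement_protocol}, Theorem~\ref{thm:fault_distance}, and Lemma~\ref{lem:measurement_graph_construction}. Given the logical operator $\ML$ with support $L$ on the (possibly composite) LDPC code $\qcode$ of distance $d$, I would first invoke Lemma~\ref{lem:measurement_graph_construction} to produce a measurement graph $G = (V,E)$ and port function $f: L \rightarrow V$ that satisfy all three graph desiderata of Theorem~\ref{thm:graph_desiderata}. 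The same lemma gives the space bound: $|E| \le \ell(\Delta+\delta+1)|L| \le O(|L|(\log|L|)^3)$, where $\Delta,\delta$ are constants depending only on the LDPC parameters of $\qcode$ and the chosen expander, and $\ell = O((\log|L|)^3)$. Since there is one ancilla qubit per edge, this is exactly the claimed ancilla count.

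Next I would define $\bar{\qcode} = \qcode(\ML, G, f)$ via Definition~\ref{def:graph_and_code}. By Theorem~\ref{thm:graph_desiderata}, $\bar{\qcode}$ is well-defined, $\ML$ is a product of its stabilizers (Equation~\eqref{eq:measured_operator}), and because $G$ is connected (Desideratum~\ref{des:connected}) and satisfies the LDPC desiderata (\ref{des:degree}, \ref{des:cycle_basis}, \ref{des:path_matching}) together with the relative-expansion desideratum (\ref{des:relative_expansion}), $\bar{\qcode}$ is an LDPC code that encodes the remaining $k-1$ logical qubits and has distance at least $d$. Then I would apply the measurement protocol of Definition~\ref{def:measurement_protocol}, noting that noiseless execution yields the logical measurement of $\ML$ and the post-measurement state $\tfrac12(\mathbbm 1 + \sigma\ML)\ket{\Psi}$. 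Finally, Theorem~\ref{thm:fault_distance} upgrades this to a fault-tolerant protocol — $d$ rounds of $\qcode$ syndrome extraction before and after, $d$ rounds of $\bar{\MS}(\ML,G,f)$ measurement in the merge stage — with space-time fault distance $d$, which is the last claim of the theorem. For the multi-block case one simply appeals to Remark~\ref{rmk:multiple_code_blocks}, which states all of these results carry over verbatim when $\qcode$ is the disjoint union $\qcode_1 \cup \cdots \cup \qcode_B$ of codes from different families and $\ML$ is a logical Pauli on the joint code.

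There is essentially no genuine obstacle here: the theorem is a packaging statement, and every ingredient has been proved or cited earlier in the excerpt. The only mild care needed is bookkeeping — verifying that the ``$O(1)$'' constants appearing in the three LDPC desiderata, the congestion-$2$ and length-$4$ cycle basis of Lemma~\ref{lem:measurement_graph_construction}, and the constant-weight path matchings all depend only on fixed properties of $\qcode$ (maximum check weight $\omega$, maximum qubit degree $\Delta$) and on our freely chosen constant-degree expander $D$, so that $\bar{\qcode}$ is genuinely LDPC with parameters independent of the code size. One should also confirm that the relative Cheeger constant bound $\beta_d(G, f(L)) \ge 1$ coming out of Lemma~\ref{lem:measurement_graph_construction} (which uses $|L| \ge d$ and the Thickening Lemma~\ref{lem:thicken_expansion}) indeed feeds correctly into Desideratum~\ref{des:relative_expansion}, guaranteeing the distance-$d$ claim. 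Once these constant-tracking checks are in place, the proof is just a sentence-by-sentence invocation of the cited results.
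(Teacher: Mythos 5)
Your proposal is correct and follows exactly the route the paper takes: the paper presents this theorem as a direct combination of Lemma~\ref{lem:measurement_graph_construction} (graph construction and ancilla count), Theorem~\ref{thm:graph_desiderata} (LDPC and distance properties of $\bar{\qcode}$), Definition~\ref{def:measurement_protocol} and Theorem~\ref{thm:fault_distance} (fault-tolerant execution), with Remark~\ref{rmk:multiple_code_blocks} handling the multi-block case, which is precisely the chain you assembled. Your added bookkeeping about the constants and the relative-expansion bound is consistent with the paper's own statements and introduces nothing at odds with them.
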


\subsection{Surgery Toolkit: Practical Considerations}\label{sec:toolkit_practical}

The toolkit we have presented in this section is a theoretical blueprint, intended as a guide for surgery constructions in practice. 
However, it is important to note that almost every theoretical condition and proof we have written down is a loose upper bound.
In this section, we revisit the toolkit we have developed and discuss all our techniques from the practical perspective.

We start with Definition~\ref{def:graph_and_code}, which is our construction of measurement code $\qcode(\ML, G, f)$ from measurement graph $G$ and port function $f$. 
The purpose behind measuring a basis of cycle checks in $G$ is to ensure that $\bar{\qcode}$ does not have new, gauge logical qubits.\footnote{On a related note, this is why the scheme from Ref.~\cite{cross2024improved} was named the gauge-fixed surgery scheme -- all the gauge logical qubits were measured as stabilizers.} 
When these gauge qubits are not measured (or gauge-fixed), their operator can multiply with (or dress) the existing, unmeasured logical operators, lowering their weight and thereby the merged code distance. 
This is a problem observed in the CKBB scheme~\cite{cohen2022low} and one of the primary reasons behind their $O(d^2)$ space overhead of surgery.
Two notes are in order regarding measuring cycle checks in practice:
\begin{enumerate}[itemsep = 0pt]
    \item Some of these cycle checks may be redundant. More precisely, a cycle check $B_C$ may be a product of deformed stabilizers in the code $\qcode$ (Step~\ref{stab:modified_code_checks}). This is observed in constructions of ancilla systems for the $[[144,12,12]]$ bivariate bicycle code~\cite{cross2024improved,williamson2024low}, as well as the CKBB measurement hypergraphs on certain families of hypergraph product codes (Section~3.4 of~\cite{cross2024improved}).
    In these cases, the cycle check $B_C$ does not need to be explicitly measured. 

    \item Some gauge logical qubits do not hurt distance; in other words, the merged code distance may still be preserved with some cycle checks left out. This is observed in Ref.~\cite{cowtan2024ssip}, where various small-to-medium scale QLDPC codes are augmented by CKBB ancilla systems with number of levels less than $d$ (without measuring all cycle checks), yet their distances are still numerically estimated to be preserved.
\end{enumerate}
 
From Lemma~\ref{lem:measurement_graph_construction},
we see that the dominating factor in our space overhead comes from thickening which aims to decongest the cycles and potentially increase relative expansion. We again have several remarks in order.

\begin{enumerate}[itemsep = 0pt]
    \item Given the cycle checks we need to measure to preserve distance, there are many techniques one could apply to lower the overall congestion/overlap of these cycles. For instance, if too many cycles traverse an edge $e$, we can create a copy $e'$ and re-route half the cycles through $e'$, while adding a new cycle $(e,e')$.
    We note that finding cycle basis of low congestion or overlap is an interesting combinatorial optimization problem that to the best of our knowledge, has not been broadly studied beyond the decongestion lemma of Ref.~\cite{freedman2021building}.

    \item The intuition behind decongestion by thickening is to create more space so that the overlapping cycles can be spread out. Evidently, thickening the entire graph is conceptually simple yet practically unthrifty. In practice there are other techniques one could consider, such as thickening a subgraph.

    \item Relative expansion, as in desideratum~\ref{des:relative_expansion} of Theorem~\ref{thm:graph_desiderata}, is not strictly required for the merged code to preserve distance. More precisely, having (relative) Cheeger constant at least 1 is convenient for theoretical proofs but excessive in practice. In the merged code $\bar{\qcode} = \qcode(\ML, G, f)$, suppose $\ML$ is a $Z$ operator and $\ML'$ is another $Z$ operator that has overlapping support with $\ML$, $L'\cap L\ne \varnothing$. For $\bar{\qcode}$ to have distance $d$, we only need the vertex checks in $f(L'\cap L)$ to have large boundaries, not to have $\beta_d(G,P)\ge 1$ as in our constructions.
    It was observed in the 103-qubit system of Ref.~\cite{cross2024improved} that the measurement graphs were not expanding. Similarly, Ref.~\cite{williamson2024low} constructed a 41-qubit system on the $[[144,12,12]]$ BB code\footnote{Note that the 41-qubit system is a single measurement graph on one logical operator, while the 103-qubit system consists of two measurement graphs on four logical operators.} which is also not strictly expanding.
    These examples were nevertheless all proven to be distance preserving via integer programming. 
\end{enumerate}

With these in mind, we note that Lemma~\ref{lem:measurement_graph_construction} constructed a measurement graph with $O((\log |L|)^3)$ levels, while the 103-qubit system in Ref.~\cite{cross2024improved} and the 41-qubit system in Ref.~\cite{williamson2024low} both used only one level.
We expect this stark contrast between theoretical upper bounds and practical optimizations to persist in other QLDPC codes. 
When working with a specific code, one should consider co-designing the measurement graph given the code structures, or simply adding random edges as suggested in Ref.~\cite{williamson2024low}, or using a greedy algorithm as in Ref.~\cite{ide2024fault}.
Consequently, the connectivity overhead tracked in Lemma~\ref{lem:measurement_graph_construction} is also an overestimate.
Refs.~\cite{cross2024improved,williamson2024low} both accounted for the overall degrees of the merged code; Ref.~\cite{ide2024fault} also provided many detailed examples and discussions.

For time overhead, while Theorem~\ref{thm:fault_distance} asked for $3d$ rounds of syndrome measurement per logical measurement, in practice this should be based on benchmarking. In the 103-qubit system for the $[[144,12,12]]$ BB code, it was observed that 7 rounds of syndrome measurement balances the memory error rate and measurement error rate (Figure~10b of Ref.~\cite{cross2024improved}) on the distance 12 merged code. 
Decoding these syndrome information is yet another interesting question. In Ref.~\cite{cross2024improved}, a modular decoder was developed which can handle the syndrome information from the ancilla system and code $\qcode$ separately, which significantly improved the decoding time on the 103-qubit system (Figure~11 of Ref.~\cite{cross2024improved}).
Extending this result to the auxiliary graph surgery setting would be productive.
We further discuss the possibility of single-shot QLDPC surgery at the end of Section~\ref{sec:compilation}.

Overall, we believe the theoretical analysis in the preceding sections does not capture the overhead one would obtain through practical optimizations.

\section{Extractor Systems}\label{sec:extractors}

Prior works on QLDPC surgery, including the toolkit we presented above, are primarily concerned with constructing an ancilla system to measure a single, or a specified set of, logical operator(s). 
This approach has the inherent property of being addressable: the enabled logical measurements target specific logical qubits.
Such a fine-grained level of logical control is both amenable to implementation of generic algorithms, and uncommon among schemes of logical operations on QLDPC codes (see also the discussion in Section~1 of Ref.~\cite{he2025quantum}).
Nonetheless, this inherent addressability also comes with a notable downside: to measure different logical operators, or simply different bases of the same logical operators, we need to use different ancilla systems. 
This poses a considerable challenge in applying QLDPC surgery in practice.
Naively, if we build many ancilla systems to support a large family of logical measurements, the space and connectivity overhead could quickly become impractical.
If we consider rearranging physical qubit connectivity every time we perform a logical measurement, such as is possible in principle on a system of neutral atoms or shuttleable ions, the cost of rearrangement will incur a heavy time overhead. 
The proposed ancilla system in Ref.~\cite{cross2024improved} on the $[[144,12,12]]$ bivariate bicycle codes, which totals to 103 physical qubits,\footnote{This number accounts for both data and check qubits in the ancilla system.} can realize the full Clifford group on 11 out of 12 logical qubits when aided by automorphism gates. 
While it has a reasonable space and connectivity overhead, and does not require rearrangments of qubit connectivity, the automorphism gates again incur a time overhead. 
These challenges and tradeoffs call for a more wholistic approach to the design of such surgery ancilla systems. 

In this paper, we propose a construction of a single ancilla system which, when attached to the base code $\qcode$, enables logical measurement of any Pauli operator $\ML$ supported on the $k$ logical qubits of $\qcode$.
The ancilla system preserves the LDPC property of the base code $\qcode$.
We call this system a \textbf{single-block extractor}, or \textbf{extractor} for short. 
As we show, by joining multiple extractors together using bridge systems (Definition~\ref{def:bridge}), we can build a many-block fault-tolerant QLDPC architecture which, when supplemented by any magic state factory, is capable of performing universal Pauli-based computation.

\subsection{Single-block Extractor: Auxiliary Graph}\label{sec:single_block_extractor}

Consider an arbitrary quantum LDPC code $\qcode$. 
To build an extractor system on $\qcode$, we define the following extractor desiderata, which are similar to the graph desiderata of Theorem~\ref{thm:graph_desiderata}, with an important difference: the graph properties do not be dependent on any specific logical operator $\ML$; instead, its properties depend on $\qcode$ itself. 
\begin{definition}[Extractor Desiderata]
    \label{def:extractor_desiderata}
    Let $\qcode$ be a $[[n, k, d]]$ quantum code with physical qubits $Q$.
    Enumerate the stabilizer checks of $\qcode$ as $\MS = \{S_1, \cdots, S_m\}$.
    Consider a graph $\EXTG = (V, E)$ and an injective port function $F: Q\rightarrow V$.
    We refer to the following conditions on $\EXTG$ and $F$ as \textbf{extractor desiderata}.
    \begin{enumerate}[itemsep = 0pt]
        \item $\EXTG$ is connected.
        \label{ext-des:connected}

        \item \label{ext-des:LDPC} 
        \begin{enumerate}[itemsep = 0pt]
            \item The maximum degree of $\EXTG$ is $O(1)$; \label{ext-des:degree}

            \item There is a cycle basis $\MR$ of $\EXTG$ such that $\MR$ has congestion $O(1)$, and every cycle in $\MR$ has length $O(1)$. \label{ext-des:cycle_basis}

            \item There exists a collection of edge sets $\ME = \{E_1, \cdots, E_m\}$, $E_i\subseteq E$, such that 
            \begin{enumerate}
                \item For any even subset of qubits $K_i\subseteq Q$ in the support of $S_i$, there exists a path matching $\mu_i\subseteq E_i$ of $F(K_i)$. \footnote{Recall the notion of path matching from Definition~\ref{def:path_matching}.}
                \item Every $E_i$ has $O(1)$ edges and every edge in $E$ is in $O(1)$ sets $E_i$. 
            \end{enumerate}

            \label{ext-des:path_matching}
        \end{enumerate}

        \item $\beta_d(\EXTG, F(Q))\ge 1$. \label{ext-des:relative_expansion}
    \end{enumerate}
\end{definition}
The purpose behind this definition is evident from the following lemma.

\begin{lemma}[Extractor Lemma]
    \label{lem:extractor}
    Suppose $\EXTG, F$ satisfy the extractor desiderata of Definition~\ref{def:extractor_desiderata} with respect to code $\qcode$ supported on qubits $Q$. 
    Let $\ML$ be a logical operator of $\qcode$ with support $L\subseteq Q$. 
    Let $f_L = F\vert_L$, namely, the function $F$ with domain restricted to $L$. 
    Then $\EXTG, f_L$ satisfy the graph desiderata of Theorem~\ref{thm:graph_desiderata}.
\end{lemma}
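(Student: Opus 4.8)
The plan is to verify each of the three graph desiderata of Theorem~\ref{thm:graph_desiderata} for the pair $(\EXTG, f_L)$ by reading them off from the corresponding extractor desiderata of Definition~\ref{def:extractor_desiderata}, using the Restriction Lemma (Lemma~\ref{lem:restriction_relative_expansion}) to transfer the relative expansion bound from the full qubit set $Q$ down to the logical support $L$. The key observation is that all the extractor desiderata are stated with respect to the \emph{whole} code $\qcode$ and its port $F(Q)$, and restricting the port function to $L\subseteq Q$ only ever makes the graph-theoretic requirements easier to satisfy.

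First I would handle the cheap desiderata. Desideratum~\ref{des:connected} of Theorem~\ref{thm:graph_desiderata} asks only that $\EXTG$ be connected, which is exactly extractor desideratum~\ref{ext-des:connected}. For Desideratum~\ref{des:LDPC}, parts~\ref{des:degree} and~\ref{des:cycle_basis} are literally extractor desiderata~\ref{ext-des:degree} and~\ref{ext-des:cycle_basis} verbatim, with no dependence on $\ML$. The only part of the LDPC desideratum requiring a short argument is~\ref{des:path_matching}: for each stabilizer $S\in\MS$ we must exhibit a path matching $\mu(S,\ML)$ of $f_L(K(S,\ML))$ in $\EXTG$ with $O(1)$ edges, each edge lying in $O(1)$ such matchings. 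Here I would invoke extractor desideratum~\ref{ext-des:path_matching}: the set $K(S_i,\ML)$ is an even subset of qubits in the support of $S_i$ (evenness because $S_i$ and $\ML$ commute), and it is contained in $\supp(S_i)$, so by hypothesis there is a path matching $\mu_i\subseteq E_i$ of $F(K(S_i,\ML))$; since $f_L = F|_L$ and $K(S_i,\ML)\subseteq L$, this is a path matching of $f_L(K(S_i,\ML))$. Setting $\mu(S_i,\ML):=\mu_i$, the $O(1)$-size bound and the $O(1)$-overlap bound follow because $\mu_i\subseteq E_i$, $|E_i| = O(1)$, and every edge lies in $O(1)$ of the $E_i$'s.

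For Desideratum~\ref{des:relative_expansion} I would use the Restriction Lemma. Extractor desideratum~\ref{ext-des:relative_expansion} gives $\beta_d(\EXTG, F(Q))\ge 1$. Since $L\subseteq Q$, we have $f_L(L) = F(L)\subseteq F(Q)$, and applying Lemma~\ref{lem:restriction_relative_expansion} with $P = F(L)$, $P' = F(Q)$, and $t = t' = d$ yields $\beta_d(\EXTG, f_L(L)) \ge \beta_d(\EXTG, F(Q)) \ge 1$, which is precisely what Desideratum~\ref{des:relative_expansion} requires. Assembling the three pieces shows $(\EXTG, f_L)$ satisfies the graph desiderata. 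I do not anticipate a genuine obstacle here — the lemma is essentially a bookkeeping statement that the logical-operator-independent extractor conditions specialize correctly — but the one point deserving care is desideratum~\ref{des:path_matching}: one must check that $K(S,\ML)$ really is an even subset of $\supp(S)$ (so that the hypothesis of extractor desideratum~\ref{ext-des:path_matching}(i) applies) and that a path matching of $F(K)$ in the full graph restricts to one for $f_L$ without introducing stray vertices outside $f_L(L)$, which holds because $K\subseteq L$ and path matchings are defined purely in terms of parity of vertex visits.
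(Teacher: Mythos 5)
Your proposal is correct and follows essentially the same route as the paper's proof: the connectivity, degree, and cycle-basis desiderata carry over verbatim, the path-matching desideratum follows from extractor desideratum~\ref{ext-des:path_matching} applied to the even set $K(S,\ML)\subseteq \supp(S)\cap L$, and the relative-expansion desideratum follows from the Restriction Lemma~\ref{lem:restriction_relative_expansion} since $F(L)\subseteq F(Q)$. Your write-up is simply a more explicit version of the paper's (terser) argument, including the correct justification that $|K(S,\ML)|$ is even because $S$ and $\ML$ commute.
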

\begin{proof}
    The graph desiderata~\ref{des:connected},~\ref{des:degree},~\ref{des:cycle_basis} are the same as the respective extractor desiderata.
    Graph desideratum~\ref{des:path_matching} reduces to extractor desideratum~\ref{ext-des:path_matching} because for every check $S_i$, there is a path matching $\mu_i\subset E_i$ of $F(K(S, \ML))$ (recall that $K(S, \ML)$ is the set of qubits on which $S$ and $\ML$ anti-commutes).
    Graph desideratum~\ref{des:relative_expansion} reduces to extractor desideratum~\ref{ext-des:relative_expansion} by the Restriction Lemma~\ref{lem:restriction_relative_expansion}.
\end{proof}
The direct consequence of this lemma is that if we can construct a graph satisfying the extractor desiderata, then we can use it to perform fault-tolerant logical measurement of any logical operator $\ML$ of $\qcode$. 
We elaborate on this point in Section~\ref{sec:single_block_computation}.
Here, we show how to construct such a graph, following a procedure similar to Lemma~\ref{lem:measurement_graph_construction}.

\begin{lemma}[Extractor Construction]\label{lem:single_block_extractor}
    Let $Q$ denote the physical qubits of $\qcode$, enumerate $Q$ as $Q[1], \cdots, Q[n]$.
    \begin{enumerate}[itemsep = 0pt]
        \item Let $V_1 = \{v_1, \cdots, v_n\}$ be a set of vertices of size $n$.
        Define the port function $F(Q[i]) = v_i$.
        \item Construct a base graph $\EXTG_1 = (V_1, E_1)$ as follows: \label{extractor:base}
        \begin{enumerate}[itemsep = 0pt]
            \item For every stabilizer $S\in \MS$, suppose $S$ acts on qubits $Q[{s_1}], \cdots, Q[{s_\omega}]$. Add a cycle of $\omega$ edges with vertices $v_{s_1}, \cdots, v_{s_\omega}$ to $\EXTG_1$. 
            Denote this cycle $C(S)\subset E_1$.
            \label{extractor:stabilizers}
            \item Construct a constant degree graph $D$ on $n$ vertices with Cheeger constant $\beta_D\ge \beta$ for some constant $\beta$ of our choice. Add the edges of $D$ to $\EXTG_1$.
            \label{extractor:expand}
        \end{enumerate}
        \item \label{extractor:decongestion} Apply the Decongestion Lemma (Lemma~\ref{lem:decongestion}) and Corollary~\ref{cor:decongestion} to obtain a cycle basis of $\MR$ with congestion $\rho$ of $\EXTG_1$, 
        and a partition $\MR = \bigcup_{i=1}^t \MR_i$ such that each $\MR_i$ contains non-overlapping cycles and $t\le O((\log n)^3)$.
        \item \label{extractor:thicken} Thicken $\EXTG_1$ by $\ell = \max(t, 1/\beta)$ times to obtain $\EXTG = \EXTG_1\square J_\ell$, denote $\EXTG = (V, E)$.
        \item \label{extractor:cellulation} On every level $\EXTG\times \{r\}$, cellulate every cycle in $\MR_r$ to obtain a collection of triangles which generate the cycles $\MR_r\times \{r\}$.
    \end{enumerate}
    Since $\qcode$ is a LDPC code, we assume every stabilizer $S$ has weight at most $\omega$ and every qubit in $Q$ is checked by at most $\Delta$ stabilizers. 
    Suppose the expander graph we used in Step~\ref{extractor:expand} has maximum degree $\delta$.
    Then the constructed graph $\EXTG$ has maximum degree at most $4\Delta+2(\delta+1)$ and total edges at most $\ell(2\Delta+\delta+1)n \le O(n(\log n)^3)$. 
    The cycle basis we measure for $\EXTG$ has congestion $2$ and maximum length $4$.
    $\EXTG$ and $F$ satisfy the extractor desiderata of Definition~\ref{def:extractor_desiderata}.
\end{lemma}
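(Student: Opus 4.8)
The plan is to re-run the construction-and-bookkeeping argument from the proof of Lemma~\ref{lem:measurement_graph_construction}, with ``the support $L$ of a fixed operator $\ML$'' replaced throughout by ``all of $Q$'', and with one structural change: in Step~\ref{extractor:stabilizers} each weight-$\omega$ check contributes an entire length-$\omega$ cycle $C(S)$ (degree $2$ at each incident port vertex) rather than a single matching edge, and it is this cycle that will witness the path matchings demanded by extractor desideratum~\ref{ext-des:path_matching}. First I would do the counting. Starting from the empty graph on $V_1$: after Step~\ref{extractor:stabilizers}, $v_i = F(Q[i])$ lies on $C(S)$ for each of the at most $\Delta$ checks $S$ containing $Q[i]$, so $\deg_{\EXTG_1}(v_i)\le 2\Delta$ and $|E(\EXTG_1)| = \sum_{S\in\MS}|\supp(S)| \le \Delta n$ at this point; adding the bounded-degree expander $D$ of Step~\ref{extractor:expand} gives max degree $\le 2\Delta+\delta$ and $|E_1|\le \Delta n + \delta n/2$. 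Thickening by $\ell$ levels multiplies the horizontal edge count by $\ell$, adds $(\ell-1)n$ vertical edges, and raises every degree by at most $2$; cellulating the non-overlapping cycles of $\MR_r$ on each level $r$ adds at most $|E_1|$ edges per level and raises each vertex degree by at most its horizontal degree $\le 2\Delta+\delta$. Summing, $\EXTG$ has maximum degree $\le 2(2\Delta+\delta+1) = 4\Delta+2(\delta+1)$ and $|E|\le 2\ell|E_1| + (\ell-1)n \le \ell(2\Delta+\delta+1)n$, and since the Decongestion Lemma (Lemma~\ref{lem:decongestion}) and Corollary~\ref{cor:decongestion} give $t = O((\log n)^3)$ while $1/\beta = O(1)$, we get $\ell = O((\log n)^3)$ and hence $|E| = O(n(\log n)^3)$.

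Desiderata~\ref{ext-des:connected} and~\ref{ext-des:degree} are then immediate: $D$ may be taken connected (any expander is), and thickening and cellulation only add edges, so $\EXTG$ is connected; the degree bound is the count above. For desideratum~\ref{ext-des:cycle_basis}, Fact~\ref{fact:thickened_cycles} shows that the length-$4$ squares $\MT$ together with one copy $C_i\times\{r_i\}$ of each $C_i\in\MR$ (with $r_i$ chosen so that $C_i\in\MR_{r_i}$) form a cycle basis of $\EXTG_1\square J_\ell$; cellulating each $C_i\times\{r_i\}$ in Step~\ref{extractor:cellulation} replaces that single basis element by the $|C_i|-2$ triangles of a polygon triangulation, which span exactly the subspace it generated together with the new chords, so $\MT\cup\MA_1\cup\cdots\cup\MA_\ell$ is a cycle basis of $\EXTG$ of congestion $2$ and maximum length $4$, exactly as in the proof of Lemma~\ref{lem:measurement_graph_construction}. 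For desideratum~\ref{ext-des:relative_expansion}, since $D$ is a spanning subgraph of $\EXTG_1$ we have $\beta_n(\EXTG_1,V_1) = \beta(\EXTG_1)\ge\beta(D)\ge\beta$; the Thickening Lemma~\ref{lem:thicken_expansion} applied with $\ell\ge 1/\beta$ gives $\beta_n(\EXTG_1\square J_\ell, V_1\times\{1\})\ge\min(1,\ell\beta) = 1$; cellulation only adds edges, so $\beta_n(\EXTG, F(Q))\ge 1$; and the Restriction Lemma~\ref{lem:restriction_relative_expansion} with $d\le n$ yields $\beta_d(\EXTG,F(Q))\ge\beta_n(\EXTG,F(Q))\ge 1$.

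The substantive step is desideratum~\ref{ext-des:path_matching}. For the check $S_i$ acting on $Q[s_1],\dots,Q[s_{\omega_i}]$, set $E_i := C(S_i)\times\{1\}$, the copy of its stabilizer cycle on the level $1$ where the port $F(Q) = V_1\times\{1\}$ lives. Then $|E_i| = \omega_i\le\omega = O(1)$; and each edge added in Step~\ref{extractor:stabilizers} lies, on level $1$, in exactly one stabilizer cycle, while every other edge of $\EXTG$ --- expander edges, edges on levels $\ge 2$, vertical edges, cellulation chords --- lies in none, so every edge of $\EXTG$ is in at most one of the sets $E_i$. Given any even $K\subseteq\supp(S_i)$, list the vertices of $F(K)$ in the cyclic order in which they appear along $C(S_i)$, say $w_1,\dots,w_{2p}$, and let $\mu_i$ be the union of the $C(S_i)$-arcs from $w_1$ to $w_2$, from $w_3$ to $w_4$, $\dots$, from $w_{2p-1}$ to $w_{2p}$, each arc routed through the vertices lying between its endpoints so that consecutive chosen arcs are edge-disjoint. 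Each arc contributes parity $1$ to each of its two endpoints and parity $0$ to every vertex strictly between them, and no vertex of $F(K)$ lies strictly between consecutive $w_j$; hence $\mu_i\subseteq E_i$ is a path matching of $F(K)$, which is exactly what desideratum~\ref{ext-des:path_matching} requires (and, via Lemma~\ref{lem:extractor}, is what makes $\EXTG$ work for every logical operator $\ML$, since $K(S_i,\ML)$ is always such an even subset of $\supp(S_i)$). If Step~\ref{extractor:stabilizers} ever produces a parallel edge (two checks sharing a consecutive port pair) or a cellulation chord duplicates an existing edge, we subdivide to restore simplicity for the Decongestion Lemma; this inflates edge counts and path-matching lengths by only an $O(1)$ factor.

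I expect the only genuine obstacle to be exactly this last desideratum: arranging that a \emph{single}, $O(1)$-size edge set per check supplies path matchings for \emph{all} even subsets of that check's support simultaneously --- necessary because the extractor cannot depend on the operator being measured --- while keeping $\EXTG_1$ simple so that the Decongestion Lemma applies. The consecutive-arc observation is what makes a lone stabilizer cycle sufficient; once it is in place, everything else is a line-by-line re-run of the proof of Lemma~\ref{lem:measurement_graph_construction} with the stabilizer-cycle contribution in place of the matching contribution, funneled through the reduction in Lemma~\ref{lem:extractor}.
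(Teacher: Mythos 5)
Your proposal is correct and follows essentially the same route as the paper: re-run the proof of Lemma~\ref{lem:measurement_graph_construction} with $L$ replaced by $Q$, take $E_i = C(S_i)$ as the universal path-matching supply for each check (the paper likewise observes that any even subset of $\supp(S_i)$ admits a matching inside the cycle $C(S_i)$, of weight at most $\omega/2$), and track the degree/edge counts with the stabilizer cycles contributing $2\Delta$ instead of $\Delta$. Your explicit alternating-arc construction of the matchings and the remark about subdividing parallel edges to keep $\EXTG_1$ simple for the Decongestion Lemma are details the paper leaves implicit, but they do not change the argument.
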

\begin{proof}
    Note that the same proof as for Lemma~\ref{lem:measurement_graph_construction} applies, except for changes to Step~\ref{extractor:stabilizers} and therefore extractor desideratum~\ref{ext-des:path_matching}.
    The sets $E_i$ required are precisely the cycles $C(S)$ we added; they have size bounded by $\omega$ and do not overlap.
    For each $K_i$, since $|K_i|$ is even, we can find a path matching $\mu_i$ for $K_i$ inside the cycle $C(S_i)$. 
    These matchings have weight at most $\omega/2$.
    Since we add a cycle of edges for every stabilizer, and every qubit in $Q$ is checked by at most $\Delta$ stabilizers, the degree of vertices in $\EXTG_1$ after Step~\ref{extractor:base} is at most $2\Delta+\delta$.
    The rest of the arguments follow as in Lemma~\ref{lem:measurement_graph_construction}.
\end{proof}

As in the case of measurement graph desiderata and construction, our extractor desiderata and construction suffice for theoretical analysis, but are excessive for practical purposes.
We discuss techniques and considerations for practical use of extractors in Section~\ref{sec:extractor_practical}.

\subsection{Single-block Extractor: Computation System with Fixed Connectivity}\label{sec:single_block_computation}

We now discuss how to build a QLDPC computational block from $\qcode$ and extractor graph $\EXTG$, which encodes the $k$ qubits from $\qcode$ and support logical measurement of any logical operator $\ML$ supported on these logical qubits.
More precisely, we describe a system of data and check qubits with fixed connectivity and show that this system can implement all measurement codes $\qcode(\ML, \EXTG, F\vert_L)$ (recall Definition~\ref{def:graph_and_code}).

\begin{figure}
    \centering
    \includegraphics[width = \textwidth]{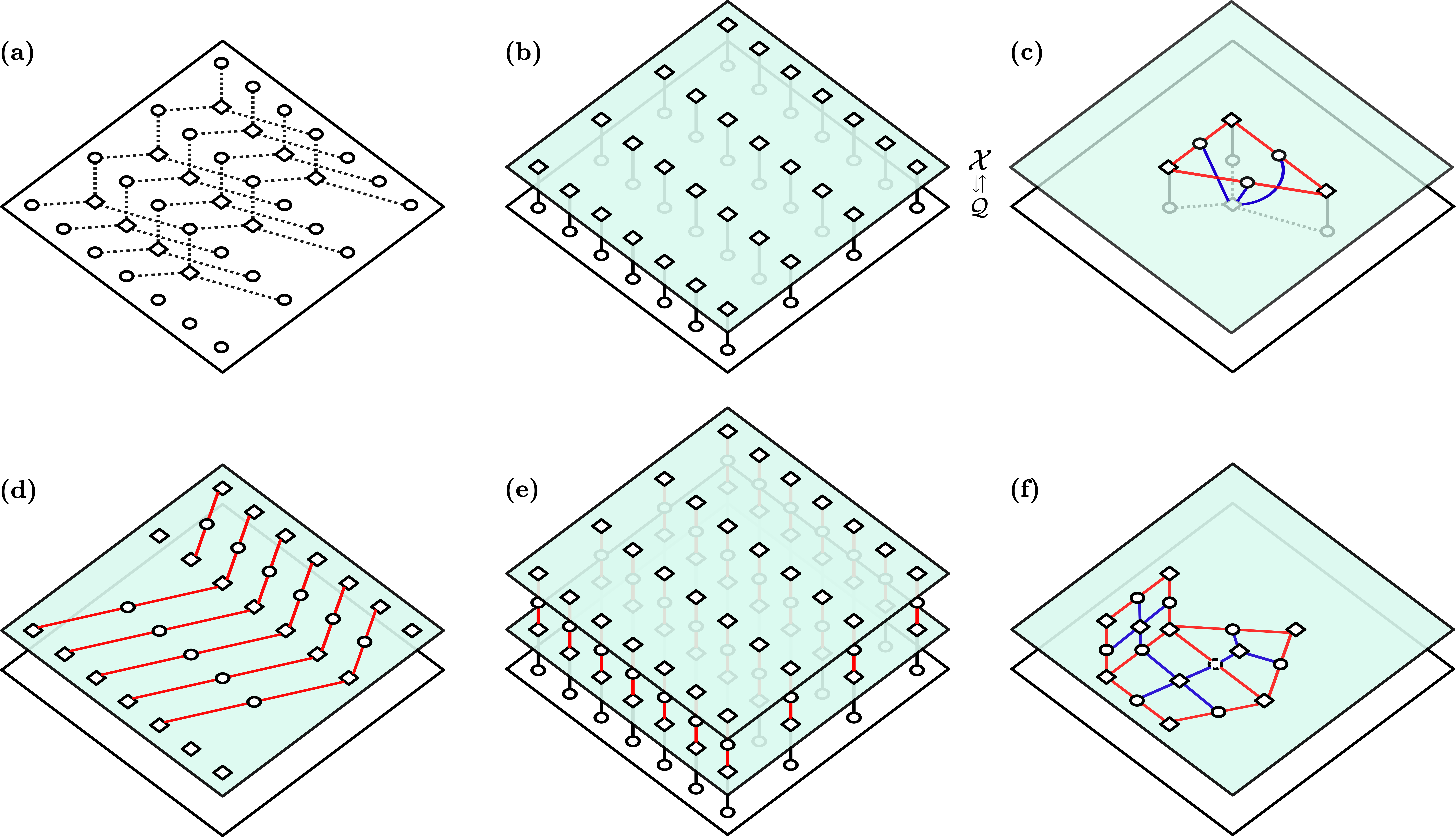}
    \caption{Depiction of different components of an Extractor-Augmented Computation (EAC) block. In all panels, circles denote data qubits, sqaures denote check qubits, and lines denote connections between check and data qubits. The color of a line denote the Pauli action of the check qubit on the data qubit: red for $Z$, blue for $X$. 
    \textbf{(a)} A generic quantum CSS code $\qcode$ with data and check qubits. We use dotted lines to indicate that these connections came with the code. The lines are uncolored as their Pauli actions are unspecified. 
    \textbf{(b)} The first level of an extractor $\EXT$ (light green) has one check qubit per data qubit in $\qcode$. They are connected 1-to-1. The lines are uncolored as their Pauli actions are unspecified.
    \textbf{(c)} For every stabilizer $S$ of $\qcode$, we add a cycle $C$ of edges among the vertex checks that are connected to the qubits in support of $S$. Each edge is a data qubit in $\EXT$. The vertex checks act on edge qubits by $Z$. The stabilizer $S$ are extended to act on the edge qubits by $X$. Together, panels b, c depict all the coupling edges $\leftrightarrows$ between $\qcode$ and $\EXT$.
    \textbf{(d)} The base graph (first level) of an extractor $\EXTG_1$ is a constant degree expander graph. In practice, due to the underlying code structure, one should consider co-designing the extractor graph with the code, see Section~\ref{sec:extractor_practical}.
    \textbf{(e)} An extractor may have multiple levels due to thickening (Definition~\ref{def:thicken}). 
    \textbf{(f)} Create cycle checks for a basis of cycles of $\EXTG$. We depict two types of cycles here: the vertical cycle between levels of $\EXTG$ comes from thickening (Fact~\ref{fact:thickened_cycles}), and the horizontal cycles on each level come from $\EXTG_1$. Cycle checks act on edge qubits by $X$. The circuit with dashed boundary denote a qubit that came from cellulation (Definition~\ref{def:cellulation}).
    }
    \label{fig:extractor}
\end{figure}

\begin{definition}[Extractor System]\label{def:extractor_system}
    Let $\EXTG = (V, E), F$ be an extractor graph and port function satisfying the extractor desiderata (Definition~\ref{def:extractor_desiderata}) for code $\qcode$. 
    Here, we construct a fixed system of data and check qubits based on $\qcode, \EXTG$ and $F$.
    For clarity of notation, we denote data qubits by $Q$ and $Q_\EXTG$, 
    and check qubits by $H$ and $H_\EXTG$.
    \begin{enumerate}[itemsep = 0pt]
        \item Enumerate the data qubits of $\qcode$ as $Q[1], \cdots, Q[n]$, and the check qubits as $H[S]$ for stabilizers $S\in \MS$. Connect every check qubit $H[S]$ to the data qubits in the support of $S$.
        See Figure~\ref{fig:extractor}a.
        \item For every vertex $v\in V$, create a check qubit $H_\EXTG[v]$. Denote $v_i := F(Q[i])$, connect $H_\EXTG[v_i]$ to $Q[i]$. See Figure~\ref{fig:extractor}b.
        \item For every edge $e\in E$, create a data qubit $Q_\EXTG[e]$. 
        Suppose $e = (u, v)$, connect $Q_\EXTG[e]$ to $H_\EXTG[u], H_\EXTG[v]$. 
        See Figure~\ref{fig:extractor}c,d,e,f.
        \item For every cycle $C$ in the cycle basis $\MR$, create a check qubit $H_\EXTG[C]$, connect it to all the edge qubits $Q_\EXTG[e]$ for $e\in C$.
        See Figure~\ref{fig:extractor}f.
        \item For every stabilizer $S_i\in \MS$, connect $H[S]$ to $Q_\EXTG[e]$ for all $e\in E_i$ (recall extractor desideratum~\ref{ext-des:path_matching}), see Figure~\ref{fig:extractor}c.
    \end{enumerate} 
    We denote the anxillary system made of $Q_\EXTG$ and $H_\EXTG$ as $\EXT$, and the full system as $\fullsystem$.
\end{definition}

\begin{theorem}[Extractor-Augmented Computation Block]\label{thm:EAC_block}
    For any QLDPC code $\qcode$ with parameters $[[n,k,d]]$, let us construct 
    an extractor graph $X$ and port function $F$ and consider the joined system $\fullsystem$ from Definition~\ref{def:extractor_system}.
    \begin{enumerate}[itemsep = 0pt]
        \item $\fullsystem$ is LDPC, and the total number of qubits (data and check) is bounded by $O(n(\log n)^3)$. 
        \label{EAC:space_bound}
        \item $\fullsystem$ can implement all measurement codes $\qcode(\ML, X, F\vert_L)$, for any logical Pauli operator $\ML$ of $\qcode$ with support $L\subseteq Q$.
        \label{EAC:measurement}
    \end{enumerate}
    We refer to the full system $\fullsystem$ as an \textbf{Extractor-Augmented Computation block}, or an \textbf{EAC block} for short. 
    As a direct consequence of the auxiliary graph surgery scheme (Definition~\ref{def:measurement_protocol} and Theorem~\ref{thm:fault_distance}), we can implement the code $\qcode$ and perform logical measurement of any $\ML$ in the EAC block, without the need of SWAP gates or qubit connectivity rearrangements.
    One logical measurement step uses $O(d)$ syndrome measurement cycles and has fault distance $d$.
\end{theorem}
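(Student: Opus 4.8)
The plan is to first fix, via Lemma~\ref{lem:single_block_extractor}, an extractor graph $\EXTG = (V,E)$ and port function $F\colon Q\to V$ satisfying the extractor desiderata (Definition~\ref{def:extractor_desiderata}), with $|V| = n$, $|E|\le O(n(\log n)^3)$, bounded degree, and a cycle basis $\MR$ of congestion $2$ and length $\le 4$; then build $\fullsystem$ as in Definition~\ref{def:extractor_system} and verify the two claims, with the closing consequence following from the surgery toolkit.

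For Part~\ref{EAC:space_bound}, I would count the five classes of qubits appearing in Definition~\ref{def:extractor_system}: the $n$ data qubits of $\qcode$; the $m = O(n)$ code checks $H[S]$ (since $\qcode$ is LDPC); the $|V| = n$ vertex checks $H_\EXTG[v]$; the $|E|$ edge data qubits $Q_\EXTG[e]$; and the $|\MR| = |E|-|V|+1$ cycle checks $H_\EXTG[C]$. Their sum is $O(n(\log n)^3)$. For the LDPC bound I would bound each check weight and each data-qubit degree using the extractor desiderata: $H[S]$ touches $\le\omega$ code qubits plus the $O(1)$ edge qubits of $E_i$ (Desideratum~\ref{ext-des:path_matching}); $H_\EXTG[v]$ touches one code qubit plus $\deg_\EXTG(v) = O(1)$ edge qubits (Desideratum~\ref{ext-des:degree}); $H_\EXTG[C]$ has weight $|C|\le 4$ (Desideratum~\ref{ext-des:cycle_basis}); each code data qubit sits in $\le \Delta+1$ checks; and each edge qubit $Q_\EXTG[e]$ sits in the two vertex checks at its endpoints, the $O(1)$ cycles of $\MR$ through $e$, and the $O(1)$ code checks with $e\in E_i$.

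For Part~\ref{EAC:measurement}, the key observation is that the fixed connectivity of $\fullsystem$ contains, as a subset, every coupling demanded by any one measurement code $\qcode(\ML,\EXTG,F\vert_L)$ of Definition~\ref{def:graph_and_code}. Given $\ML$ with support $L$ and $f_L = F\vert_L$ (so the port is $P = F(L)$), I would check stabilizer by stabilizer that each check of $\qcode(\ML,\EXTG,f_L)$ is a Pauli operator measurable through the hardware connections of a single check qubit of $\fullsystem$: the (possibly modified) code check $\bar S = S\prod_{e\in\mu(S,\ML)}X(e)$ is measured by $H[S]$ using a path matching $\mu(S,\ML)\subseteq E_i$ of $F(K(S,\ML))$, which exists by Desideratum~\ref{ext-des:path_matching} because $K(S,\ML)$ is an even subset of $\supp(S_i)$ and $H[S]$ is wired to all of $E_i$; the vertex check $A_v$ is measured by $H_\EXTG[v]$, acting by $Z$ on incident edge qubits and by $\ML_{Q[i]}$ on the attached code qubit $Q[i] = F^{-1}(v)$ when $Q[i]\in L$, and trivially (simply not using that available connection) when $Q[i]\notin L$; and the cycle check $B_C = \prod_{e\in C}X(e)$ is measured by $H_\EXTG[C]$. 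I expect the main point to watch is exactly this ``a single fixed set $E_i$ serves every $\ML$'' step: distinct logical operators induce distinct anticommuting subsets $K(S,\ML)$ and hence distinct path matchings, but Desideratum~\ref{ext-des:path_matching} is tailored so that all of them lie inside the one $O(1)$-sized set to which $H[S]$ is already connected — this is where the extractor desiderata are genuinely stronger than the per-operator graph desiderata.

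Finally, for the stated consequence I would invoke the Extractor Lemma~\ref{lem:extractor}: $\EXTG$ and $f_L$ satisfy the graph desiderata of Theorem~\ref{thm:graph_desiderata}, so by Theorem~\ref{thm:fault_distance} the code-switching protocol of Definition~\ref{def:measurement_protocol} between $\qcode$ and $\qcode(\ML,\EXTG,f_L)$ performs the logical measurement of $\ML$ with space-time fault distance $d$ using $3d = O(d)$ syndrome-measurement cycles. Between logical measurements the block idles by running the syndrome extraction of $\qcode$ alone while holding the edge qubits $Q_\EXTG$ in $\ket{0}$ (with the vertex checks available to catch $X$ faults), and every stage — initialization, merge, split, correction, idle — uses only couplings already present in $\fullsystem$, so no SWAP gates or connectivity rearrangements are ever needed. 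The only delicate bookkeeping left is confirming that the $\ket{0}$-initialized edge qubits are consistent across the transition into the merge stage and that nothing in Definition~\ref{def:measurement_protocol} requires a connection outside $\fullsystem$; both are immediate once Part~\ref{EAC:measurement} is established.
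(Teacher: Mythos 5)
Your proposal is correct and follows essentially the same route as the paper: Part~\ref{EAC:space_bound} from Lemma~\ref{lem:single_block_extractor} together with Definition~\ref{def:extractor_system}, Part~\ref{EAC:measurement} from Lemma~\ref{lem:extractor} (your stabilizer-by-stabilizer check of the fixed couplings, with Desideratum~\ref{ext-des:path_matching} supplying a path matching inside the pre-wired set $E_i$ for every $\ML$, is exactly the content of that lemma combined with the extractor-system wiring), and the fault-tolerance consequence from Theorem~\ref{thm:graph_desiderata}, Definition~\ref{def:measurement_protocol}, and Theorem~\ref{thm:fault_distance}. Your write-up simply makes explicit the qubit counting and degree bookkeeping that the paper leaves as a direct consequence of the cited constructions.
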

\begin{proof}
    Property~\ref{EAC:space_bound} is a direct consequence of the construction of extractor graph (Lemma~\ref{lem:single_block_extractor}) and extractor system (Definition~\ref{def:extractor_system}). 
    Property~\ref{EAC:measurement} follows from Lemma~\ref{lem:extractor}.
    The remaining claim on correctness and fault tolerance follows from the auxiliary graph surgery toolkit: Theorem~\ref{thm:graph_desiderata}, Definition~\ref{def:measurement_protocol} and Theorem~\ref{thm:fault_distance}.
\end{proof}

\begin{figure}
    \centering
    \includegraphics[width = 0.4\textwidth]{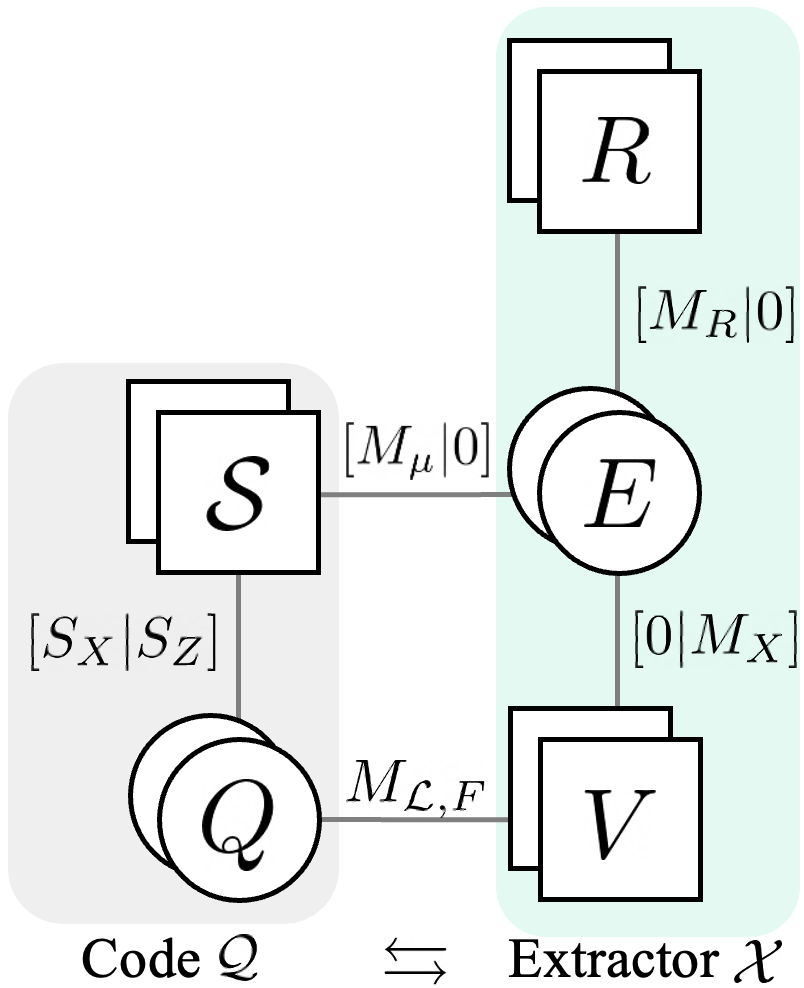}
    \caption{Logical measurement in an EAC block, depicted with scalable Tanner graphs. Similar to Figure~\ref{fig:gauging}, we have a code system $\qcode$ and an ancilla system based on the extractor graph $\EXTG$. 
    When measuring a logical operator $\ML$, the coupling connections $\leftrightarrows$ activated depend on $\ML$ and $F$. Specifically, for qubits $q\in L$, the vertex check $F(q)$ acts on $q$ by $\ML(q)$. This is captured as a sympletic matrix $M_{\ML, F}$. For every check $S$ with $K(S,\ML)\ne \varnothing$, $S$ acts by $X$ on edge qubits, subset of the cycle $C(S)$, which form a path matching of $K(S,\ML)$. This is captured as a matrix $M_{\mu}$.
    }
    \label{fig:extractor-measure}
\end{figure}

\begin{remark}[Uniformity in Logical Measurements]\label{rmk:uniformity}
    We elaborate on how an EAC block performs logical measurements to illustrate our fixed-connectivity approach.
    To measure a logical operator $\ML$, we use a code-switching protocol (Definition~\ref{def:measurement_protocol}, Theorem~\ref{thm:fault_distance}) between $\qcode$ and $\qcode(\ML, \EXTG, F\vert_L)$. 
    Since we use the same measurement graph $\EXTG$ for any $\ML$, the stabilizers $\MS(\ML, \EXTG, F\vert_L)$ for different $\ML$ are mostly identical. 
    More precisely, in Definition~\ref{def:graph_and_code}, the stabilizers that change between measurements of different logical operators are exactly the checks that connect the code system $\qcode$ to the extractor system $\EXT$: stabilizers~\ref{stabilizers:port-vtxs} and~\ref{stab:modified_code_checks}.
    In other words, the code $\qcode$ and extractor $\EXT$ systems remain uniform across measurements, while different sets of coupling edges $\leftrightarrows$ are activated for each $\ML$. 

    From the hardware perspective, such uniformity in our computational procedures substantially simplifies control sequences and reduces calibration overhead. 
    On the circuit level, activating different coupling edges in $\leftrightarrows$ corresponds to using different CNOT gates between $\qcode$ and $\EXT$, without changing the (partial stabilizer measurement) circuits on $\qcode$ or $\EXT$.
    For systems with fixed connectivity, such as superconducting devices, these changes can be easily implemented assuming the connectivity detailed in Definition~\ref{def:extractor_system} is built. 
    For systems which support qubit rearrangments, such as neutral atom devices, qubit movements for the partial circuits on $\qcode$ and $\EXT$ would remain unchanged while movements to couple $\qcode$ and $\EXT$ would be largely similar.

    This uniformity is further extended into our architectural proposal. In Section~\ref{sec:architectures}, we primarily consider what we call \textbf{uniform architectures}, where we connect many instances of the same EAC blocks with bridges for computation. This is in contrast to \textbf{hybrid architectures}, where we connect EAC blocks based on different QLDPC code families using adapters. 
    In a uniform architecture, any local optimizations to the system or the measurement scheme can be easily propagated into a global optimization.
\end{remark}

\subsection{Multi-block Extractors}~\label{sec:multi-block}

A single-block extractor $\EXT$ can augment a QLDPC memory block into a computation block supporting logical measurements. 
Using a bridge system (Definition~\ref{def:bridge}), two or more extractors can be connected together to enable Pauli-based computation on multiple QLDPC codeblocks. 
As in Remark~\ref{rmk:multiple_code_blocks}, while we can simply apply Lemma~\ref{lem:single_block_extractor} to construct an extractor on multiple codeblocks, using a bridge is more modular and requires less connectivity between (augmented) blocks.

\begin{restatable}[Bridging Extractors]{lemma}{BridgingExtractors}
    \label{lem:bridging_extractors}
    Suppose $\EXTG_1 = (V_1, E_1), \EXTG_2 = (V_2, E_2)$ and $F_1: Q_1\rightarrow P_1, F_2: Q_2\rightarrow P_2$ satisfy the extractor desiderata (Definition~\ref{def:extractor_desiderata}) for codes $\qcode_1, \qcode_2$, supported on disjoint qubits $Q_1, Q_2$. 
    We can efficiently compute a bridge $B$ of $d$ edges between $P_1$, $P_2$, which connects $\EXTG_1,\EXTG_2$ into the graph $\EXTG = (V_1\cup V_2, E_1\cup E_2\cup B)$.
    Let $F: Q_1\cup Q_2\rightarrow P_1\cup P_2$ be the port function where $F(q) = F_i(q)$ for $q\in Q_i$.
    Then $\EXTG$ and $F$ satisfy the extractor desiderata for the code $\qcode_1\cup \qcode_2$.
\end{restatable}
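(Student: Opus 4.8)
The plan is to verify each clause of the extractor desiderata (Definition~\ref{def:extractor_desiderata}) for the joined graph $\EXTG$, the combined port function $F$, and the joint code $\qcode_1\cup\qcode_2$, following the skeleton of the proof of the Bridging Lemma (Lemma~\ref{lem:bridge}) but adapted from the operator-dependent graph desiderata to the operator-independent extractor desiderata. Writing $d_1,d_2$ for the distances of $\qcode_1,\qcode_2$, the bridge size $d$ is the distance $\min(d_1,d_2)$ of $\qcode_1\cup\qcode_2$. The three workhorses will be the Bridged Expansion Lemma (Lemma~\ref{lem:bridged_expansion}) for the relative-expansion clause, the Sparsity of Bridge Lemma (Lemma~\ref{lem:sparse_bridge}) for the cycle-basis clause, and the Restriction Lemma (Lemma~\ref{lem:restriction_relative_expansion}) to reconcile the two possibly different distances.

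First I would dispatch the three easy clauses. Connectivity (desideratum~\ref{ext-des:connected}) is immediate, since $\EXTG_1,\EXTG_2$ are connected and $B$ contributes $d\ge 1$ edges joining them. The degree bound (desideratum~\ref{ext-des:degree}) holds because $B$ is a set of non-overlapping edges, so it raises any vertex degree by at most one, keeping the maximum degree $O(1)$. The path-matching clause (desideratum~\ref{ext-des:path_matching}) is inherited essentially for free: every stabilizer of $\qcode_1\cup\qcode_2$ is supported entirely inside $Q_1$ or inside $Q_2$ (Remark~\ref{rmk:multiple_code_blocks}), so for $S\in\MS_1$ one has $F(K)=F_1(K)\subseteq P_1$ and the required path matching lies inside $\EXTG_1$, in the per-stabilizer edge set already furnished by the extractor desiderata of $\EXTG_1$; symmetrically for $\MS_2$. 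Hence the collection $\ME_1\cup\ME_2$ witnesses the clause with all size and overlap bounds preserved, and no bridge edge is ever needed. For the relative-expansion clause (desideratum~\ref{ext-des:relative_expansion}), I would observe that $F(Q_1\cup Q_2)=P_1\cup P_2$, that $\beta_{d_i}(\EXTG_i,P_i)\ge 1$ implies $\beta_{d}(\EXTG_i,P_i)\ge 1$ by the Restriction Lemma, and that since $|B|=d$ the Bridged Expansion Lemma then yields $\beta_{d}(\EXTG,P_1\cup P_2)\ge 1$.

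The remaining clause, the cycle basis (desideratum~\ref{ext-des:cycle_basis}), is the substantive step and the expected main obstacle. Adjoining $d$ bridge edges between two previously disconnected graphs creates $d-1$ new independent cycles, and one must exhibit a cycle basis of $\EXTG$ that still has $O(1)$ congestion and $O(1)$ maximum length. This is precisely what the Sparsity of Bridge Lemma (Lemma~\ref{lem:sparse_bridge}) provides, applied to the $O(1)$-congestion, $O(1)$-length cycle bases of $\EXTG_1,\EXTG_2$ that their extractor desiderata guarantee; it produces a bridge $B$ of $d$ edges together with a cycle basis of the union of congestion at most $\rho+2=O(1)$ and maximum length $\max(\gamma,8)=O(1)$, all via efficient procedures. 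The delicate point is that Lemma~\ref{lem:sparse_bridge} must route $B$ between vertex subsets $S_1\subseteq P_1$, $S_2\subseteq P_2$ that induce connected subgraphs and have size at least $d$ — and the resulting $B$ is then automatically a bridge between $P_1,P_2$ as the Bridged Expansion Lemma demands. For extractors produced by the construction of Lemma~\ref{lem:single_block_extractor} this is automatic: there $P_i=F_i(Q_i)$ is exactly the first-level vertex set of the thickened extractor graph, whose induced subgraph contains the connected expander $D$ added in Step~\ref{extractor:expand}, so one may take $S_i=P_i$ with $|P_i|=n_i\ge d_i\ge d$. (For an abstract extractor whose port happens not to contain a sufficiently large connected subgraph, one would first add $O(1)$ extra edges inside each $P_i$ to connect it, which only improves the relative Cheeger constant and leaves the earlier clauses intact.) Combining the four verified clauses shows that $\EXTG$ and $F$ satisfy the extractor desiderata for $\qcode_1\cup\qcode_2$, which is the assertion of the lemma.
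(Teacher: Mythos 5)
Your overall route coincides with the paper's: the easy clauses are dispatched the same way (connectivity and degree are immediate, the path-matching clause is inherited as $\ME_1\cup\ME_2$ because the stabilizers of $\qcode_1\cup\qcode_2$ live on disjoint qubit sets, and relative expansion follows from Lemma~\ref{lem:restriction_relative_expansion} together with Lemma~\ref{lem:bridged_expansion} since $|B|=d$), and the substantive work is routed through the sparse-bridge machinery for desideratum~\ref{ext-des:cycle_basis}. The gap is in how you meet the hypothesis of Lemma~\ref{lem:sparse_bridge} that the two bridged vertex sets induce \emph{connected} subgraphs. Taking $S_i=P_i$ is justified only for extractors produced by the concrete construction of Lemma~\ref{lem:single_block_extractor}; the lemma, however, is stated for arbitrary graphs satisfying Definition~\ref{def:extractor_desiderata}, and those desiderata do not guarantee that $P_i$ induces a connected subgraph (the paper flags exactly this technicality in its proof of Lemma~\ref{lem:bridge}). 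Your fallback of adding ``$O(1)$ extra edges inside each $P_i$'' does not close the general case: if the induced subgraph on $P_i$ has many components (in the worst case it is edgeless), connecting it needs up to $|P_i|-1$ new edges, and every added edge creates a new independent cycle whose length is that of some path in $\EXTG_i$ between its endpoints, which is not $O(1)$; these cycles must then be incorporated into the basis of desideratum~\ref{ext-des:cycle_basis}, and your proposal offers no mechanism (further thickening or cellulation, with a fresh congestion and degree analysis) to do so.

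The paper closes precisely this point by following the proof of Lemma~\ref{lem:bridge}: pick an irreducible logical operator of each code (support of size at least $d$), and use the path matchings supplied by extractor desideratum~\ref{ext-des:path_matching} to build \emph{helper} graphs $A_1,A_2$ on the corresponding port vertices; irreducibility makes each $A_i$ connected, Lemma~\ref{lem:sparse_bridge} is applied to $A_1,A_2$ rather than to subsets of $\EXTG_1,\EXTG_2$, and cycles of length at most $8$ in the helper graphs translate into cycles of length $O(w)$ and congestion $O(1)$ in $\EXTG$, with no physical edges added inside the blocks. Your hypotheses already support this argument, since desideratum~\ref{ext-des:path_matching} furnishes matchings for any even subset of a stabilizer's support; substituting the helper-graph step for your fallback repairs the proof, and the remainder of your write-up agrees with the paper.
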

We include a proof of this lemma, which follows the proof of Lemma~\ref{lem:bridge}, in Appendix~\ref{apdx:proofs}.
As in Remark~\ref{rmk:bridge_nuances} and~\ref{rmk:repeated_bridging}, the bridge system can be used repeatedly and in many more nuanced ways, which should be explored when building extractors on specific codes.

\begin{figure}
    \centering
    \includegraphics[width = 0.6\textwidth]{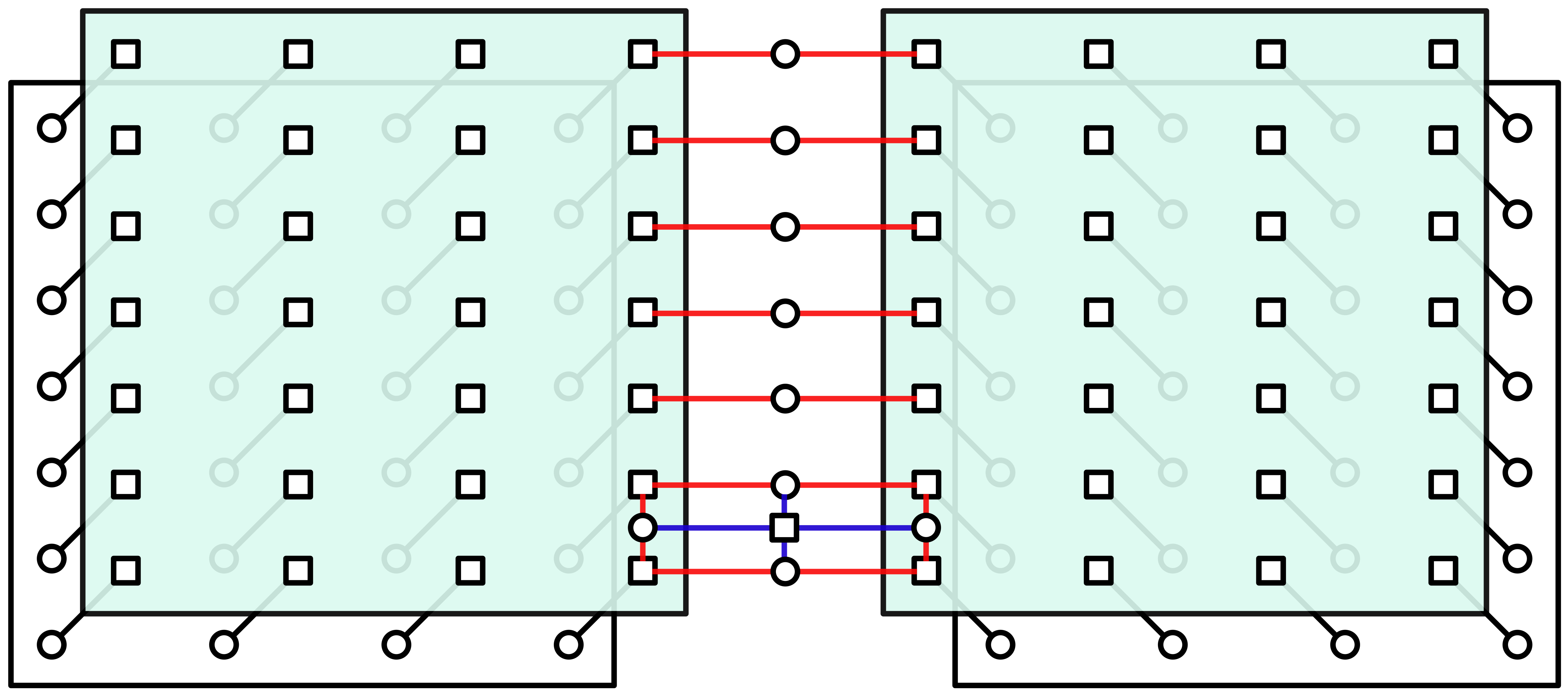}
    \caption{Two EAC blocks connected by a bridge system. A bridge system consists of a set of $d$ edges between the extractor graphs, and a collection of $d-1$ new cycle checks. Lemma~\ref{lem:sparse_bridge} guarantees that we can find a set of edges such that the new cycle checks are sparse. In this figure we depict one simple cycle check created by the bridge edges.
    }
    \label{fig:bridged_extractors}
\end{figure}

This simple construction produces a powerful primitive. 
Note that Lemma~\ref{lem:bridging_extractors}, much like the earlier versions of bridging lemmas, makes no assumption on the codes $\qcode_1,\qcode_2$. This means that by using an adapter, we can connect EAC blocks based on different QLDPC code families together into, pedantically, a larger EAC block.
We include a depiction in Figure~\ref{fig:bridged_extractors}.

\begin{corollary}[Multi-Block Measurements]\label{cor:multi-block-meas}
    Let $\qcode = \qcode_1\cup \qcode_2$. The system $\fullsystem$ can
    perform measurements of all operators of the form $\ML = \ML_1\ML_2$, where $\ML_1, \ML_2$ are any two operators in $\qcode_1,\qcode_2$, with $O(d)$ syndrome measurement cycles and fault distance $d$.
\end{corollary}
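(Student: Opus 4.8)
The plan is to reduce the claim to the single-block machinery already assembled, treating $\qcode_1\cup\qcode_2$ as one joint code in the sense of Remark~\ref{rmk:multiple_code_blocks}. First I would invoke the Bridging Extractors lemma (Lemma~\ref{lem:bridging_extractors}): starting from extractor graphs $\EXTG_1,\EXTG_2$ and port functions $F_1,F_2$ satisfying the extractor desiderata for $\qcode_1,\qcode_2$ (which exist by Lemma~\ref{lem:single_block_extractor}), we obtain a bridge $B$ of $d$ edges and a combined graph $\EXTG = (V_1\cup V_2, E_1\cup E_2\cup B)$ with port function $F$ that together satisfy the extractor desiderata for the joint code $\qcode = \qcode_1\cup\qcode_2$. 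Feeding $(\qcode,\EXTG,F)$ into Definition~\ref{def:extractor_system} produces the fixed, constant-degree data/check system $\fullsystem$; by construction its cycle checks include the $d-1$ new bridge cycles, which Lemma~\ref{lem:sparse_bridge} (used inside Lemma~\ref{lem:bridging_extractors}) guarantees admit a constant-weight, constant-congestion basis, so $\fullsystem$ remains LDPC.

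Next, fix any $\ML = \ML_1\ML_2$ with $\ML_i$ a logical Pauli of $\qcode_i$ and support $L_i\subseteq Q_i$. By Remark~\ref{rmk:multiple_code_blocks}, $\ML$ is a logical Pauli operator of the joint code $\qcode$ with support $L = L_1\cup L_2$. I would then apply the Extractor Lemma (Lemma~\ref{lem:extractor}): since $(\EXTG,F)$ meets the extractor desiderata for $\qcode$, the restriction $f_L := F\vert_L$ makes $(\EXTG,f_L)$ satisfy the graph desiderata of Theorem~\ref{thm:graph_desiderata} for $\ML$. Hence the measurement code $\bar{\qcode} := \qcode(\ML,\EXTG,f_L)$ of Definition~\ref{def:graph_and_code} is well-defined, LDPC, has distance at least $d$, retains the remaining logical qubits, and contains $\ML$ as a product of vertex checks. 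As recorded in Theorem~\ref{thm:EAC_block}(\ref{EAC:measurement}) and Remark~\ref{rmk:uniformity}, the single fixed system $\fullsystem$ implements $\bar{\qcode}$ for every such $\ML$ simultaneously: the only stabilizers that change with $\ML$ are the coupling checks (stabilizers~\ref{stabilizers:port-vtxs} and~\ref{stab:modified_code_checks}), i.e.\ which CNOTs between $\qcode$ and $\EXT$ are activated, while the code and extractor subsystems are untouched.

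Finally, I would run the code-switching protocol of Definition~\ref{def:measurement_protocol} between $\qcode$ and $\bar{\qcode}$, error-corrected as in Theorem~\ref{thm:fault_distance}: $d$ rounds of syndrome extraction for $\qcode$ before and after, and $d$ rounds of the merged checks $\MS(\ML,\EXTG,f_L)$ in between. This performs a fault-tolerant logical measurement of $\ML=\ML_1\ML_2$ with space-time fault distance $d$ and $O(d)$ physical syndrome measurement cycles, which is exactly the stated conclusion.

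I expect the only real content — already discharged by the cited lemmas — to be checking that one fixed $\EXTG$ works for all products $\ML_1\ML_2$ while staying LDPC and distance-preserving after bridging. The delicate points there are (i) that the bridge preserves relative expansion of the \emph{full} port $F(Q_1\cup Q_2)$, which follows from Lemma~\ref{lem:bridged_expansion} with $t = \min(t_1,t_2,|B|) = d$, and (ii) that the newly created bridge cycles are sparse, which is Lemma~\ref{lem:sparse_bridge}. If $\qcode_1,\qcode_2$ have unequal distances $d_1\le d_2$, the same argument goes through with $d := d_1$ (the distance of $\qcode_1\cup\qcode_2$) and a bridge of $d_1$ edges.
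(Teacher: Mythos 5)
Your proposal is correct and follows exactly the route the paper intends: the corollary is stated as an immediate consequence of Lemma~\ref{lem:bridging_extractors} (bridged extractor graph satisfying the extractor desiderata for $\qcode_1\cup\qcode_2$, via Lemmas~\ref{lem:bridged_expansion} and~\ref{lem:sparse_bridge}) combined with the single-block machinery of Lemma~\ref{lem:extractor}, Definition~\ref{def:extractor_system}, Theorem~\ref{thm:EAC_block}, Definition~\ref{def:measurement_protocol} and Theorem~\ref{thm:fault_distance}, which is precisely the chain you spell out. Your added remarks on $t=\min(t_1,t_2,|B|)=d$ and on unequal distances are consistent refinements rather than deviations.
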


This primitive is one of the key foundations of our architectural proposal in Section~\ref{sec:architectures}.

\subsection{Partial Extractors}\label{sec:partial_extractors}

The above construction of single and multi-block extractors is both explicit and general, as Lemma~\ref{lem:single_block_extractor} and Definition~\ref{def:extractor_system} can augment any quantum LDPC code $\qcode$ into an EAC block $\fullsystem$ supporting logical measurements of all Pauli operators.
This generality enables one to construct and optimize extractor systems on promising code families, such as hypergraph product codes~\cite{tillich2013quantum,kovalev2012improved,leverrier2015quantum}, balanced product and lifted product codes~\cite{breuckmann2021balanced,panteleev2021degenerate,panteleev2021quantum}, bivariate bicycle codes~\cite{bravyi2024highthreshold}, and several more~\cite{wang2023abelian,lin2024quantum,malcolm2025computing}. 
Some of these codes admit various symmetry properties which enable low-depth implementation of subsets of logical Clifford gates~\cite{breuckmann2024foldtransversal,quintavalle2023partitioning,eberhardt2024operators,malcolm2025computing}. 
For these codes, building a full single-block extractor may be unnecessary, as a smaller (and therefore more limited) extractor could still be computationally versatile when combined with the native low-depth gates.
With these considerations in mind, in this section we define the notion of a \textbf{partial extractor} and discuss its usage.

Intuitively, a partial extractor is defined on a subset of qubits, which fully contains one or more logical operators. It is more than a measurement graph for single operators and less than a single-block extractor. 
\begin{definition}[Partial Extractor Desiderata]\label{def:partial_extractors_desiderata}
    Let $\qcode$ be a $[[n, k, d]]$ quantum code with physical qubits $Q$.
    Enumerate the stabilizer checks of $\qcode$ as $\MS = \{S_1, \cdots, S_m\}$.
    For a subset of qubits $T\subseteq Q$, consider a graph $\EXTG_T = (V, E)$ and an injective function $F_T: T\rightarrow V$.
    We say that $\EXTG_T$ and $F_T$ satisfy the \textbf{extractor desiderata with respect to $T$} if the following conditions hold
    \begin{enumerate}[itemsep = 0pt]
        \item $\EXTG_T$ is connected.
        \label{partial-des:connected}

        \item 
        \label{partial-des:LDPC} 
        \begin{enumerate}[itemsep = 0pt]
            \item The maximum degree of $\EXTG_T$ is $O(1)$;
             \label{partial-des:degree}

            \item There is a cycle basis $\MR$ of $\EXTG_T$ such that $\MR$ has congestion $O(1)$, and every cycle in $\MR$ has length $O(1)$. 
            \label{partial-des:cycle_basis}

            \item There exists a collection of edge sets $\ME = \{E_1, \cdots, E_m\}$, $E_i\subseteq E$, such that 
            \begin{enumerate}
                \item For any even subset of qubits $K_i\subseteq T$ in the support of $S_i$, there exists a path matching $\mu_i\subseteq E_i$ of $F_T(K_i)$. 
                \item Every $E_i$ has $O(1)$ edges and every edge in $E$ is in $O(1)$ sets $E_i$. 
            \end{enumerate}

            \label{partial-des:path_matching}
        \end{enumerate}

        \item $\beta_d(\EXTG_T, F_T(T))\ge 1$. 
        \label{partial-des:relative_expansion}
    \end{enumerate}
\end{definition}
Note that the full extractor desiderata (Definition~\ref{def:extractor_desiderata}) is precisely the above definition with $T$ set to $Q$.
Therefore, we can straightforwardly adapt the augmented system construction (Definition~\ref{def:extractor_system}) to obtain a \textit{$T$-augmented system} $\partialsystem{T}$, and adapt Theorem~\ref{thm:EAC_block} to the following form.

\begin{theorem}
    For any QLDPC code $\qcode$ with parameters $[[n,k,d]]$, for any subset of qubits $T\subseteq Q$, we can construct $\EXTG_T, F_T$ and a partially augmented system $\partialsystem{T}$ such that
    \begin{enumerate}[itemsep = 0pt]
        \item $\partialsystem{T}$ is LDPC, and the total number of qubits (data and check) is bounded by $O(|T|(\log |T|)^3)$. 

        \item $\partialsystem{T}$ can implement all measurement codes $\qcode(\ML, X_T, F_T\vert_L)$, for any logical Pauli operator $\ML$ of $\qcode$ with support $L\subseteq T$.

    \end{enumerate}
    One logical measurement step uses $O(d)$ syndrome measurement cycles and has fault distance $d$.
\end{theorem}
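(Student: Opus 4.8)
The plan is to reprise the proof of Theorem~\ref{thm:EAC_block} verbatim, with the full qubit set $Q$ replaced by $T$ throughout, so that every ingredient invoked has already been established. First I would run the Extractor Construction of Lemma~\ref{lem:single_block_extractor} on $T$ rather than on all of $Q$: take $V_1$ of size $|T|$ with a bijection $F_T : T \to V_1$; for every stabilizer $S\in\MS$ add, among the vertices $\{F_T(q) : q\in\supp(S)\cap T\}$, a cycle $C(S)$ of edges (with the conventions that $C(S)$ is empty when $|\supp(S)\cap T|\le 1$ and is a single edge when it equals $2$); add the edges of a constant-degree expander on $|T|$ vertices of Cheeger constant at least our chosen $\beta$; apply the Decongestion Lemma~\ref{lem:decongestion} and Corollary~\ref{cor:decongestion} to get a basis $\MR=\bigcup_{i=1}^{t}\MR_i$ of non-overlapping pieces with $t\le O((\log|T|)^3)$; thicken by $\ell=\max(t,1/\beta)$ levels; and cellulate each $\MR_r$ on level $r$. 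The degree and edge counts bookkeep exactly as in Lemma~\ref{lem:single_block_extractor} with $n$ replaced by $|T|$, giving maximum degree $O(1)$, at most $\ell(2\Delta+\delta+1)|T|\le O(|T|(\log|T|)^3)$ edge qubits, and a measured cycle basis of congestion $2$ and length $\le 4$. One then checks the partial extractor desiderata of Definition~\ref{def:partial_extractors_desiderata}: (\ref{partial-des:connected}) holds since the expander is connected; (\ref{partial-des:degree}) and (\ref{partial-des:cycle_basis}) are the bounds just obtained; for (\ref{partial-des:path_matching}) take $E_i:=C(S_i)$, so any even $K_i\subseteq\supp(S_i)\cap T$ admits a path matching of $F_T(K_i)$ along the arcs of $C(S_i)$, with $|E_i|\le\omega$ and each edge in at most $\Delta$ of the $E_i$; and for (\ref{partial-des:relative_expansion}), the Thickening Lemma~\ref{lem:thicken_expansion} (using $\ell\ge 1/\beta$) upgrades the expander bound to $\beta_{|T|}(\EXTG_T, V_1\times\{1\})\ge 1$, cellulation only increases expansion, and passing from the parameter $|T|$ to $d$ either weakens the constraint (when $d\le|T|$, by the Restriction Lemma~\ref{lem:restriction_relative_expansion}) or leaves it unchanged (when $d>|T|$, since the relevant minimum is already capped by $|V_1\times\{1\}|=|T|$), so $\beta_d(\EXTG_T, F_T(T))\ge 1$.

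Next I would establish the partial analogue of the Extractor Lemma~\ref{lem:extractor}: for any logical Pauli $\ML$ of $\qcode$ with support $L\subseteq T$, the pair $(\EXTG_T, f_L)$ with $f_L:=F_T|_L$ satisfies the graph desiderata of Theorem~\ref{thm:graph_desiderata}. Desiderata~\ref{des:connected}, \ref{des:degree} and \ref{des:cycle_basis} are literally the partial extractor desiderata. For \ref{des:path_matching}: since $\ML$ acts trivially off $L$, for every check $S$ the anti-commutation set obeys $K(S,\ML)\subseteq\supp(S)\cap L\subseteq\supp(S)\cap T$ and $|K(S,\ML)|$ is even, so (\ref{partial-des:path_matching}) supplies a path matching $\mu(S,\ML)\subseteq E_i\subseteq E$ of the required sparsity. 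For \ref{des:relative_expansion}: $F_T(L)\subseteq F_T(T)$, so the Restriction Lemma~\ref{lem:restriction_relative_expansion} with $t=t'=d$ gives $\beta_d(\EXTG_T, F_T(L))\ge\beta_d(\EXTG_T, F_T(T))\ge 1$.

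With this in hand I would define the partially augmented system $\partialsystem{T}$ by transcribing Definition~\ref{def:extractor_system} with $T$ in place of $Q$: keep all $\qcode$ data and check qubits, add one vertex check per $v\in V$ (connected to $Q[i]$ whenever $v=F_T(Q[i])$ for some $Q[i]\in T$), one edge qubit per $e\in E$, one cycle check per basis cycle, and wire each $H[S]$ to the edge qubits of $E_i$. Part~1 follows at once: $\partialsystem{T}$ is LDPC by the bounded degrees and congestion, and the extractor contributes $O(|T|(\log|T|)^3)$ data and check qubits. For Part~2, given $\ML$ with $L\subseteq T$, activating only the $\ML$-dependent couplings — the vertex checks $F_T(q)$ acting on $q$ by $\ML_q$ (stabilizers~\ref{stabilizers:port-vtxs}) and the checks $S$ with $K(S,\ML)\ne\varnothing$ extended by $X$ along a path matching inside $E_i$ (stabilizers~\ref{stab:modified_code_checks}) — realizes precisely the merged code $\qcode(\ML,\EXTG_T,F_T|_L)$, which by the previous paragraph together with Theorem~\ref{thm:graph_desiderata} is well-defined, LDPC, encodes the remaining $k-1$ logical qubits, and has distance at least $d$. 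The timing and fault-distance claims are then immediate from the Measurement Protocol (Definition~\ref{def:measurement_protocol}) and the Fault-Tolerance theorem (Theorem~\ref{thm:fault_distance}): running $d$ syndrome rounds before initialization, $d$ during the merge, and $d$ after the correction yields a space-time fault distance $d$ logical measurement of $\ML$ in $O(d)$ cycles.

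I expect no genuine obstacle, since every step transcribes an already-proved statement with $Q$ replaced by $T$. The one point meriting care is verifying that the localized path-matching desideratum~\ref{partial-des:path_matching}, which quantifies only over even subsets of $\supp(S_i)\cap T$, still covers every anti-commutation set $K(S,\ML)$ that can arise; this reduces to the observation that $\ML$ acts trivially outside $L\subseteq T$, hence $K(S,\ML)\subseteq\supp(S)\cap T$. A secondary subtlety, absent when $T=Q$ since then $|T|=n\ge d$ automatically, is that $|T|$ may be smaller than $d$; as noted this does not affect the relative-expansion conclusion because the minimum defining $\beta_d(\EXTG_T, F_T(T))$ is already capped by $|T|$.
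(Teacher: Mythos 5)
Your proposal is correct and follows essentially the same route as the paper, which states this theorem as a ``straightforward adaptation'' of Lemma~\ref{lem:single_block_extractor}, Lemma~\ref{lem:extractor}, Definition~\ref{def:extractor_system} and Theorem~\ref{thm:EAC_block} with $Q$ replaced by $T$; you have simply written out that adaptation explicitly. Your handling of the two subtleties — that $K(S,\ML)\subseteq\supp(S)\cap T$ so the localized path-matching desideratum suffices, and that $|T|<d$ poses no problem (indeed that case is vacuous, since any nontrivial logical operator has weight at least $d$) — is correct and consistent with the paper's intent.
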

However, we warn that unlike full extractors, bridging two partial extractors may not give a larger partial extractor satisfying the above desiderata!
More precisely, given partial extractors and port functions $\EXTG_T, \EXTG_R, F_T, F_R$, joining $\EXTG_T, \EXTG_R$ at their bases $P_T = F_T(T), P_R = F_R(R)$ with a bridge gives a new graph $\EXTG_{T\cup R}$, which is guaranteed to satisfy all desiderata of Definition~\ref{def:partial_extractors_desiderata} except, unfortunately,~\ref{partial-des:path_matching}.
This is because for a stabilizer check $S$ with support in both $T$ and $R$, we may take qubits $t\in T\cap \supp(S)$, $r\in R\cap \supp(S)$, and it is not guaranteed that there is a short path between $t, r$ in $\EXTG_{T\cup R}$. 
Consequently, from a worst case point of view $\EXTG_{T\cup R}$ enables measurements of a subset of logical operators supported on $T\cup R$.

\begin{lemma}\label{lem:joined_partial_extractors}
    Let $\qcode$ be a $[[n,k,d]]$ quantum code with qubits $Q$, and let $\EXTG_T, \EXTG_R, F_T, F_R$ be partial extractor graphs and port functions satisfying the extractor desiderata with respect to $T, R\subseteq Q$.
    Suppose $T\cap R = \varnothing$, connect $\EXTG_T, \EXTG_R$ into $\EXTG_{T\cup R}$ using a bridge given by Lemma~\ref{lem:bridge}, and let $F_{T\cup R} = F_T\cup F_R$. 
    For any logical operator $\ML = \ML_1\ML_2$, wth $\ML_1$ supported on $L_1\subseteq T$ and $\ML_2$ supported on $L_2\subseteq R$, the graph $\EXTG_{T\cup R}$ and port function $F_{T\cup R}\vert_{L_1\cup L_2}$ satisfy the measurement graph desiderata (Theorem~\ref{thm:graph_desiderata}) for $\ML$. Note that one of $\ML_1, \ML_2$ could be identity.
\end{lemma}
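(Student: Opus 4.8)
The plan is to reduce the statement to the Bridging Lemma (Lemma~\ref{lem:bridge}), after first recording a ``partial'' analogue of the Extractor Lemma (Lemma~\ref{lem:extractor}). The reason the general obstruction described just before the lemma — a check $S$ straddling $T$ and $R$ with no short path matching in $\EXTG_{T\cup R}$ — does not arise here is that $\ML_1$ and $\ML_2$ are each genuine logical operators (or the identity) of $\qcode$, so each commutes with every stabilizer check individually. Consequently, for a check $S$, the set of qubits on which $S$ and $\ML = \ML_1\ML_2$ anticommute splits as an even-size subset of $T$ coming from $\ML_1$ and an even-size subset of $R$ coming from $\ML_2$, and each half can be matched within the corresponding side's partial extractor without ever using a bridge edge.

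The first step is to prove the partial Extractor Lemma: if $\EXTG_T, F_T$ satisfy the partial extractor desiderata (Definition~\ref{def:partial_extractors_desiderata}) with respect to $T$, and $\ML_1$ is a logical operator of $\qcode$ with support $L_1\subseteq T$ (allowing $\ML_1 = I$, $L_1=\varnothing$), then $\EXTG_T$ together with $F_T\vert_{L_1}$ satisfies the measurement graph desiderata of Theorem~\ref{thm:graph_desiderata} for $\ML_1$. The argument mirrors Lemma~\ref{lem:extractor}: Desiderata~\ref{des:connected},~\ref{des:degree},~\ref{des:cycle_basis} coincide with the corresponding partial-extractor conditions~\ref{partial-des:connected},~\ref{partial-des:degree},~\ref{partial-des:cycle_basis} and make no reference to $\ML_1$; Desideratum~\ref{des:path_matching} holds because $K(S_i,\ML_1)\subseteq \supp(\ML_1) = L_1 \subseteq T$ has even size (as $\ML_1$ commutes with $S_i$), so partial-des~\ref{partial-des:path_matching} supplies a sparse path matching $\mu_i\subseteq E_i$ of $F_T(K(S_i,\ML_1))$; and Desideratum~\ref{des:relative_expansion} follows from $F_T(L_1)\subseteq F_T(T)$ together with the Restriction Lemma (Lemma~\ref{lem:restriction_relative_expansion}), giving $\beta_d(\EXTG_T,F_T(L_1))\ge \beta_d(\EXTG_T,F_T(T))\ge 1$ by partial-des~\ref{partial-des:relative_expansion}. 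The same conclusion holds for $\EXTG_R, F_R, \ML_2, R$.

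With both halves in hand, I would invoke the Bridging Lemma. Since $T\cap R=\varnothing$ and $L_1\subseteq T$, $L_2\subseteq R$, the supports are disjoint ($L_1\cap L_2=\varnothing$), so Lemma~\ref{lem:bridge} applies verbatim with $G_1 = \EXTG_T$, $G_2 = \EXTG_R$, $f_1 = F_T\vert_{L_1}$, $f_2 = F_R\vert_{L_2}$, producing a bridge $B$ of $d$ edges for which $\EXTG_{T\cup R}$ with port function $F_T\vert_{L_1}\cup F_R\vert_{L_2} = F_{T\cup R}\vert_{L_1\cup L_2}$ satisfies the measurement graph desiderata for $\ML_1\ML_2 = \ML$, which is the claim. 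In the degenerate case where one factor, say $\ML_2$, is the identity, I would instead take the bridge between the full ports $F_T(T)$ and $F_R(R)$ (as in Lemma~\ref{lem:bridging_extractors}, via Lemma~\ref{lem:sparse_bridge}) to keep $\EXTG_{T\cup R}$ connected; Desideratum~\ref{des:LDPC} is preserved by bridge sparsity, Desideratum~\ref{des:relative_expansion} by the Bridged Expansion Lemma (Lemma~\ref{lem:bridged_expansion}) combined with the Restriction Lemma, and the path matchings remain entirely inside $\EXTG_T$ by the partial Extractor Lemma. The one genuinely delicate point is the parity assertion in the second step: $|K(S,\ML_1)|$ (and $|K(S,\ML_2)|$) must be even so that a short path matching exists on a single side; this is exactly where we rely on $\ML_1,\ML_2$ being bona fide logical operators of $\qcode$ rather than arbitrary restrictions of $\ML$, since otherwise both could be odd and the only matchings would be forced across the bridge with no $O(1)$-length guarantee — precisely the obstruction flagged before the lemma. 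Everything else is a black-box use of the surgery toolkit.
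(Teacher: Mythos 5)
Your parity observation is exactly the crux, and it is the content the paper leaves implicit (its ``proof'' is just the remark that the lemma follows by adapting Lemma~\ref{lem:bridging_extractors} and Theorem~\ref{thm:EAC_block}): because $\ML_1$ and $\ML_2$ are individually logical (or identity), each commutes with every check $S$, so $K(S,\ML)=K(S,\ML_1)\sqcup K(S,\ML_2)$ splits into two even sets, one inside $T$ and one inside $R$, and each is matched within its own partial extractor via desideratum~\ref{partial-des:path_matching} with no bridge edges — precisely why the obstruction flagged before the lemma does not bite. Your ``partial Extractor Lemma'' (restriction of $F_T$ to $L_1$, plus Lemma~\ref{lem:restriction_relative_expansion} for desideratum~\ref{des:relative_expansion}) is also correct and mirrors Lemma~\ref{lem:extractor}.

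The one structural slip is in your main (non-degenerate) case: you invoke Lemma~\ref{lem:bridge} with $f_1=F_T\vert_{L_1}$, $f_2=F_R\vert_{L_2}$, so the bridge it produces runs between $F_T(L_1)$ and $F_R(L_2)$ and therefore depends on the particular operator $\ML$. The lemma, however, fixes $\EXTG_{T\cup R}$ \emph{before} quantifying over $\ML$ — this is what Corollary~\ref{cor:bridged_partial_extractors} needs, since the augmented system $\partialsystem{T\cup R}$ is built once and must measure all product operators with fixed connectivity. As written, your main argument proves a ``for each $\ML$ there exists a bridged graph'' variant, not the statement about the single given graph. The fix is exactly the argument you already give for the degenerate case, and it is the paper's intended adaptation of Lemma~\ref{lem:bridging_extractors}: construct one bridge of $d$ edges between the full ports $F_T(T)$ and $F_R(R)$ (sparse cycles via Lemma~\ref{lem:sparse_bridge}), then for every $\ML=\ML_1\ML_2$ verify the graph desiderata on this fixed graph — connectivity, degree and cycle basis are operator-independent, the path matchings are the within-side ones from your parity argument, and relative expansion follows from Lemma~\ref{lem:bridged_expansion} applied to the full ports together with the Restriction Lemma, giving $\beta_d(\EXTG_{T\cup R}, F_{T\cup R}(L_1\cup L_2))\ge 1$. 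With that reorganization your proof matches the paper's route.
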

\begin{corollary}\label{cor:bridged_partial_extractors}
    The partially augmented system $\partialsystem{T\cup R}$ constructed by Definition~\ref{def:extractor_system} using $\EXTG_{T\cup R}$ and $F_{T\cup R} = F_T\cup F_R$ can perform measurements of all operators of the form $\ML = \ML_1\ML_2$, where $\ML_1$ is supported on $L_1\subseteq T$ and $\ML_2$ is supported on $L_2\subseteq R$, with $O(d)$ syndrome measurement cycles and fault distance $d$.
\end{corollary}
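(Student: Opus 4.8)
\emph{Reduction to Lemma~\ref{lem:joined_partial_extractors}.} The corollary packages Lemma~\ref{lem:joined_partial_extractors} together with the augmented-system construction of Definition~\ref{def:extractor_system} and the surgery toolkit. The plan: run Definition~\ref{def:extractor_system} on the bridged graph $\EXTG_{T\cup R}$ and port function $F_{T\cup R}=F_T\cup F_R$ to obtain the fixed system $\partialsystem{T\cup R}$; for any admissible $\ML=\ML_1\ML_2$ (with $\ML_1$ on $L_1\subseteq T$, $\ML_2$ on $L_2\subseteq R$), Lemma~\ref{lem:joined_partial_extractors} certifies that $(\EXTG_{T\cup R},F_{T\cup R}\vert_{L_1\cup L_2})$ meets the measurement-graph desiderata of Theorem~\ref{thm:graph_desiderata} for $\ML$, so the code-switching protocol of Definition~\ref{def:measurement_protocol} between $\qcode$ and $\qcode(\ML,\EXTG_{T\cup R},F_{T\cup R}\vert_{L_1\cup L_2})$ performs a logical measurement of $\ML$, and Theorem~\ref{thm:fault_distance} upgrades this to a fault-tolerant implementation with $O(d)$ syndrome-measurement cycles and fault distance $d$. (One point to verify in the construction is that the edge sets $E_i$ attached by Definition~\ref{def:extractor_system} to each code check $S_i$ are chosen rich enough to contain a path matching of $F_{T\cup R}(K(S_i,\ML))$ for \emph{every} admissible $\ML$; for stabilizers straddling $T$ and $R$ this is exactly the delicate point discussed below.) So the substance is Lemma~\ref{lem:joined_partial_extractors}, which I would prove by mirroring the proof of the Bridging Lemma~\ref{lem:bridge} (equivalently Lemma~\ref{lem:bridging_extractors}), the difference being that the inputs are partial-extractor data attached to the code rather than measurement-graph data attached to a single operator.

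\emph{The desiderata that transfer directly.} Connectedness (\ref{des:connected}) is clear: $\EXTG_T,\EXTG_R$ are connected and the bridge is nonempty. Bounded degree is immediate since a bridge is a matching, adding at most one edge at each port vertex. For the cycle-basis condition I would invoke the Sparsity-of-Bridge Lemma~\ref{lem:sparse_bridge} verbatim as in Lemma~\ref{lem:bridge}: the two sides carry $O(1)$-congestion, $O(1)$-length cycle bases, the ports $F_T(T),F_R(R)$ induce connected subgraphs (by the construction of the extractor graphs), and hence the bridge supplied by Lemma~\ref{lem:bridge} yields a joined cycle basis of congestion $O(1)$ and length $O(1)$. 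For the distance condition (\ref{des:relative_expansion}): the Restriction Lemma~\ref{lem:restriction_relative_expansion} together with partial desideratum~\ref{partial-des:relative_expansion} give $\beta_d(\EXTG_T,F_T(T))\ge1$ and $\beta_d(\EXTG_R,F_R(R))\ge1$; the Bridged-Expansion Lemma~\ref{lem:bridged_expansion} with a $d$-edge bridge (here we use that $T$ and $R$ each contain a logical operator, so $|T|,|R|\ge d$) gives $\beta_d(\EXTG_{T\cup R},F_T(T)\cup F_R(R))\ge1$, and one more application of the Restriction Lemma yields $\beta_d(\EXTG_{T\cup R},F_{T\cup R}(L_1\cup L_2))\ge1$ since $F_{T\cup R}(L_1\cup L_2)=F_T(L_1)\cup F_R(L_2)\subseteq F_T(T)\cup F_R(R)$.

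\emph{Path matchings, the easy case.} The remaining desideratum, \ref{des:path_matching}, is the only place the two sides genuinely interact. Fix $S\in\MS$ and set $K:=K(S,\ML)$; it has even size, and since $\supp(\ML)\subseteq T\cup R$ with $T\cap R=\varnothing$ it splits disjointly as $K=K_T\sqcup K_R$ with $K_T\subseteq\supp(S)\cap T$ and $K_R\subseteq\supp(S)\cap R$. If $|K_T|$ (hence $|K_R|$) is even — which is automatic in the important special case where $\ML_1$ and $\ML_2$ are each a logical operator, stabilizer, or the identity, since then every stabilizer commutes with each factor separately — then partial desideratum~\ref{partial-des:path_matching} on $T$ gives a path matching $\mu_T\subseteq E_i^T$ of $F_T(K_T)$, likewise $\mu_R\subseteq E_i^R$ of $F_R(K_R)$, and I would set $\mu(S,\ML):=\mu_T\cup\mu_R$. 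This is a path matching of $F_{T\cup R}(K)$ with $O(1)$ edges, and since each edge of $E_i^T$ (resp.\ $E_i^R$) lies in $O(1)$ of the one-sided sets, the global per-edge load stays $O(1)$.

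\emph{Path matchings, the hard case — and the main obstacle.} The genuinely new situation is $|K_T|,|K_R|$ both odd, which can arise only for stabilizers $S$ whose support straddles $T$ and $R$. My plan is to pick representatives $q_T\in K_T$ and $q_R\in K_R$, take one-sided path matchings $\mu_T'$ of $K_T\setminus\{q_T\}$ and $\mu_R'$ of $K_R\setminus\{q_R\}$ (now even subsets of $\supp(S)\cap T$ resp.\ $\supp(S)\cap R$), and join $F_T(q_T)$ to $F_R(q_R)$ by a short path $\gamma$ crossing a single bridge edge, setting $\mu(S,\ML):=\mu_T'\cup\mu_R'\cup\gamma$; a parity count (any path matching with $|K_T|$ odd must use an odd number of bridge edges) shows this is indeed a path matching of $F_{T\cup R}(K)$. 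The obstacle — the crux of the lemma, and equivalently the content of the ``rich enough $E_i$'' remark above — is congestion: each straddling stabilizer anti-commuting with $\ML$ forces at least one bridge crossing, so the crossing paths must be routed so that no edge, and in particular no bridge edge, carries more than $O(1)$ of them. This requires the number of such straddling stabilizers to be $O(d)$, which holds in the regime the partial-extractor primitive is designed for ($|T|,|R|=O(d)$, bounding that count by $O(d)$, so the $d$-edge bridge of Lemma~\ref{lem:bridge} is already adequate); outside that regime one would instead enlarge the bridge via Lemma~\ref{lem:sparse_bridge} and route all crossing paths along one fixed bounded-congestion spanning tree of the two sides. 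Once \ref{des:path_matching} is secured, Lemma~\ref{lem:joined_partial_extractors} follows, and with it the corollary as in the first paragraph.
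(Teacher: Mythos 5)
Your first three paragraphs are essentially the paper's argument: the paper proves this corollary by noting it is a ``simple adaptation'' of Lemma~\ref{lem:bridging_extractors} (itself modeled on the Bridging Lemma~\ref{lem:bridge}) combined with Theorem~\ref{thm:EAC_block} and the surgery toolkit (Definition~\ref{def:measurement_protocol}, Theorem~\ref{thm:fault_distance}). In particular your ``easy case'' for desideratum~\ref{des:path_matching} — split $K(S,\ML)=K_T\sqcup K_R$ with both parts even and take the union of one-sided path matchings supplied by partial desideratum~\ref{partial-des:path_matching} — is exactly the mechanism in the appendix proof of Lemma~\ref{lem:bridge}, and your expansion/cycle-basis/degree arguments match the paper's.

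Where you diverge is the ``hard case'' with $|K_T|,|K_R|$ both odd. That case is deliberately \emph{not} part of the corollary: in the paper's intended reading, $\ML_1$ and $\ML_2$ are each logical operators (or identity) on their respective supports, so every stabilizer commutes with each factor separately and the odd-split situation never occurs; the paragraph immediately following the corollary states that for operators which cannot be decomposed into such a disjoint product the LDPC guarantee is \emph{not} claimed, and the discussion before Lemma~\ref{lem:joined_partial_extractors} identifies precisely your obstacle (no guaranteed short path between $t\in T\cap\supp(S)$ and $r\in R\cap\supp(S)$) as the reason bridged partial extractors fail desideratum~\ref{partial-des:path_matching} in general. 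Your proposed repair for that case — bounding the number of straddling anti-commuting stabilizers by $O(d)$ via an assumption $|T|,|R|=O(d)$ and routing crossings through the $d$-edge bridge — rests on hypotheses the paper does not make and on an unproven short-path/low-congestion claim (the bridge endpoints produced by Lemma~\ref{lem:sparse_bridge} need not lie near $F_T(q_T)$ or $F_R(q_R)$), so it should not be presented as part of the proof; the paper's own remedy is simply to add further edges to $\EXTG_{T\cup R}$ when such operators must be measured. With the scope restricted as intended, your argument is complete and coincides with the paper's.
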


The proofs are simple adaptations of proofs for Lemma~\ref{lem:bridging_extractors} and Theorem~\ref{thm:EAC_block}.
For operators $\ML$ that cannot be decomposed into a disjoint product, one could still try to measure them with $\EXTG_{T\cup R}$, but the theoretical guarantee on the LDPC property no longer holds.
To join two partial extractors into a larger partial extractor, which can measure such operators $\ML$ without breaking the LDPC property, one should add more edges to the bridged graph $\EXTG_{T\cup R}$ to satisfy desideratum~\ref{partial-des:path_matching}.

In retrospect, the 103-qubit ancilla system on the $[[144,12,12]]$ bivariate bicycle code constructed in Ref.~\cite{cross2024improved} can be seen as a partial extractor.
It is made of two auxiliary graphs, $G_X = (V_X, E_X)$ and $G_Z = (V_Z, E_Z)$, connected through a bridge. 
Vertices in $V_X, V_Z$ are each connected to two qubits in the gross code (instead of one, as standard in our constructions), which means we actually have two port functions for each graph. 
By activating coupling edges $\leftrightarrows$ corresponding to different port functions, this partial extractor system enables logical measurements of eight different operators (see also Table~1 of Ref.~\cite{cross2024improved}).
While the auxiliary graphs are not strictly speaking expanding (their Cheeger constants are less than 1), the merged code distances were verified to be 12 via integer programming. 
This design was improved in the bicycle architecture of Ref.~\cite{yoder2016universal}, where the ancilla systems are called logical processing units. 

These examples should be seen as a proof-of-concept: on a code with inherent constant-depth logical Clifford gates,
we can build a (small) partial extractor to significantly boost the computationally capability of our code, even though the partial extractor system may have limited measurement capability by itself.

Moreover, partial extractors can be especially useful for codes where logical operators can be found with disjoint supports, such as the bivariate bicycle codes~\cite{bravyi2024highthreshold} and hypergraph product codes~\cite{tillich2013quantum}, because we can build partial extractors on such disjoint supports and connect them with bridges.
As discussed above, it is important to make sure desideratum~\ref{ext-des:path_matching} is satisfied.

\subsection{Extractors: Practical Considerations}\label{sec:extractor_practical}

As in the case of auxiliary graph surgery, the extractor desiderata (Definition~\ref{def:extractor_desiderata} and~\ref{def:partial_extractors_desiderata}) and construction (Lemma~\ref{lem:extractor}) we discussed in this section are theoretical upper bounds that have immense room for practical optimizations. We refer readers to Section~\ref{sec:toolkit_practical} as the same ideas apply.
In this section, we focus more on optimizing the measurement and computational capacity of (partial) extractors.

We first consider full, single-block extractors.     
A natural question to ask is, does every operator require the use of the full extractor to be measured? In other words, for a logical operator $\ML$ of small physical support, can we use a subgraph $\EXTG'$ of $\EXTG$ to perform the measurement?
If $\EXTG'$ satisfy the graph desiderata for $\ML$, then evidently the measurement can be done fault-tolerantly.
However, that is not guaranteed. In practice one should check whether $\EXTG'$ is distance preserving, and potentially add more edges/qubits if not.
We note that using a subgraph of $\EXTG$ instead of the full extractor could potentially lead to a lower logical error rate, and require fewer syndrome measurement cycles per logical cycle.

A related idea is, can an extractor perform parallel measurements, similar to the parallel surgery schemes in Refs.~\cite{zhang2024time,cowtan2025parallel}?
Consider two commuting operators $\ML_1,\ML_2$. If $\ML_1$ and $\ML_2$ have overlapping support, then the current design of extractors cannot measure them simultaneously.
In the case where they have no overlap, let $P_1 = F(L_1), P_2 = F(L_2)$. 
We can find an edge cut in $\EXTG$ which separates the vertex sets $P_1, P_2$, and deactivate those edges (which corresponds to data qubits) in $\EXTG$. 
This effectively cuts an extractor $\EXTG$ into two subgraphs $\EXTG_1,\EXTG_2$, and the same discussions in the previous paragraph apply.

Next we discuss the construction and usage of partial extractors.
As discussed at the start of Section~\ref{sec:partial_extractors}, for QLDPC codes with desirable structure or low-depth Clifford gates, partial extractor(s) may be combined with such gates to reach the measurement capacity of a full extractor. 
This is precisely the case in the 103-qubit system for the gross code: while the partial extractor is only supported on 4 logical operators (2 logical operators and their ZX dual, see Section~9.1 of Ref.~\cite{bravyi2024highthreshold}), we can conjugate the set of measurable operators with the automorphism gates on the code, so that a larger set of logical operators supported on the 12 logical qubits can be measured, sufficient to perform 11-qubit Clifford computation.
Evidently, this approach incurs a compilation time overhead, which one should balance with the space overhead in practice.
We also note that one could pair a partial extractor with multiple port functions as in the 103-qubit system, if there is symmetry in the space of logical operators. 
This technique will boost the measurement capacity of a partial extractor at little cost.

Now consider multiple non-overlapping partial extractors on the same code bridged together. When the bridges are activated, we can measure product operators as in Corollary~\ref{cor:bridged_partial_extractors}. When the bridges are deactivated, we can perform parallel measurements of non-overlapping operators.
An interesting subcase of parallel measurement is to simultaneously measure multiple non-overlapping representatives of the same logical operator.
As noted in Remark~5 of~\cite{williamson2024low}, by measuring $m$ equivalent representatives in parallel and taking majority vote on the measurement result, we only need $O(d/m)$ rounds of syndrome measurement for to maintain fault distance $d$, instead of $O(d)$ rounds. 

We further note that many ideas discussed in this section are especially applicable to hypergraph product codes~\cite{tillich2013quantum}, and potentially lifted/balanced product codes~\cite{panteleev2021degenerate,breuckmann2021balanced,panteleev2022good}.
A key feature of hypergraph product codes is that their logical operators have representatives with disjoint supports and well-understood structure.\footnote{See Ref.~\cite{eczoo_hypergraph_product} for a comprehensive yet non-exhaustive list of references on hypergraph product codes. Appendix B of Ref.~\cite{quintavalle2022reshape} describes such a basis of operators.}
Therefore, it is highly plausible to exploit such structures and construct partial or full extractors and port functions that have low space overhead. 
We defer these studies to future works.

The key motivation behind introducing the partial extractors in this paper is to illustrate that there is a full range of options between a full single-block extractor and measurement graphs supported on individual logical operators. 
We conclude that the construction and optimization of extractors on practically relevant codes is a promising direction with vast room for exploration.

\section{QLDPC Architecture for Pauli-Based Computation}\label{sec:architectures}

With all the required machinery in place, we now discuss how to build a fault-tolerant, fixed-connectivity computational architecture with EAC blocks. 
Our architecture natively supports logical measurements of great flexibility; consequently, when combined with sources of high quality magic states, we can realize universal quantum computation through Pauli-based computation (PBC)~\cite{bravyi2016trading}. 
As there are a plethora of proposals for magic state factories, both practical and asymptotic, in this work we focus on the PBC architecture and leave the choice of magic state factory as a user-defined variable. 

In Section~\ref{sec:main_architecture} we construct our \textbf{extractor architecture}, and discuss how a magic state factory can be integrated. In Section~\ref{sec:compilation} we discuss how to compile and compute with this architecture. As we illustrate, similar to the popular work Ref.~\cite{litinski2019game}, most of the Clifford gates in a circuit can be compiled away, which means the computation would be dominated by $\pi/8$ rotations. 
However, unlike the simplified case in Ref.~\cite{litinski2019game} where each $\pi/8$ rotation acts on the entire computer, our architecture consists of many EAC blocks, and we allow $\pi/8$ rotations to act simultaneously on each EAC block. This means computation on our architecture can be much more parallel.

\subsection{Architecture}\label{sec:main_architecture}

Let $\qcode$ be a $[[n,k,d]]$ quantum LDPC code, and let $\fullsystem$ be an EAC block. An extractor architecture based on $\fullsystem$ can be specified by a constant degree graph $\map = (\blocks, \bridges)$, which we call the \textbf{block map}. 
Every vertex in $\blocks$ corresponds to an EAC block, and every edge in $\bridges$ corresponds to a bridge system connecting two EAC blocks. 

In more detail, let us enumerate the EAC blocks as $\system{1}, \cdots, \system{B}$. For every edge $e = (i,j)\in \bridges$, we connect the extractors $\EXT_i, \EXT_j$ with a bridge system as in Lemma~\ref{lem:bridging_extractors}. 
We denote the full architecture as $\archi = (\fullsystem, \map)$. 
Let $B = |\blocks|$ denote the number of blocks, and $R = |\bridges|$ denote the number of bridges. 
Since $\map$ is a constant degree graph, we let $R = \arcDegree B$. 

\begin{figure}[H]
    \centering
    \includegraphics[width = 0.9\textwidth]{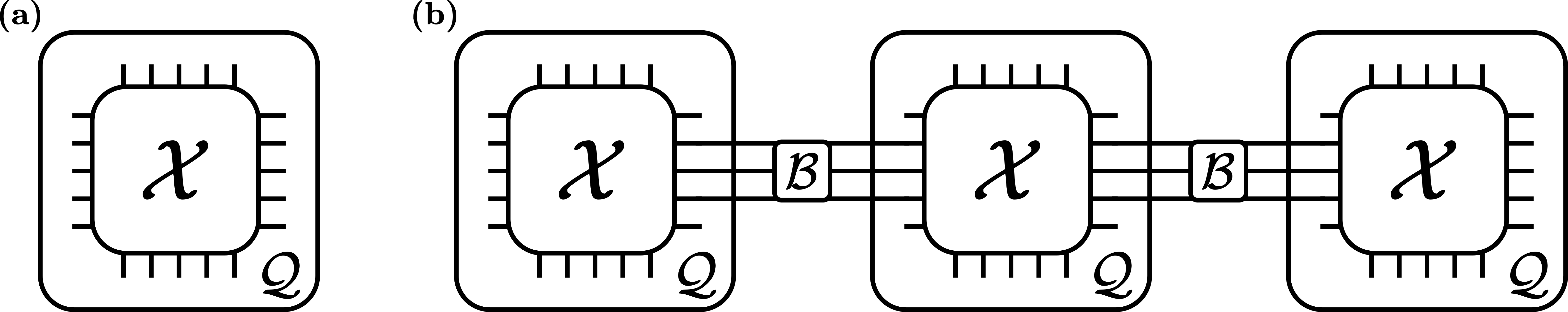}
    \caption{Abstraction of the extractor architecture. 
    \textbf{(a)} A high-level depiction of an EAC block. 
    \textbf{(b)} Sketch of a simple extractor architecture $\archi$, where the block map $\map$ is a 3-vertex line graph. The EAC blocks are connected by bridge systems of qubits and checks.
    }
    \label{fig:architecture_1_copy}
\end{figure}

$\archi$ has the following computational and structural properties. 
\begin{proposition}[Architecture Parameters]\label{prop:parameters}
    $\archi$ has the following parameters.
    \begin{itemize}[itemsep = 0pt]
        \item The maximum degree of connectivity of qubits (data or check) in $\archi$ is $O(1)$. 
        
        \item Suppose an EAC block uses $\ratio n$ many physical (data and check) qubits. Then $\archi$ has a total of $B(\ratio n + \alpha (2d-1))$ qubits, where $\ratio Bn$ comes from the EAC blocks and $\alpha B(2d-1)$ comes from the bridges.
        
        \item $\archi$ operates on a total of $Bk$ many logical qubits. 
        For later compilation purposes, we allocate one logical qubit in every EAC block to be a logical ancilla, so the full workspace size is $K:= B(k-1)$. 
        At memory mode, when only the code blocks $\qcode_1, \cdots, \qcode_B$ are active, the logical qubits are protected by code distance $d$. 
    \end{itemize}
\end{proposition}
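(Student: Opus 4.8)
The plan is to verify the three bulleted claims in turn; each reduces to bookkeeping on top of results already established, so the argument is short. \textbf{Bounded degree.} By Theorem~\ref{thm:EAC_block} each EAC block $\system{i}$ is LDPC, so every data or check qubit inside a single block has $O(1)$ connectivity; the only new connections in $\archi$ are those added by the bridges. I would then use two facts: (i) since the block map $\map$ is a constant-degree graph, each extractor $\EXT_i$ is incident to at most a constant number of bridges; and (ii) a bridge (Lemma~\ref{lem:sparse_bridge}, Lemma~\ref{lem:bridging_extractors}) is a set of \emph{non-overlapping} edges, so it adds at most one neighbour to any given port-vertex check qubit, while the joined graph inherits a cycle basis whose congestion grows by at most $2$ and whose maximum cycle length stays bounded by $\max(\gamma,8)$. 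Applying (ii) iteratively over the $O(1)$ bridges touching a fixed block (Remark~\ref{rmk:repeated_bridging}) keeps the congestion $O(1)$, so each bridge-edge data qubit lies in $O(1)$ cycle checks and each new cycle check has weight $O(1)$; combined with (i) this gives maximum degree $O(1)$ throughout $\archi$.

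\textbf{Qubit count and logical count.} The qubit total is a direct sum: the $B$ EAC blocks contribute $\ratio B n$ qubits by hypothesis, and — recalling Definition~\ref{def:bridge} and the remark following Lemma~\ref{lem:bridged_expansion} — each bridge of $d$ edges contributes $d$ new data qubits together with $d-1$ new independent cycles, each realised as one check qubit by item~4 of Definition~\ref{def:extractor_system}, i.e.\ $2d-1$ qubits per bridge; with $R=\alpha B$ bridges this is $\alpha B(2d-1)$, for a grand total $B(\ratio n+\alpha(2d-1))$. For the logical count, each block carries the $k$ logical qubits of $\qcode$ (in memory mode the extractor and bridge edge qubits are stabilized in $\ket{0}$, so the logical content is that of $\qcode$ alone), hence $Bk$ logical qubits; allocating one ancilla per block leaves the workspace $K=B(k-1)$. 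Finally, in memory mode only the $\qcode_i$ stabilizers are measured and the extractor/bridge data qubits are idle in $\ket{0}$, so the active code is $\qcode_1\cup\cdots\cup\qcode_B$ on disjoint qubit sets (Remark~\ref{rmk:multiple_code_blocks}) tensored with ancilla $\ket{0}$'s; restricted to its logical operators this is isomorphic to $\bigcup_i\qcode_i$, whose distance is $d$ (the common distance of the $\qcode_i$), unaffected by the idle ancillae.

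\textbf{Main obstacle.} None of this is deep; the only point that needs care is the bounded-degree claim, specifically ensuring that the congestion and cycle-length parameters of the chosen bases do not blow up when a single extractor is bridged to several neighbours simultaneously. The additive, iterable form of the bridging lemma (Remark~\ref{rmk:repeated_bridging}) resolves this: each of the $O(1)$ bridges incident to a fixed block raises congestion by only $2$ and leaves the maximum cycle length at $\max(\gamma,8)$, so after all of them the relevant parameters are still $O(1)$.
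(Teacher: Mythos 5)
Your proposal is correct and follows essentially the same route as the paper's own (much terser) proof: degree boundedness from the LDPC property of each EAC block plus the constant degree of $\map$ guaranteeing $O(1)$ bridges per extractor with sparse bridge-induced cycles (Lemma~\ref{lem:sparse_bridge}), the qubit count from $d$ data qubits and $d-1$ cycle checks per bridge times $R=\alpha B$ bridges, and the logical count and memory-mode distance from the fact that only the disjoint code blocks $\qcode_1,\cdots,\qcode_B$ are active. The only addition in the paper worth noting is the caveat that cellulating the bridge-induced cycle checks would increase the stated qubit count accordingly.
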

\begin{proof}
    The statements follow directly from our analysis in Section~\ref{sec:extractors}. Particularly, since $\map$ is constant degree, the extractor system in every EAC block is connected by at most $O(1)$ many bridges. 
    Every bridge consists of $d$ data qubits and $d-1$ checks, which contributes less than $\alpha B(2d-1)$ qubits in total. If one further cellulates the cycle checks introduced by bridges, then the qubit count increases accordingly.
\end{proof}
\begin{remark}
    Let $r = k/n$ denote the rate of $\qcode$. For high rate QLDPC codes, especially those we wish to implement in practice, the multiplicative space overhead of $\archi$ scales as 
    \begin{align}
        \frac{B(\ratio n + \alpha (2d-1))}{K} \approx \frac{\ratio}{r} + 2\alpha\frac{d}{k}.
    \end{align}
    For asymptotically good codes where $d, k = \Theta(n)$, the scaling becomes $\ratio + O(1)$.
    From Lemma~\ref{lem:single_block_extractor} we see that $\ratio\le O((\log n)^3)$, while in practice we expect $\ratio$ to be a small constant when optimized on medium-scale QLDPC codes.
\end{remark}

\begin{remark}
    An important feature of $\archi$ is that the amount of global connectivity required to implement $\archi$ is potentially a very small fraction of the overall system size. 
    For instance, we can take $\map$ to be a one-dimensional line or a two-dimensional grid. 
    The choice of $\map$ impacts compilation choices, which we discuss in Section~\ref{sec:compilation}. 
    Nonetheless, so long as $\map$ is a connected graph, $\archi$ can implement an arbitrary PBC circuit on all the logical qubits (when supplied with magic states).
    We further note that as shown in Ref.~\cite{Tremblay2022}, any LDPC architecture can be implemented into a multi-planar layout, where all qubits are embedded into a two-dimensional lattice and connections are partitioned into planar subsets. 
\end{remark}

We now consider the computational capacity of $\archi$. 
From here on we use the term ``one logical cycle'' to denote $O(d)$ rounds of syndrome measurement.\footnote{For simplicity of discussions, we do not differentiate between syndrome measurement cycles of different QLDPC codes.} 
From Section~\ref{sec:extractors}, it is clear that if $\map$ is connected, then any logical Pauli operator supported on the $K$ logical qubits of $\archi$ can be measured in one logical cycle. 
This is, however, not parallel. 
Instead, we consider partitions of $\map$ into connected subgraphs. 

\begin{definition}[Subgraph Partition]\label{def:subgraph_partition}
    For a graph $G = (V, E)$, a \textit{connected subgraph partition} of $G$ is a partition of $V$ into $V = S_1\sqcup\cdots \sqcup S_p$ such that every set of vertices $S_i$ induces a connected subgraph in $G$.
\end{definition}

\begin{proposition}[Computational Capacity]\label{prop:measurements}
    Let $\blocks = \BS_1\sqcup \cdots \sqcup \BS_p$ be a connected subgraph partition of $\map$. Let $\MO = \{O_1, \cdots, O_p\}$ be a set of Pauli operators such that $O_i$ is supported on the logical qubits in the code blocks in $\BS_i$, namely $\cup_{j\in \BS_i}\qcode_j$.
    Then the operators in $\MO$ can be measured on $\archi$ in parallel in one logical cycle.
\end{proposition}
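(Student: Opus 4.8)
The plan is to treat each part $\BS_i$ of the partition separately, build a single (multi-block) extractor for the blocks of $\BS_i$, invoke the existing surgery toolkit to measure $O_i$ fault-tolerantly, and then argue that the $p$ resulting protocols act on disjoint physical systems and hence run in parallel within one logical cycle. Fix $i$ and let $\qcode_{\BS_i} := \bigcup_{j\in\BS_i}\qcode_j$ be the joint code on the blocks of $\BS_i$ in the sense of Remark~\ref{rmk:multiple_code_blocks}; by hypothesis $O_i$ is a logical Pauli operator of $\qcode_{\BS_i}$, say with support $L_i$. Because $\BS_i$ induces a connected subgraph of $\map$, fix a spanning tree $\tau_i\subseteq\bridges$ of $\BS_i$. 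Starting from the per-block data $(\EXTG_j,F_j)$, each satisfying the extractor desiderata of Definition~\ref{def:extractor_desiderata} for $\qcode_j$ by Theorem~\ref{thm:EAC_block}, I would iterate the Bridging Extractors Lemma (Lemma~\ref{lem:bridging_extractors}) once per edge of $\tau_i$, as licensed by Remark~\ref{rmk:repeated_bridging}. Each step glues two already-assembled extractors along a physical bridge of $\archi$ and preserves the extractor desiderata (connectedness is trivial; no cross-block stabilizers exist in a disjoint union of codes, so path matchings are inherited; relative expansion follows from Lemma~\ref{lem:bridged_expansion}; sparsity of the new cycles from Lemma~\ref{lem:sparse_bridge}). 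After processing $\tau_i$ we obtain $\EXTG_{\BS_i}$ with port function $F_{\BS_i}=\bigcup_{j\in\BS_i}F_j$ satisfying the extractor desiderata for $\qcode_{\BS_i}$; all bridges of $\archi$ outside $\bigcup_i\tau_i$ (including every bridge crossing between distinct parts) are left deactivated.

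Next I would apply the Extractor Lemma (Lemma~\ref{lem:extractor}): restricting $F_{\BS_i}$ to $L_i$ makes $(\EXTG_{\BS_i},F_{\BS_i}\vert_{L_i})$ satisfy the graph desiderata of Theorem~\ref{thm:graph_desiderata} for $O_i$. Hence the merged code $\qcode_{\BS_i}(O_i,\EXTG_{\BS_i},F_{\BS_i}\vert_{L_i})$ is well-defined and LDPC, $O_i$ is a product of its stabilizers, and by Theorem~\ref{thm:fault_distance} the code-switching protocol of Definition~\ref{def:measurement_protocol} between $\qcode_{\BS_i}$ and this merged code measures $O_i$ fault-tolerantly with space-time fault distance $d$, using $O(d)$ syndrome measurement cycles, i.e.\ one logical cycle.

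It remains to run the $p$ protocols simultaneously. The crucial observation is disjointness: since $\BS_1,\dots,\BS_p$ partition $\blocks$ and every bridge activated for $O_i$ lies in $\tau_i$, hence between two blocks of $\BS_i$, the physical qubits and checks touched by the protocol for $O_i$ — the code and extractor qubits of the blocks in $\BS_i$ and the activated-bridge qubits — are pairwise disjoint over $i$ and disjoint from the deactivated-bridge qubits. Therefore the global measurement code for this logical cycle is the tensor product (disjoint union of Tanner graphs, cf.\ Remark~\ref{rmk:multiple_code_blocks}) of the $p$ merged codes together with trivial single-qubit stabilizers on the idling deactivated-bridge qubits, and executing the $p$ code-switching protocols in lockstep — $O(d)$ rounds of initialize/merge, then a split round, then the correction step, each performed on the disjoint union — is precisely the code-switching protocol for this product code. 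The analysis of Theorem~\ref{thm:fault_distance} then applies componentwise: any fault causing an undetected error must do so within a single component, so the space-time fault distance is $\min_i d = d$. Thus $O_1,\dots,O_p$ are all measured, fault-tolerantly and in parallel, within one logical cycle.

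I expect the only genuinely delicate point to be this last product/disjointness argument — in particular, checking that leaving bridges deactivated (and that the physical bridge qubits bordering two distinct parts are owned by neither protocol) does not shorten the fault distance, and that lockstep execution of the per-part protocols literally realizes the code-switching protocol of the product merged code so that Theorem~\ref{thm:fault_distance} can be applied verbatim. Assembling the per-part extractor along a spanning tree via Lemma~\ref{lem:bridging_extractors} and Remark~\ref{rmk:repeated_bridging}, and reducing to the graph desiderata via Lemma~\ref{lem:extractor}, are routine given the toolkit.
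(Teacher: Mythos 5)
Your proposal is correct and follows essentially the same route as the paper's proof: choose a spanning tree of each induced subgraph, activate only those bridges, assemble a multi-block extractor per part by iterating Lemma~\ref{lem:bridging_extractors}, and measure each $O_i$ in one logical cycle, with parallelism coming from the disjoint supports. The paper simply asserts the parallelism since the protocols act on disjoint code blocks, whereas you spell out the disjoint-union/product-code and componentwise fault-distance argument in more detail; this is a welcome elaboration, not a different method.
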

\begin{proof}
    As illustrated in Section~\ref{sec:extractors}, different measurements on a extractor system can be performed by activating and deactivating different set of edges/qubits.
    To measure operators in $\MO$, for each $\BS_i$, we find a spanning tree $\BT_i$ of the subgraph that $\BS_i$ induces in $\map$. 
    We activate all bridge edges (and cycle checks) in the trees $\BT_1\cdots, \BT_p$, and deactivate all other bridge edges (and cycle checks).
    Now the extractors of the EAC blocks in every set $\BS_i$ is connected by bridges, and by repeated application of Lemma~\ref{lem:bridging_extractors} we see that they form an extractor for the joint codespace $\cup_{j\in \BS_i}\qcode_j$. Therefore we can measure $O_i$ in one logical cycle. Measurements of different $O_i$ can be done in parallel since they are supported on different code blocks. 
\end{proof}

For our compilations, we simply partition $\blocks$ into non-overlapping pairs of vertices connected by an edge in $\bridges$. 
One could also consider partitions of the map $\map$ into vertex-disjoint paths or edge-disjoint paths, which are well studied combinatorial problems and have been utilized for compilation on surface code architectures~\cite{Beverland2022edgedisjoint}. Note that by grouping EAC blocks together or simply choosing larger EAC blocks, we can compile more Clifford gates away at the expense of less parallel magic state teleportation.

To perform universal quantum computation with PBC, the logical measurements we execute need to be supported on magic states as well. As discussed earlier, we leave the choice of magic state factory to the user, and note that there are plenty of options, from practical schemes to asymptotic proposals.

\begin{remark}[Source of magic states]\label{rmk:magicstates}
Efficient generation of high fidelity magic states, especially $\ket{T} = \ket{0} + e^{i\pi/4}\ket{1}$ states, has been a key focus in the study of fault-tolerant quantum computation. 
We briefly list a few schemes that one might consider using in pair with our architecture. 

For practical purposes, conventional distillation schemes~\cite{Knill2004fault,bravyi2005magic,Meier2013magic,Bravyi2012magic} have been studied and optimized in many proposals of magic states factories (see, for instance, Refs.~\cite{Jones2013multilevel,OGorman2017magic,litinski2019magic,Gidney2019efficientmagicstate}, and an experimental demonstration in Ref.~\cite{rodriguez2024experimental}). 
While multi-round distillation is often considered too costly, there has been steady improvements in post-selected magic state injection techniques~\cite{Chamberland2019faulttolerantmagic,Chamberland2020verylow,Bombin2024,Hirano2024zerolevel}, which are not scalable asymptotically but are often cheaper than distillation in practically relevant regimes.
A leading proposal is magic state cultivation~\cite{gidney2024magic}, which injects a $\ket{T}$ state into a surface code of distance~15. 
Using cultivation directly, or loading cultivated $\ket{T}$ state into one round of 15-to-1 or 10-to-2 distillation, will make suitable single-output factories to pair with our architecture.

In the asymptotic regime, following a line of works on distillation~\cite{Bravyi2012magic,Haah2018codesprotocols,Krishna2019towards},
there has been a recent breakthrough in achieving constant space overhead magic state distillation~\cite{wills2024constant} using asymptotically good quantum codes with transversal non-Clifford gates~\cite{nguyen2024goodbinaryquantumcodes,golowich2024asymptoticallygoodquantumcodes}. 
A concurrent distillation procedure proposed in Ref.~\cite{nguyen2024quantum} achieves low spacetime overhead. 
These constructions provide batch-output factories, as they produce a large number of magic states at once.

Another factory proposal that does not require distillation is that of a magic state fountain (a phrase borrowed from Ref.~\cite{zhu2025topological}), where we use a high distance code that supports transversal non-Clifford gates to directly produce logical magic states. A line of recent works~\cite{zhu2023gates,scruby2024quantum,breuckmann2024cups,golowich2024quantum,lin2024transversal,zhu2025topological} has constructed or proposed methods to construct QLDPC codes with transversal or constant-depth non-Clifford gates. 
These codes are not asymptotically good, but they have (close to) constant rate, which is important for a cascading fountain. 
The recent work Ref.~\cite{jain2024high} proposed a collection of more practically relevant codes with transversal $T$ gates, which have better parameters than the three dimensional topological codes~\cite{Bombin2007exact,kubica2015universal,Vasmer2019three}. 
Improving the parameters and performances of these codes to a practically competitive level is an important challenge.
\end{remark}

\begin{figure}[t]
    \centering
    \includegraphics[width = 0.6\textwidth]{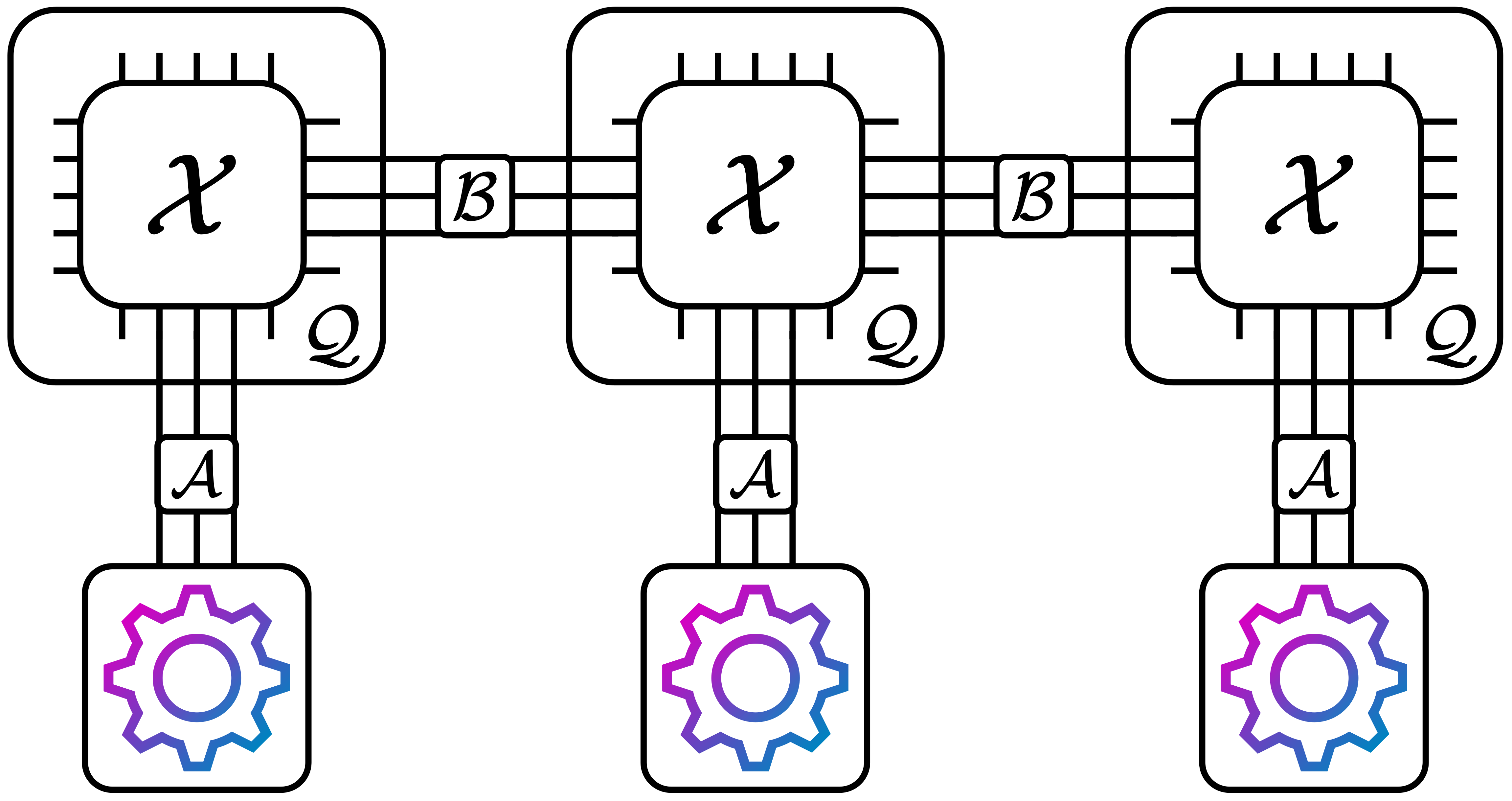}
    \caption{High level depiction of an extractor architecture paired with local factories (colored gear boxes). The factories are connected to the extractors through adapters $\mathcal{A}$. These local factories could be powered by, for instance, magic state cultivation.
    }
    \label{fig:architecture_local_fact}
\end{figure}

\begin{figure}[t]
    \centering
    \includegraphics[width = 0.6\textwidth]{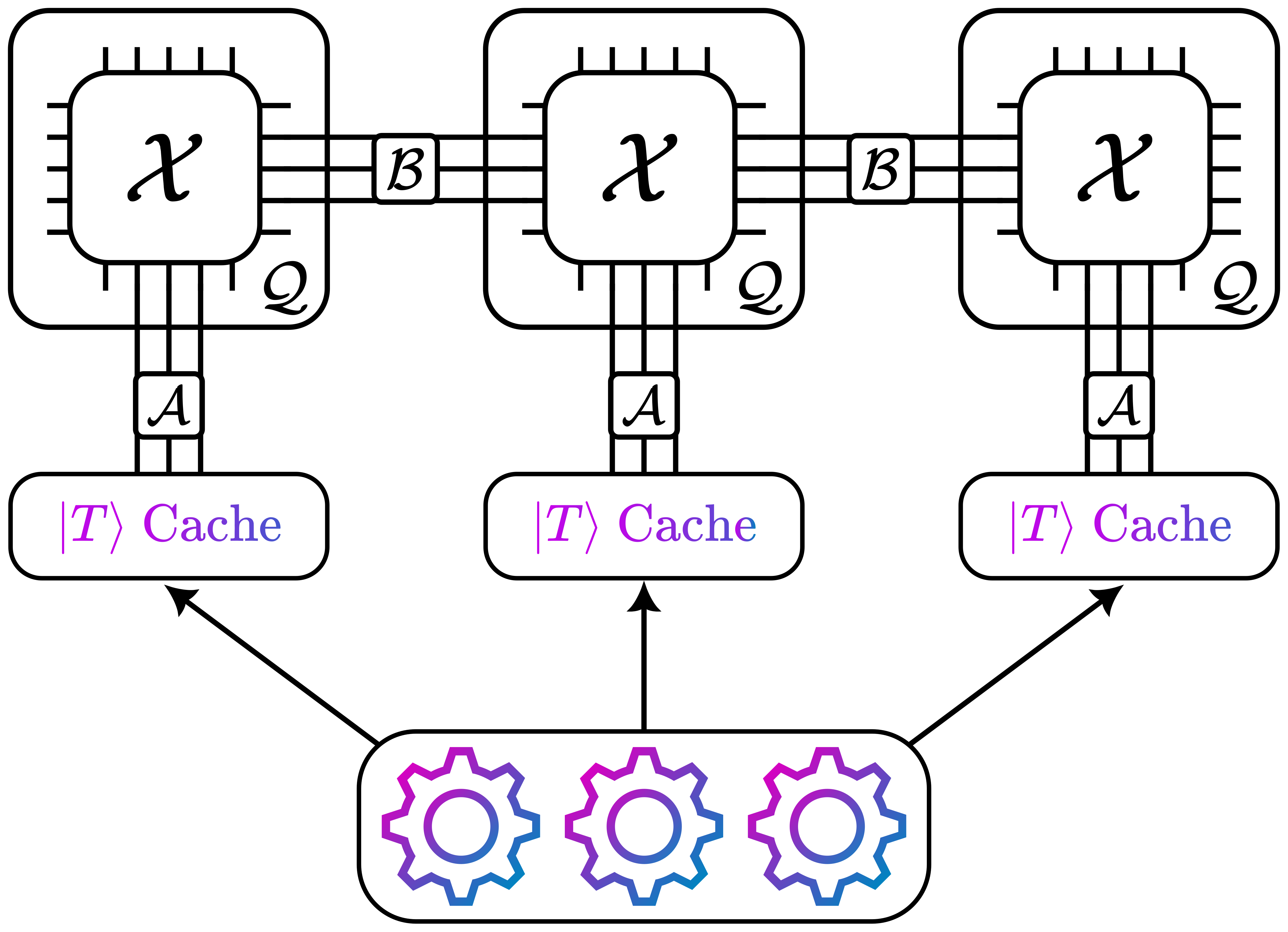}
    \caption{High level depiction of an extractor architecture paired with a global factory. 
    Every EAC block (or every cluster of EAC blocks) should have a cache of magic states, supplied by the global factory.
    The caches are connected to the extractors through adapters $\mathcal{A}$.
    If the caches are themselves high-rate QLDPC memories, it may make sense to also equip them with extractors (not drawn) to facilitate the storage and consumption of magic states.
    }
    \label{fig:architecture_global_fact_copy}
\end{figure}

For optimal functionality of the extractor architecture and for simplicity of analysis, we would like every EAC block to have a steady supply of magic states. 
There are two natural models in which this requirement can be satisfied. 
In the \textit{local factory} model, we can connect a single-output factory (such as cultivation
or cultivation combined with one round of distillation) to every EAC block using an adapter. Similarly, we can group a constant number of physically nearby EAC blocks into a cluster, and supply each cluster with a single-output factory using multiple adapters.
As distillation and post-selected injection schemes continue to be optimized and improved, the local factory model may be promising to realize on small-to-medium scale hardwares.
In the \textit{global factory} model, we can build a batch-output factory that produces a large number of magic states at once, and shuttle these magic states to individual EAC blocks using adapters and potentially intermediate transit memory.
Importantly, in either model, the factory and auxiliary components can be built with any code of one's choice, as the adapter construction is capable of connecting arbitrary quantum code families together and generating entanglement between them. 
Consequently, magic states produced in one code family can be used for non-Clifford gates in an EAC block based on a different code family.

We note that in the bicycle architecture of Ref.~\cite{yoder2025tour}, an EAC block is a bivariate bicycle code memory augmented by a partial extractor (which are called logical processing units), and the block map $\map$ is a line. A source of magic state, based on either distillation or cultivation, is connected to one end of the line.  

With these discussions in mind, we now return our focus to the extractor architectures and discuss compilation, assuming every EAC block has a steady supply of $\ket{T}$ states.

\subsection{Compilation}\label{sec:compilation}

To illustrate computation in the extractor architecture, consider a circuit $\C$ on $K = B(k-1)$ qubits made of the following gates: Paulis, $H, S, \CNOT, $ and $\T$.
Fix a partition $\pt$ of qubits which allocate the logical qubits into the $B$ EAC blocks. 
With respect to this partition, we say that a $\CNOT$ gate is \textit{in-block} or \textit{single-block} if both of its target qubits belong to the same EAC block, and \textit{cross-block} if they belong to different EAC blocks.
Since the block map $\map$ is user-defined, in this work we make the following simplifying assumption on $\C$ and $\pt$.

\begin{definition}[Compatibility]\label{def:Compatibility}
    A circuit $\C$ is said to be \textit{compatible} with a partition $\pt$ if every cross-block $\CNOT$ gate in $\C$ acts on two EAC blocks that are connected by a bridge (equivalently, an edge in $\bridges$).
\end{definition}

In this way we leave the choice of $\map$, $\pt$, and the compilation of compatible circuit $\C$ to the user.
As noted above, as long as $\map$ is a connected graph, any circuit $\C$ can be compiled (with $\mathsf{SWAP}$ gates, for instance) to fit with any partition $\pt$.\footnote{This is similar to the qubit routing problem~\cite{cowtan2019routing,paler2017routing}, but with codeblocks rather than individual qubits.}
Later we discuss how the choice of $\pt$ affects time overhead. 
We further assume that $\C$ ends with a round of standard $Z$ basis measurements on all logical qubits.
This last assumption is both standard and simplifies some of our discussions.

Our compilation owes inspiration to, and borrows notation from, the work of Litinski~\cite{litinski2019game}.
We begin by converting $\T$ gates in the circuit $\C$ into $\zt$ rotations (Figure~\ref{fig:compilation}a), and cross-block $\CNOT$ gates into $\zx$ rotations (Figure~\ref{fig:compilation}b). 
Since all other gates are single-block Cliffords, they can all be exchanged (Figure~\ref{fig:compilation}c) to the end of the circuit, turning a $\zt$ rotation into a single-block $\pi/8$ rotation and a $\zx$ rotation into a cross-block $\pi/4$ rotation.
As we assumed that the circuit $\C$ ends with a round of standard $Z$ basis measurements, the single-block Cliffords can be absorbed into these measurements, turning each single-qubit $Z$ measurement into a single-block Pauli measurement.
In this manner, single-block Clifford gates in the circuit are essentially ``free''.
After this step, we see that the circuit is now composed of rounds of single- and cross-block rotations, with $k-1$ rounds of single-block Pauli measurements at the end.

To implement these single- and cross-block rotations, we compile them into Pauli measurements as in Figure~\ref{fig:compilation}d, e. 
A cross-block $\pi/4$ rotation requires an ancilla qubit, which we have allocated in every EAC block. 
Therefore, this rotation can be implemented with three measurements: one to initialize the ancilla state, one cross-block measurement, and one single-qubit measurement on the ancilla qubit. 
Note that at runtime, after we perform the single-qubit measurement on the ancilla qubit, there is no need to reinitialize the ancilla qubit to $\ket{0}$ as we can simply change the Pauli basis on this ancilla. 
Therefore every cross-block rotation (and equivalently, $\CNOT$ gate) can be implemented with two Pauli measurements.

\begin{figure}[t]
    \centering
    \includegraphics[width = \textwidth]{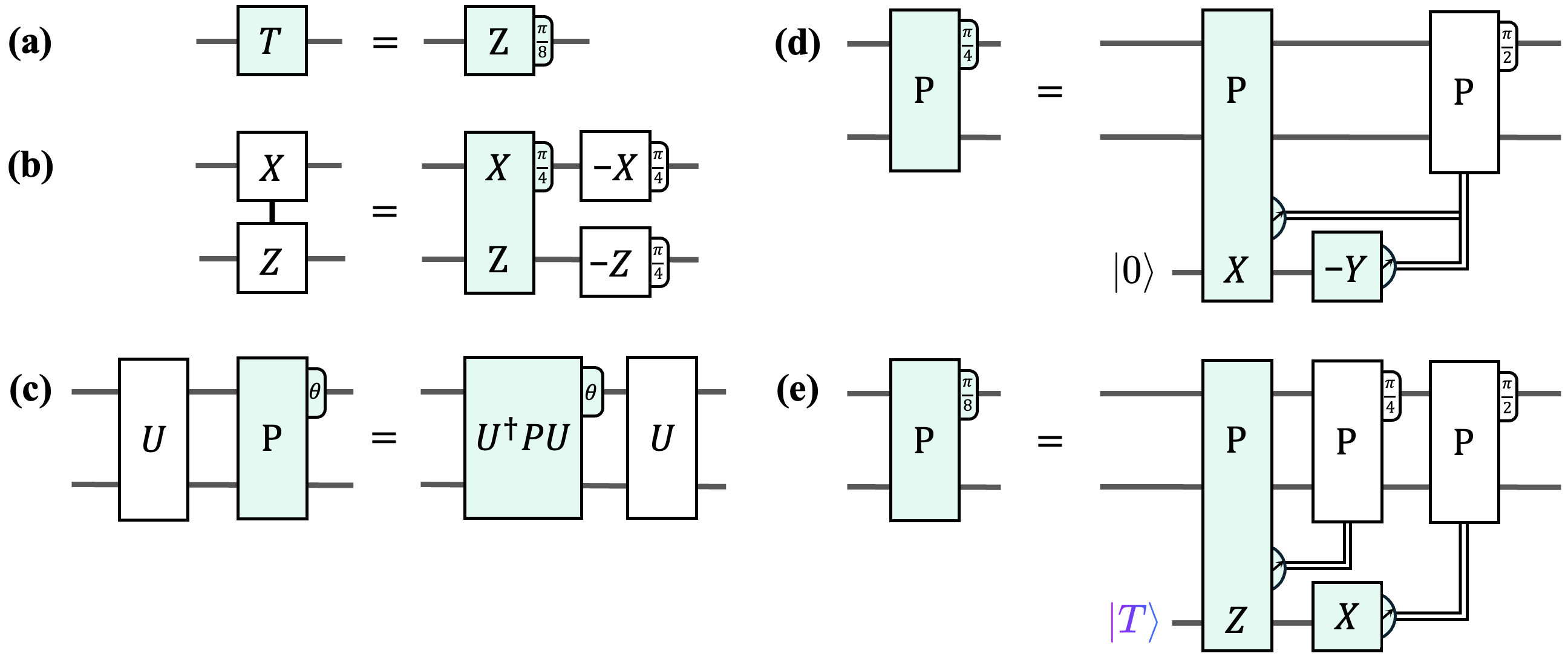}
    \caption{Compilation identities we use. These circuit diagrams largely follow those of~\cite{litinski2019game}. Colored operations are those that we will compile and later execute on EAC blocks, uncolored operations are in-block Clifford gates that will be absorbed into the final measurements or Pauli gates, which can be tracked in the logical Pauli frame.
    \textbf{(a)} A $\T$ gate is a $\zt$ rotation.
    \textbf{(b)} A $\CNOT$ gate can be written as a $\zx$ rotation, with two single-qubit Clifford rotations.
    \textbf{(c)} A Pauli rotation of degree $\theta$ around $P$ can be exchanged with a Clifford operation $U$, turning into a Pauli rotation of degree $\theta$ around $U^{\dagger}PU$. 
    \textbf{(d)} A Pauli $\pi/4$ rotation can be compiled into two Pauli measurements. This uses an ancilla qubit and the measurement results control a Pauli correction.
    \textbf{(e)} A Pauli $\pi/8$ rotation can be compiled into two Pauli measurements, using a $\ket{T}$ magic state. The two measurements control a Clifford and a Pauli correction.}
    \label{fig:compilation}
\end{figure}

A single-block $\pi/8$ rotation can be implemented by a joint Pauli measurement on an EAC block and a $\ket{T}$ magic state, followed by a single-qubit measurement on the magic state. 
At runtime, this second measurement can be scheduled in parallel with the next measurement on the EAC block, since a Pauli correction can be tracked in the Pauli frame.
Therefore every $\T$ gate really takes one logical cycle.
A related technique is the auto-corrected $\pi/8$ rotations (see Figure~17 of Ref.~\cite{litinski2019game}), where the first measurement (joint between EAC block and magic state) no longer induces a Clifford correction, which means we no longer need to wait for decoding and Clifford feedforward to finish to proceed with our next measurement.
These ideas can further reduce the time overhead of our logical computations.

One of the key features of an extractor, or an EAC block, is that logical measurement of a Pauli operator with heavy logical and/or physical support is as easy as logical measurement of a Pauli operator with low logical and/or physical support -- we simply activate and deactivate different coupling edges $\leftrightarrows$. This is in stark contrast to the case of lattice surgery on the surface code, as considered in Ref.~\cite{litinski2019game}, where measuring a Pauli operator supported on more logical qubits would require surgery using larger ancilla patches.
Note that we have a slight complication -- some of our measurements are supported on magic states as well. 
Nonetheless, these measurements are easy to perform in one logical cycle. 
If the magic states are stored in, for instance, surface codes, the adapter from the EAC block can be directly connected to a logical operator on the surface code. 
If the magic states are stored in more compact memories, we can build (partial) extractors on these memories as well. 
Therefore, every single- and cross-block measurement we obtain through the above compilation costs one logical cycle. 

How much can we parallelize these measurements?
We define the following notion of reduced depth.
\begin{definition}[Reduced Depth]
    For a Clifford plus $\T$ circuit and a partition $\pt$, let $\newc$ be the sub-circuit of $C$ we obtain by removing all in-block Clifford gates. If $\newc$ has depth $\depth$, we say that $C$ has reduced depth $\depth$.
\end{definition} 
Evidently, this is a scheduling problem on $\newc$ where we need to serialize $\newc$ such that at any timestep, every block is acted on by at most one $\T$ or $\CNOT$ gate. 
We give a worst case upper bound of $O(k\depth)$ on the depth of the serialized circuit.

\begin{definition}[Edge Coloring of Graphs]
    Given a graph $G = (V, E)$, an edge coloring of $G$ is a natural number $c$ and a function $f: E\rightarrow [c]$ such that for any two edges $e_1,e_2$ with overlapping endpoints, $f(e_1)\ne f(e_2)$.
\end{definition}

\begin{theorem}[Vizing's theorem~\cite{vizing1965critical,Berge1991vizing}]\label{thm:vizing}
A multigraph with maximum degree $\delta$ and maximum multiplicity $\mu$ can be edge-colored with $\delta + \mu$ colors. 
\end{theorem}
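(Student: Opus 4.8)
The plan is to prove this classical result by induction on the number of edges, via Vizing's fan-recoloring argument (streamlined for the simple-graph case by Misra and Gries, and extended to multigraphs by Gupta). The base case (no edges) is trivial. For the inductive step, delete an edge $e = uv$ from the multigraph $G$; by the inductive hypothesis the remaining multigraph admits a proper edge coloring with $\delta + \mu$ colors, and it suffices to extend this coloring to $e$ after possibly recoloring finitely many other edges.

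The central device is a \emph{Vizing fan} at $u$: a maximal sequence of distinct neighbors $v = v_0, v_1, \dots, v_s$ of $u$ such that the edge $uv_0$ is the (currently uncolored) edge $e$ and, for each $i \ge 1$, the color of the edge $uv_i$ is a color missing at $v_{i-1}$. Since every vertex has degree at most $\delta$ while $\delta + \mu$ colors are available, at least $\mu$ colors are free at each vertex; this surplus is precisely what accommodates parallel edges in the multigraph setting and is where the additive term $\mu$ enters. One then distinguishes two cases. If some color $\alpha$ is free simultaneously at $u$ and at some fan vertex $v_j$, rotate the fan up to $v_j$ --- reassign $\mathrm{col}(uv_i)$ to $uv_{i-1}$ for $i = 1, \dots, j$ --- and color $uv_j$ with $\alpha$; since each $\mathrm{col}(uv_i)$ was missing at $v_{i-1}$ and $\alpha$ was used at neither $u$ nor $v_j$, the result is a proper coloring of all edges with no new color. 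If no color is free at both $u$ and the last fan vertex $v_s$, pick $\alpha$ free at $u$ and $\beta$ free at $v_s$, consider the maximal chain of edges alternately colored $\alpha$ and $\beta$ (a \emph{Kempe chain}) through the relevant endpoint, and swap $\alpha$ with $\beta$ along it; a short argument shows that this creates a color free at both $u$ and a fan vertex, reducing us to the first case.

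The main obstacle is the bookkeeping in the second case: one must check that the Kempe-chain swap does not corrupt the portion of the fan that is subsequently used, and must split into the handful of sub-cases according to where the chain terminates (at $u$, at $v_s$, at an interior fan vertex, or at none of these), verifying in each that the final coloring is proper and introduces no additional color. For the multigraph refinement one carries the ``at least $\mu$ free colors'' invariant through every recoloring step, so that the fan can always be continued or rotated. These verifications are elementary but somewhat intricate; as the statement is standard we simply invoke \cite{vizing1965critical, Berge1991vizing} rather than reproduce the argument in full.
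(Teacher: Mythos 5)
The paper does not prove this statement at all: it is the classical Vizing--Gupta bound, imported purely by citation to the listed references, which is exactly what your write-up ultimately does after sketching the standard fan/Kempe-chain argument. Your outline follows the textbook route and is acceptable as a sketch (the only minor quibble being that in the multigraph setting the Vizing fan is usually taken over distinct \emph{edges} incident to $u$, whose endpoints may repeat, rather than over distinct neighbors), so it is consistent with the paper's treatment.
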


\begin{lemma}[Serialization]\label{lem:serialize}
    Let $A$ be a depth-1 circuit of one- and two-qubit gates, acting on $B$ blocks of qubits, each of size $k-1$. Then we can serialize the gates in $A$ into a circuit $A'$ of depth at most $2(k-1)$, such that at any timestep, every block is acted on by at most one gate.
\end{lemma}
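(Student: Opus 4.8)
The plan is to reduce the serialization problem to an edge-coloring problem on a multigraph and apply Vizing's theorem. First I would build a multigraph $G_A = (\blocks, E_A)$ whose vertices are the $B$ blocks of qubits: for each two-qubit gate in $A$ acting on blocks $i$ and $j$, add an edge $(i,j)$ to $E_A$, and for each single-qubit gate acting on block $i$, add a self-loop at $i$. Since $A$ has depth $1$, the gates act on pairwise disjoint sets of qubits; because each block contains only $k-1$ qubits, at most $k-1$ gates can touch any single block (a two-qubit gate touches one qubit in each of its two blocks, a single-qubit gate touches one qubit). Hence the maximum degree of $G_A$, counting self-loops appropriately, is at most $k-1$. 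I would handle self-loops by a small preliminary step — for instance, replacing each self-loop with a pendant edge to a fresh auxiliary vertex, or simply noting that single-qubit gates can be absorbed into the same timestep as an incident two-qubit gate — so that the relevant parameter is still bounded by $k-1$.

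Next I would bound the maximum multiplicity $\mu$ of $G_A$: two blocks $i,j$ can be joined by at most $\min(|i|,|j|) = k-1$ parallel two-qubit gates in a depth-$1$ circuit, since each such gate consumes one qubit from each block. So $\mu \le k-1$ as well. Vizing's theorem (Theorem~\ref{thm:vizing}) then gives a proper edge-coloring of $G_A$ with at most $\delta + \mu \le 2(k-1)$ colors. Each color class is a matching in $G_A$, meaning no block appears in two edges of the same color; translating back, all gates assigned a given color act on disjoint blocks and can be executed in a single timestep. Scheduling the color classes one after another produces the serialized circuit $A'$ of depth at most $2(k-1)$, with the property that at any timestep every block is acted on by at most one gate.

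The main obstacle is the careful accounting for single-qubit gates (self-loops) so that the degree and multiplicity bounds come out to exactly $k-1$ rather than something larger, and ensuring that absorbing or rerouting self-loops does not secretly inflate the color count past $2(k-1)$. One clean way around this is to observe that a single-qubit gate on block $i$ can always be scheduled in the same timestep as \emph{any} two-qubit gate incident to $i$ (since they act on different qubits of block $i$), or — if block $i$ has no incident two-qubit gates at all in $A$ — trivially in one extra timestep that is absorbed into the bound; alternatively one shows directly that adding self-loops keeps $\delta + \mu \le 2(k-1)$ because a self-loop contributes $2$ to the degree but a block saturated with two-qubit gates cannot also carry a self-loop. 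Once this bookkeeping is pinned down, the rest is immediate from Vizing.
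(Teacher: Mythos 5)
Your main route is essentially the paper's: build the block multigraph of cross-block two-qubit gates, observe that a depth-1 circuit forces both the maximum degree and the maximum multiplicity to be at most $k-1$ (each gate consumes a distinct qubit of each block it touches), and invoke Vizing's theorem (Theorem~\ref{thm:vizing}) to get a $2(k-1)$-coloring whose color classes are the timesteps. Where you differ is the treatment of the one-qubit (more generally, in-block) gates. The paper does not put them into the graph at all: it first schedules only the cross-block gates in $2(k-1)$ rounds, then notes that each block participates in at most $k-1$ of those rounds (it has at most $k-1$ gates total), hence is idle in at least $k-1$ rounds, which is enough room to slot in its at most $k-1$ in-block gates one per idle round. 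Your pendant-edge variant --- replace each single-qubit gate by an edge from its block to a fresh auxiliary vertex --- also works and gives the same bound, since the block degree (cross-block edges plus pendant edges) is still at most $k-1$ and pendant edges have multiplicity $1$; it is a perfectly acceptable alternative and arguably makes the bookkeeping more uniform.

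However, one of the fallback options you offer is genuinely wrong: you cannot schedule a single-qubit gate on block $i$ ``in the same timestep as any two-qubit gate incident to $i$'' on the grounds that they act on different qubits of block $i$. The lemma's requirement (and the reason it is stated this way) is that at each timestep every \emph{block} is acted on by at most one gate, not every qubit --- in the architecture each gate is implemented by logical Pauli measurements through the block's extractor, and an EAC block supports only one logical measurement per logical cycle, regardless of which qubits inside the block are involved. So that absorption step would violate exactly the constraint the serialization is meant to enforce. Your self-loop bookkeeping variant is also shaky as written, since Vizing-type statements are for loopless multigraphs; stick with the pendant-edge reduction (or the paper's idle-round argument), and the proof is complete.
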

\begin{proof}
    We first schedule all the cross-block gates.
    We define a scheduling graph $G = (V, E)$ with one vertex per block. For every cross-block 2-qubit gate acting on the $i,j$th blocks, we add an edge $(i,j)$ to $G$. 
    Then $G$ is a multi-graph with degree at most $k-1$ and multiplicity at most $k-1$. 
    By Vizing's theorem, $G$ can be colored with at most $2(k-1)$ colors.
    Let $f$ be such a coloring, and let $E_c$ denote all the edges with color $c\in [2(k-1)]$ in $G$. 
    Then the gates corresponding to edges in $E_c$ can be scheduled in one timestep. 
    We can therefore schedule all cross-block gates in $2(k-1)$ depth.
    For all the in-block gates, note that in the $2(k-1)$ rounds of cross-block gates, every block must be idle for at least $k-1$ rounds (since every block has at most $k-1$ gates in $A$).
    Therefore, all in-block gates can be scheduled in the same $2(k-1)$ rounds as well. 
\end{proof}

\begin{remark}\label{rmk:serial_worstcase}
    It is important to note that this worst case upper bound is proved assuming that a block partition has been fixed.
    We prove this lemma simply to provide a theoretical upper bound. 
    In practice, given a circuit $C$ and a block map $\map$, we get to optimize the partition $\pt$ and circuit $C$ to minimize the serialized depth.
    Therefore, we believe this serialization lemma does not capture the time overhead in practice.
\end{remark}

Lemma~\ref{lem:serialize} gives us an upper bound on the depth of $\compiledC$.
Recall that $\depth$ is the reduced depth of the input circuit $C$.
Since every rotation can be implemented with two logical measurements, 
we have that 
\begin{equation}\label{eq:time_analysis}
    \text{Depth of $\compiledC$} < 4k\cdot \depth + k,
\end{equation}
where the last summand of $k$ is for the final Pauli measurements at the end of the circuit.
The number of logical cycles needed to execute all measurements in $\compiledC$ is therefore $4k\cdot \depth + k$. 
However, many of these measurements will be supported on magic states. 
In specific architectures, the time cost of supplying these magic states to the EAC blocks needs to be carefully accounted for.

\begin{remark}[Magic state supply cost]\label{rmk:magic_supply}
    The space and time overhead of supplying magic states to EAC blocks depends on the choice of factories and their various parameters, including throughput, batch size and cache size.
    We consider a few simplified special cases. 
    Let $\Tmagic$ denote the average number of logical cycles needed to supply every EAC block in $\archi$ with one $\ket{T}$ state. 
    If the local magic state caches are small (such as local magic state cultivation patches), then the number of logical cycles needed to execute $\compiledC$ can be upper bounded as $O(k\depth\cdot \max(1,\Tmagic))$.
    In particular, if $\Tmagic$ is constant, which means the magic states are produced fast enough (by global or local factories), the time cost can be bounded as $O(k\depth)$.
    Alternatively, if we have large local magic state caches (such as high-rate QLDPC codes), then the factories can continuously produce magic states to be stored in these caches.
    In this setting, the time cost will be impacted by the number of times an EAC block requests upon an empty cache and have to wait for the magic state to arrive. 
    If the supply and storage are powerful enough such that the caches are never empty, then the logical cycle count can be bounded by $4k\depth + k$, as in equation~\eqref{eq:time_analysis}.
    Otherwise, if there are $\deptht$ calls to empty caches throughout the execution of $\compiledC$, the logical cycle count would be bounded by $4k\depth + \deptht\cdot \Tmagic + k$.
    The real cost in practice depends on the choice of specific factories and caches.
\end{remark}

As in Remark~\ref{rmk:serial_worstcase}, we emphasize that equation~\eqref{eq:time_analysis} is a loose upper bound that does not account for the choice of $C$ and $\pt$.
We do observe, however, that in general increasing the block size $k$ will reduce the throughput requirement on the magic state factory, while potentially increasing the serialization depth.
We leave more accurate resource estimation of specific algorithms, such as factoring~\cite{Shor}, with optimized choices of $C, \qcode, \map$ and $\pt$, to future works. 
Notably, Ref.~\cite{yoder2025tour} provided estimations on the bicycle architecture for tasks such as simulation of the time evolution of a two-dimensional transverse-field Ising model.

Throughout this work, we have taken one logical cycle to be $O(d)$ physical syndrome measurement cycles. 
It has been shown that certain families of QLDPC codes can be single-shot decoded~\cite{fawzi2020constant,gu2024single,nguyen2024quantum}, which means they have decoders that only use a single round of syndrome information to output corrections with logical error rates of $e^{-O(d)}$.\footnote{There are several similar yet distinct definitions of single-shot QEC. We refer readers to Refs.~\cite{Bombin2015single,campbell2019theory} for further references.}
In other words, a logical cycle on these QLDPC memories is $O(1)$ syndrome measurement cycles.
A related recent work~\cite{hillmann2024single} has shown that lattice surgery can be single-shot in higher dimensional topological codes, which are known to have single-shot decoders as well~\cite{Kubica2022single,bombin2015gauge}. 
Therefore, a natural and important question is: can extractor measurements be single-shot on specific QLDPC code families, and if so, at what costs?
We look forward to exploring this in future works.

\subsection{Architecture: Practical Considerations}\label{sec:archi_practical}

Our extractor architecture opens a plethora of practical questions to be considered. We sample and discuss a few of them here. 

In Section~\ref{sec:partial_extractors}, we proposed that on QLDPC codes with constant depth Clifford gates such as automorphism gates, a partial extractor can be aided by these gates to reach full measurement capacity. 
Evidently, this approach increases the depth of physical gates we need to perform for one logical measurement. 
It is therefore crucial to optimize the space-time tradeoff, where using a bigger partial extractor or more partial extractors can speed up the synthesis of logical measurements.
Another relevant idea is whether the techniques of algorithmic fault-tolerance~\cite{zhou2024algorithmic} can be applied to this combination of constant-depth gates and logical measurements to reduce the overhead of a logical cycle.
We leave these studies to future works.

As elaborated in the previous section, the choice of magic state factory, $k, \map$ and $\pt$ all have significant impact on the overall overhead of this architecture. 
For fixed design hardwares such as superconducting systems, we would often be in the situation where the factory, $k$ and $\map$ have been built, and we have an algorithm to execute. 
The compilation problem is then to optimize the circuit $C$ and partition $\pt$ for the lowest time overhead, see equation~\eqref{eq:time_analysis}.
For more flexible hardware systems such as neutral atom devices, the code, extractor, factory and block map $\map$ can all be designed based on the algorithm, and the logical computation can be aided with other primitives such as transversal block-wise $\mathsf{CNOT}$s.

Interestingly, having transversal block-wise $\mathsf{CNOT}$s as a primitive allows us to reduce the \textit{online} time overhead at the cost of \textit{offline} time overhead and additional space overhead. 
Here offline refers to a pre-processing stage before we execute an algorithm, where we can prepare ancilla code states and store them as error corrected memory. Online refers to the real-time execution of the algorithm, where we can utilize the ancillary states we prepared. 
The main idea behind this online/offline tradeoff can be summarized as follows. 
While code surgery and extractors enable flexible logical measurements, at the moment fault-tolerance of the protocols require $O(d)$ syndrome measurement cycles. 
In comparison, Steane's~\cite{Steane1997active} and Knill's~\cite{Knill2005realistic} syndrome measurement scheme can perform logical measurements with one round of transversal $\mathsf{CNOT}$s, utilizing ancilla code blocks prepared in certain logical stabilizer states. 
Similarly, homomorphic measurement~\cite{huang2023homomorphic,xu2024fast} utilizes ancilla code blocks to implement flexible logical measurements on homological product codes with one round of transversal $\mathsf{CNOT}$s.
These ancilla code blocks can be prepared offline using EAC blocks or an extractor architecture, while prior to this work many logical stabilizer states are prepared through distillation~\cite{Zheng2018efficient}.
This approach effectively reduces the online time overhead of a logical measurement to $O(1)$, at the cost of space overhead.

A further extension of this idea has interesting consequences. 
Suppose we have a family of QLDPC codes $\tilde{Q}$ with constant-depth and addressable non-Clifford gates. 
We can use EAC blocks as stabilizer state factories, which produce resource states that when teleported induces arbitrary Clifford circuits on $\tilde{Q}$. 
An universal circuit can then be executed as follows: perform non-Clifford gates in low-overhead on a block of $\tilde{Q}$, and teleport in segments of Clifford circuits between rounds of non-Clifford gates.
Since any logical Pauli measurement can be done in one logical cycle on an EAC block, high fidelity preparation of any stabilizer state can be done in $O(k)$ logical cycles.
The fault-tolerant computation therefore alternates between constant-depth non-Clifford gates and Clifford circuit teleportation, where the resource states are prepared offline or in parallel.
We note that there does not yet exist any QLDPC code $\tilde{Q}$ with practically relevant parameters, and  constant-depth, addressable non-Clifford gates. 
This is an exciting direction for future inquiry~\cite{he2025quantum,he2025asymptotically,guemard2025good}.

\newpage

\bibliographystyle{unsrt}
\bibliography{main}

\begin{appendices}

\section{Omitted Proofs}\label{apdx:proofs}

\cordecongestion*
\begin{proof}
    Recall that $\MR$ is ordered as $\MR = \{C_1, \cdots, C_{|E|-|V|+1}\}$, where every cycle $C_i$ overlaps with at most $\log_2(|V|)\cdot \rho$ cycles later in the ordering.
    We compute a partition greedily in \textit{reverse} order. 
    \begin{algorithm}[H]
        \caption{Greedy Partition}
        \label{alg:greedy_partition}
        \begin{algorithmic}[1] 
        \REQUIRE An ordered collection of cycles $\MR$.
        \ENSURE A partition $\MR = \bigcup_{i=1}^t \MR_i$ such that each $\MR_i$ contains non-overlapping cycles.
        \STATE Initialize $i = 1$.
        \WHILE{$\MR$ is non-empty}
            \STATE Enumerate $\MR = \{C_1, \cdots, C_r\}$ in order, initialize $\MR_i = \varnothing$. Set $j = r$.
            \WHILE{$j > 0$}
                \STATE If $C_j$ overlaps with any cycle in $\MR_i$, set $j = j-1$. Otherwise, add $C_j$ to $\MR_i$ and set $j = j-1$.
            \ENDWHILE
            \STATE Remove cycles in $\MR_i$ from $\MR$ and set $i = i+1$.
        \ENDWHILE
        \RETURN Computed partition $\{\MR_i\}$.
        \end{algorithmic}
    \end{algorithm}

    It is straightforward to see that each $\MR_i$ output by the above algorithm contains non-overlapping cycles.
    It suffices for us to argue that $t\le \log_2(|V|)\cdot \rho+1$. 
    We prove inductively that after $i$ partitions are computed, every cycle $C$ remaining in $\MR$ overlaps with at most $\log_2(|V|)\cdot \rho - i$ many later cycles in $\MR$. 
    The base case with $i = 0$ is our assumption. 
    Inductively, the computed partition at step $i+1$ is $\MR_{i+1}$. 
    For every cycle $C_j$ in $\MR\setminus \MR_{i+1}$, there must exists $C_{j'}\in \MR_{i+1}$, $j' > j$ such that $C_j$ overlaps with $C_{j'}$, because otherwise $C_j$ would be included in $\MR_{i+1}$. 
    Therefore, the amount of later cycles in $\MR$ overlapping with $C_j$ must decrease by at least $1$ after step $i+1$.
    This implies that the algorithm terminates after $ \log_2(|V|)\cdot \rho+1$ steps. 
\end{proof}

\thickenbasis*
\begin{proof}
    Let $\MV, \ME$ denote the vertices and edges of $G\square J_\ell$. 
    For this proof, we identify cycles with their indicator vectors in $\FF_2^{|\ME|}$.
    We make a few observations.
    First, the cycles in $\MT$ are linearly independent. 
    Moreover, for any cycle $C$ of $G$, any $r\in [\ell]$, we have
    \begin{equation}
        \{C\times \{r'\}:r'\in [\ell]\}\subset C\times \{r\} + \MT. 
    \end{equation}
    In other words, every cycle of $G$ has an equivalent representative on every level of $G\square J_\ell$. 
    Furthermore, $\MT\cup \MR$ is linearly independent.
    We now count the dimension of the cycle space of $G\square J_\ell$.
    From Definition~\ref{def:thicken}, we see that $|\ME| = |E|\times \ell + |V|\times (\ell-1)$, and $|\MV| = |V|\times \ell$.
    Therefore the cycle space has rank $|\ME| - |\MV| + 1 = |E|\times (\ell-1) + |E| - |V| + 1 = |\MT\cup \MR|$. 
    Therefore $\MT\cup \MR$ is a cycle basis, and for every cycle in $\MR$ we can choose any of its representations on different levels to measure in the basis. 
\end{proof}

\thickenlemma*
\begin{proof}
    For a set of vertices $U\subseteq V\times [\ell]$, let $U_j$ denote $U\cap (V\times \{j\})$ for $0\le j\le \ell$. 
    Let $u_j\in \FF_2^{|V|}$ be the indicator vector of $U_j$, and let $M$ be the incidence matrix of $G$. 
    Then $u_j^\top M$ is indicator vector of the edge boundary of $U_j$.
    By the structure of the thickened graph, we have
    \begin{align}
        |\delta_{G\square J_\ell} U| 
        &= \sum_{j=2}^\ell |u_{j-1} + u_{j}| + \sum_{j=1}^{\ell} |u_j^\top M|.
    \end{align}
    In the equation above, the first term counts the edges between levels of vertices $V\times \{j-1\}, V\times \{j\}$, and the second term counts the edges within each level of edges $E\times \{j\}$.
    Citing the relative Cheeger constant of $G$, we have
    \begin{align}
        |\delta_{G\square J_\ell} U| 
        &\ge \sum_{j=2}^\ell |u_{j-1} + u_{j}| + \sum_{j=1}^{\ell} \beta \min(t, |U_j\cap (P\times \{j\})|, |(P\times \{j\}) \setminus U_j|).
    \end{align}
    Suppose $\ell\beta \le 1$. Then,
    \begin{align}
        |\delta_{G\square J_\ell} U| 
        &\ge \beta\ell\sum_{j=2}^\ell |u_{j-1} + u_{j}| + \sum_{j=1}^{\ell} \beta \min(t, |U_j\cap (P\times \{j\})|, |(P\times \{j\}) \setminus U_j|), \\
        &\ge \beta\sum_{j=1}^\ell \left( \min(t, |U_j\cap (P\times \{j\})|, |(P\times \{j\}) \setminus U_j|) + \sum_{i=2}^\ell |u_{i-1} + u_{i}| \right).
    \end{align}
    Note that for all $j, r\in [\ell]$, we have by the triangular inequality
    \begin{align}
        &\phantom{\ge} \min(t, |U_j\cap (P\times \{j\})|, |(P\times \{j\}) \setminus U_j|) + \sum_{i=2}^\ell |u_{i-1} + u_{i}| \\
        &\ge \min(t, |U_j\cap (P\times \{j\})|, |(P\times \{j\}) \setminus U_j|) + |u_j + u_r| \\
        &\ge \min(t, |U_r\cap (P\times \{r\})|, |(P\times \{r\}) \setminus U_r|). \label{eq:triangle_inequality}
    \end{align}
    Therefore $|\delta_{G\square J_\ell} U| \ge \beta\ell\min(t, |U_r\cap (P\times \{r\})|, |(P\times \{r\}) \setminus U_r|)$, as desired. 
    If instead $\ell\beta > 1$, let 
    \begin{equation}
        j^* = \argmin_{j\in [\ell]} \min(t, |U_j\cap (P\times \{j\})|, |(P\times \{j\}) \setminus U_j|).
    \end{equation}
    Then we have
    \begin{align}
        |\delta_{G\square J_\ell} U| 
        &\ge \sum_{j=2}^\ell |u_{j-1} + u_{j}| + \sum_{j=1}^{\ell} \beta \min(t, |U_j\cap (P\times \{j\})|, |(P\times \{j\}) \setminus U_j|) \\
        &\ge \sum_{j=2}^\ell |u_{j-1} + u_{j}| + \ell\beta \min(t, |U_{j^*}\cap (P\times \{{j^*}\})|, |(P\times \{{j^*}\}) \setminus U_{j^*}|) \\
        &\ge \sum_{j=2}^\ell |u_{j-1} + u_{j}| + \min(t, |U_{j^*}\cap (P\times \{{j^*}\})|, |(P\times \{{j^*}\}) \setminus U_{j^*}|) \\
        &\ge \min(t, |U_r\cap (P\times \{r\})|, |(P\times \{r\}) \setminus U_r|),
    \end{align}
    where the last equation follow from Equation~\eqref{eq:triangle_inequality}.
\end{proof}

\BridgingLemma*

\begin{proof}
To construct a bridge $B$ of $d$ edges, we wish to apply Lemma~\ref{lem:sparse_bridge} to $P_1, P_2$. 
However, from the desiderata (Theorem~\ref{thm:graph_desiderata}) it is not guaranteed that $P_1, P_2$ induce connected subgraphs in $G_1, G_2$ (even though in all constructions we presented in this paper the induced subgraphs are connected). 
We deal with this minor technicality by a detailed analysis of the connectedness of the subgraphs.

Recall that a logical operator $\ML$ supported on a set of qubits $L$ is irreducible if the restriction of $\ML$ to any proper subset of qubits in $L$ is not a logical operator or a stabilizer.
Consider the operator $\ML_1$, and let $\ML_1'$ be a restriction of $\ML_1$ that is irreducible (if $\ML_1$ is irreducible already, then $\ML_1 = \ML_1'$).
Let $L_1'$ be the support of $\ML_1'$. 
For simplicity of presentation we assume that $\ML_1'$ is a $Z$-operator, as the same argument works regardless of the Pauli components of $\ML_1'$. 

We now construct a set of helper edges, $H_1$, for our argument. Let $P_1' = f_1(L_1')$, consider the set of $X$-stabilizers, $S_1, \cdots, S_m$, which have support on $L_1'$.
Let $K_i$ denote the qubits in both $S_i$ and $\ML_1'$. 
From desideratum~\ref{des:path_matching}, there is a path matching $\mu_i$ of size $O(1)$ in $G_1$ of $f_1(K_i)\subseteq P_1'$. 
For each $\mu_i$, if vertices $u, v\in f_1(K_i)$ is connected by a path in $\mu_i$, we add an edge $(u,v)$ to the set of helper edges $H_1$. 

We now prove that the helper edges form a connected graph on the vertices $P_1'$. More precisely, for any two vertices $u, v\in P_1'$, there exists a path of edges in $H_1$ that connects $u, v$. 
The first observation we make is that since $L_1'$ is irreducible, the sets $K_i$, when interpreted as indicator vectors (of length $|L_1'|$) over $\FF_2$, generate the parity check matrix of the repetition code (equivalently, generate the space of all even weight vectors).
Therefore, for any two vertices $f_1^{-1}(u), f_1^{-1}(v)\in L_1'$, there exists a collection of $K_i$'s such that their $\FF_2$ sum is $\{u, v\}$. 
Translating to the helper edges $H_1$, this means that we have a collection of edges $H_{u,v}\subseteq H_1$ such that in the graph $A_{u,v} = (P_1', H_{u,v})$, all vertices have even degree except for $u, v$, which has odd degree.
A simple fact from graph theory is that in such a graph, there must be a path between the two vertices of odd degree. 
This proves that $A_1 = (P_1',H_1)$ is a connected graph.

We can apply the same arguments to $G_2, \ML_2$ and obtain a graph $A_2 = (P_2', H_2)$ with the same properties. 
Let $w$ denote the maximum length of any path in the path matchings of desiderata~\ref{des:path_matching} of $G_1, G_2$.
We see that if there is a length $p$ path between $u, v$ in $A_1$ or $A_2$, then there is a length at most $pw$ path between $u, v$ in $G_1$ or $G_2$, respectively. 
We can now apply Lemma~\ref{lem:sparse_bridge} to the graphs $A_1, A_2$, which gives us a bridge of $d$ edges that we can use to connect $A_1, A_2$ into $A$, as well as $G_1, G_2$ into $G$. The cycles created in $A$ has length at most $8$, which means the length of cycles created in $G$ is at most $8w \in O(1)$. 
Moreover, each edge in $A$ is used at most twice.
Suppose every edge in $G$ is used in at most $\delta$ many path matchings in desiderata~\ref{des:path_matching}, then these new cycles in $G$ has congestion at most $2\delta \in O(1)$. 
This means $G$ satisfy desiderata~\ref{des:cycle_basis} again.

The rest of the desiderata are simple.
Clearly, $G$ is connected and $G$ has constant degree. 
For desideratum~\ref{des:path_matching}, for any stabilizer $S$ of $\qcode$, let $K(S, \ML_1\ML_2)\subseteq L_1\cup L_2$ denote the set of qubits in $Q$ on which $S$ and $\ML_1\ML_2$ anticommute. 
Since $L_1\cap L_2 = \varnothing$, $K(S, \ML_1\ML_2) = K(S, \ML_1)\cup K(S, \ML_2)$, which means the path matchings $\mu(S, \ML_1) \cup \mu(S, \ML_2)$ together gives a path matching $\mu(S, \ML_1\ML_2)$ for $K(S, \ML_1\ML_2)$. 
These new path matchings $\mu(S, \ML_1\ML_2)$ are again sparse.
Finally, $\beta_d(G, P_1\cup P_2)\ge 1$ by Lemma~\ref{lem:bridged_expansion}, which completes our proof.
\end{proof}

\BridgingExtractors*
\begin{proof}
    We follow the proof of Lemma~\ref{lem:bridge} to construct a bridge $B$ of $d$ edges between $P_1,P_2$, and the resulting graph $X$ satisfy desideratum~\ref{ext-des:cycle_basis}.
    One can easily check that $\EXTG, F$ satisfy desideratum~\ref{ext-des:connected},~\ref{ext-des:degree}, and~\ref{ext-des:relative_expansion} of Definition~\ref{def:extractor_desiderata}. For~\ref{ext-des:path_matching}, we note that the codes $\qcode_1,\qcode_2$ are supported on disjoint qubit sets, which means the union of two collections $\mathcal{E}_1, \mathcal{E}_2$ satisfy~\ref{ext-des:path_matching}. This completes our proof.
\end{proof}

\end{appendices}

\end{document}